\newcommand{\vect}[1]{\boldsymbol{#1}}
\newcommand{\tp}[1]{{#1}^{\mathsf T}}
\newcommand{\eps}{\epsilon}
\newcommand{\pa}{\partial}
\renewcommand{\eps}{\varepsilon}
\renewcommand{\epsilon}{\varepsilon}
\renewcommand{\Sigma}{\varSigma}
\DeclareMathOperator{\arccot}{arccot}
\DeclareMathOperator{\sign}{sign}
\DeclareMathOperator{\diag}{diag}
\newtheorem{cor}{Corollary}
\newtheorem{prop}{Proposition}[section]
\newtheorem{dfn}{Definition}
\newtheorem{rk}{Remark}
\newcommand{\E}{\textrm E}
\newcommand{\Co}{\textrm{Cov}}
\DeclareMathAlphabet\mathbfcal{OMS}{cmsy}{b}{n}
\begin{document}

\title{Sampling constrained probability distributions\\ using Spherical Augmentation}

\author{\name Shiwei Lan \email s.lan@warwick.ac.uk \\
             \addr Department of Statistics\\
              University of Warwick\\
              Coventry CV4 7AL, UK
              \AND
              \name Babak Shahbaba \email babaks@uci.edu \\
              \addr Department of Statistics and Department of Computer Science\\
              University of California\\
              Irvine, CA 92697, USA}

\editor{XXX}

\maketitle
%\date

\begin{abstract}
Statistical models with constrained probability distributions are abundant in machine learning. Some examples include regression models with norm constraints (e.g., Lasso), probit, many copula models, and latent Dirichlet allocation (LDA). Bayesian inference involving probability distributions confined to constrained domains could be quite challenging for commonly used sampling algorithms. In this paper, we propose a novel augmentation technique that handles a wide range of constraints by mapping the constrained domain to a sphere in the augmented space. By moving freely on the surface of this sphere, sampling algorithms handle constraints implicitly and generate proposals that remain within boundaries when mapped back to the original space. Our proposed method, called {Spherical Augmentation}, provides a mathematically natural and computationally efficient framework for sampling from constrained probability distributions. We show the advantages of our method over state-of-the-art sampling algorithms, such as exact Hamiltonian Monte Carlo, using several examples including truncated Gaussian distributions, Bayesian Lasso, Bayesian bridge regression, reconstruction of quantized stationary Gaussian process, and LDA for topic modeling. 
\end{abstract}

\begin{keywords}
Constrained probability distribution; Geodesic; Hamiltonian; Monte Carlo; Lagrangian Monte Carlo
\end{keywords}

\section{Introduction}
Many commonly used statistical models in Bayesian analysis involve high-dimensional
probability distributions confined to constrained domains. Some
examples include regression models with norm constraints (e.g., Lasso), probit, many copula models, and latent Dirichlet allocation (LDA). Very often, the resulting models are intractable and simulating samples for Monte Carlo
estimations is quite challenging \citep{pneal08,sherlock09, pneal12, brubaker12, pakman13}. Although the literature on improving the efficiency of computational methods for Bayesian inference is quite extensive \cite[see, for example,][]{neal96a, neal93, geyer92, mykland95, propp96, roberts97, gilks98, warnes01, freitas01, brockwell06, neal11, neal05, neal03, beal03, moller06, andrieu06, kurihara06, cappe08, craiu09, welling09, gelfand10, douc07, douc11, welling11, zhang11, ahmadian11, girolami11, hoffman11, beskos11, calderhead12, shahbaba14, ahn13, lan14a, ahn14}, these methods do not directly address the complications due to constrained target distributions. When dealing with such distributions, MCMC algorithms typically evaluate each proposal to ensure it is within the boundaries imposed by the constraints. Computationally, this is quite inefficient, especially in high dimensional problems where proposals are very likely to miss the constrained domain. Alternatively, one could map the original domain to the entire Euclidean space to remove the boundaries. This approach too is computationally inefficient since the sampler needs to explore a much larger space than needed. 

In this paper, we propose a novel method, called \emph{Spherical Augmentation}, for handling constraints involving norm inequalities (Figure \ref{fig:constraints}). Our proposed method augments the parameter space and maps the constrained domain to a sphere in the augmented space. The sampling algorithm explores the surface of this sphere. This way, it handles constraints implicitly and generates proposals that remain within boundaries when mapped back to the original space. While our method can be applied to all Metropolis-based sampling algorithms, we mainly focus on methods based on Hamiltonian Monte Carlo (HMC) \citep{duane87,neal11}. As discussed by \cite{neal11}, one could modify standard HMC such
that the sampler bounces off the boundaries by letting the potential energy go
to infinity for parameter values that violate the constraints. This creates
``energy walls'' at boundaries. This approach, henceforth called \emph{Wall
HMC}, has limited applications and tends to be computationally inefficient,
because the frequency of hitting and bouncing increases exponentially as
dimension grows. \cite{byrne13} discuss an alternative approach for situations where constrained domains can be identified as sub-manifolds. \cite{pakman13} follow the idea of Wall HMC and propose an exact
HMC algorithm specifically for truncated Gaussian distributions with
non-holonomic constraints. \cite{brubaker12} on the other hand propose a
modified version of HMC for handling holonomic constraint $c(\theta)=0$. All
these methods provide interesting solutions for specific types of constraints. In contrast, our proposed method offers a general and efficient framework applicable to a wide range of problems. 

The paper is structured as follows. Before presenting our methods, in Section \ref{prelim} we provide a
brief overview of HMC and one of its variants, namely, Lagrangian Monte Carlo (LMC) \citep{lan14a}. We then present the underlying idea of \emph{spherical augmentation}, first for two simple cases, ball type (Section \ref{ball}) and box type (Section \ref{box}) constraints, then for more general $q$-norm type constraints (Section \ref{q-norm}), as well as some functional constraints (Section \ref{funcon}). In Section \ref{SAMC}, we apply the spherical augmentation technique to HMC
(Section \ref{SphHMC}) and LMC (Section \ref{SphLMC}) for sampling from constrained target distributions. We evaluate our proposed methods using simulated and
real data in Section \ref{results}. Finally, Section \ref{discussion} is devoted to discussion and future directions.

\begin{figure}[t]
\vspace{-10pt}
\begin{center}
\includegraphics[width=5in, height=3.2in]{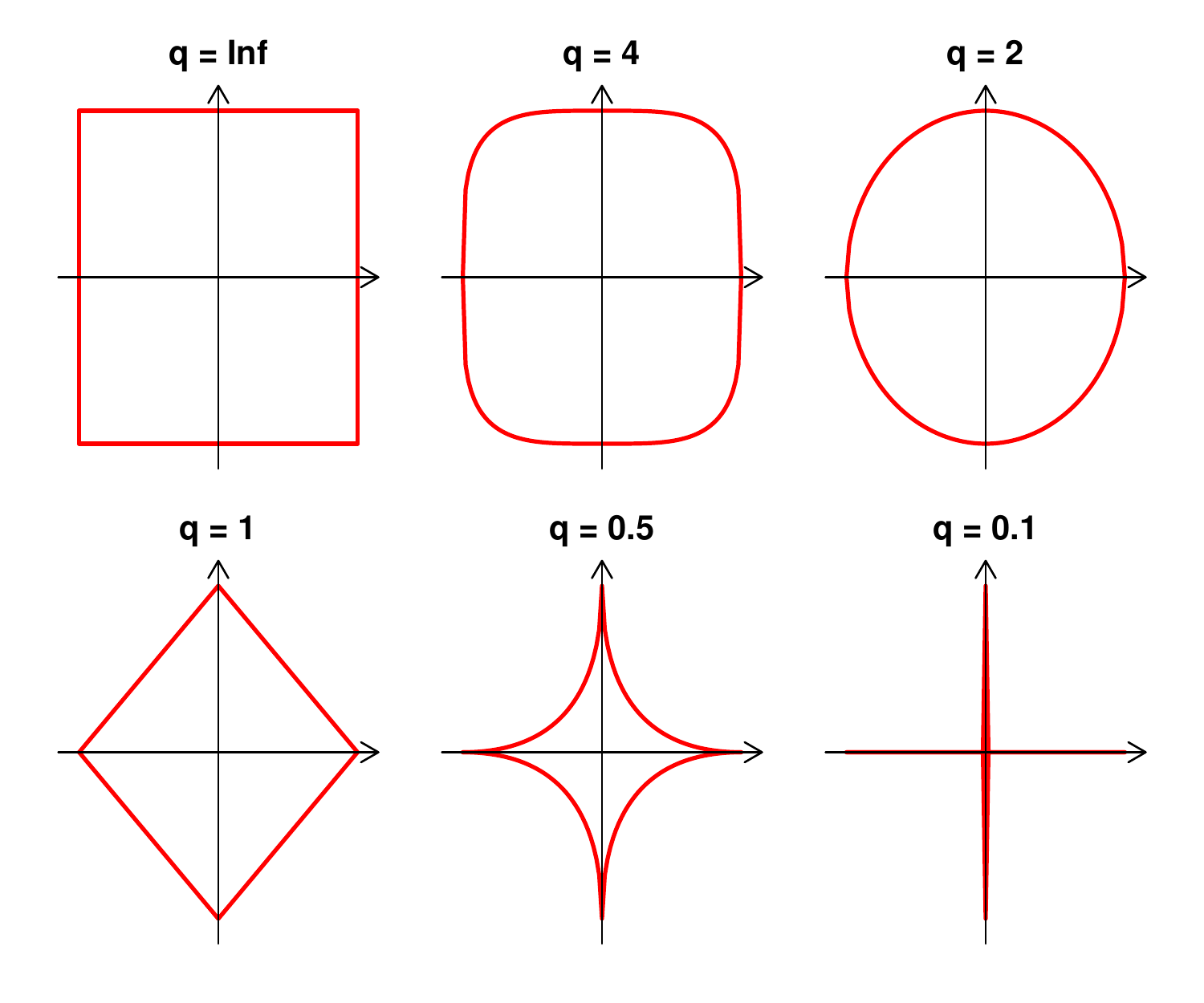}
\caption{$q$-norm constraints}
\vspace{-15pt}
\label{fig:constraints}
\end{center}
\end{figure}

\section{Preliminaries}\label{prelim}
\subsection{Hamiltonian Monte Carlo}\label{HMC}
HMC improves upon random walk Metropolis (RWM) by proposing states that are distant from the current
state, but nevertheless accepted with high probability. These distant proposals are found by numerically simulating Hamiltonian dynamics, whose state space consists of its \emph{position}, denoted by the vector $\vect\theta$, and its \emph{momentum}, denoted by the vector ${\bf p}$. Our objective is to sample from the continuous probability distribution of $\vect\theta$ with the density function $f(\vect\theta)$. It is common to assume that the fictitious momentum variable ${\bf p} \sim \mathcal N({\bf 0, M})$, where ${\bf M}$ is a symmetric, positive-definite matrix known as the \emph{mass matrix}, often set to the identity matrix ${\bf I}$ for convenience.

In this Hamiltonian dynamics, the \emph{potential energy}, $U(\vect\theta)$, is defined as minus the log density of $\vect\theta$ (plus any constant), that is $U(\vect\theta):=-\log f(\vect\theta)$; the \emph{kinetic energy}, $K({\bf p})$ for the auxiliary momentum variable ${\bf p}$ is set to be minus the log density of ${\bf p}$ (plus any constant). Then the total energy of the system, \emph{Hamiltonian} function, is defined as their sum,
\begin{equation}\label{Hamiltonian}
H(\vect\theta, {\bf p}) = U(\vect\theta) + K({\bf p})
\end{equation}
Given the Hamiltonian $H(\vect\theta, {\bf p})$, the system of $(\vect\theta, {\bf p})$ evolves according to the following \emph{Hamilton's equations},
\begin{align}
\begin{aligned}
&\dot{\vect\theta} && = && \nabla_{\bf p} H(\vect\theta, {\bf p}) && = && {\bf M}^{-1}{\bf p} \\
&\dot{\bf p} && = && -\nabla_{\vect\theta} H(\vect\theta, {\bf p}) && = && -\nabla_{\vect\theta} U(\vect\theta) 
\end{aligned}
\end{align}
%Note that since momentum is mass times velocity, ${\bf v} = {\bf M}^{-1}{\bf p}$ is regarded
%as velocity. Throughout this paper, we express the kinetic energy $K$ in terms of velocity, ${\bf v}$, instead of momentum, ${\bf p}$ \citep{beskos11,lan14a}.

%Hamiltonian dynamics have three important properties: 1) reversibility (the target distribution remains invariant), 2) conservation of the Hamiltonian (the acceptance probability is one), and 3) volume preservation (the determinant of the Jacobian matrix for the mapping is one). See Neal (2010) \cite{neal11} for more discussion.

In practice when the analytical solution to Hamilton's equations is not available, we need to numerically solve these equations by discretizing them, using some small time step $\epsilon$. For the sake of accuracy and stability, a numerical method called \emph{leapfrog} is commonly used to approximate the Hamilton's equations \citep{neal11}. We usually solve the system for $L$ steps, with some step size, $\epsilon$, to propose a new state in the Metropolis algorithm, and accept or reject it according to the Metropolis acceptance probability. \citep[See][for more discussions]{neal11}.

\subsection{Lagrangian Monte Carlo}\label{LMC}
Although HMC explores the target distribution more efficiently than RWM, it does
not fully exploit its geometric properties of the
parameter space.
\cite{girolami11} propose Riemannian HMC (RHMC), which adapts to the
local Riemannian geometry of the target distribution by using a
position-specific mass matrix ${\bf M} = {\bf G}(\vect\theta)$. More
specifically, they set ${\bf G}(\vect\theta)$ to the Fisher information matrix.
In this paper, we mainly use \emph{spherical metric} instead to serve the purpose of constraint
handling. The proposed method can be viewed as an extension to this
approach since it explores the geometry of sphere.

Following the argument of \cite{amari00}, \cite{girolami11} define Hamiltonian dynamics on
the Riemannian manifold endowed with metric ${\bf G}(\vect\theta)$.
With the non-flat metic, the momentum vector becomes ${\bf p}|\vect\theta\sim
\mathcal N({\bf 0}, {\bf G}(\vect\theta))$ and the Hamiltonian is therefore
defined as follows:
\begin{equation}\label{rmhamiltonp}
H({\vect\theta}, {\bf p})  = \phi(\vect\theta) + \frac12 \tp{\bf p}{\bf G}(\vect\theta)^{-1}{\bf p},\quad \phi(\vect\theta):= U(\vect\theta) +\frac12 \log\det{\bf G}(\vect\theta)
\end{equation}
Unfortunately the resulting Riemannian manifold Hamiltonian dynamics becomes
non-separable since it contains products of $\vect\theta$ and ${\bf p}$, and 
the numerical integrator, \emph{generalized leapfrog}, is an \emph{implicit}
scheme that involves time-consuming fixed-point iterations.

\cite{lan14a} propose to change the variables ${\bf p}\mapsto {\bf v}:= {\bf
G}(\vect\theta)^{-1}{\bf p}$ and define an \emph{explicit} integrator for RHMC by
using the following equivalent {\it Lagrangian} dynamics:
\begin{align}\label{LD}
\dot{\vect\theta} & = {\bf v} \\
\dot{\bf v} & = -\tp{\bf v}\vect\Gamma(\vect\theta){\bf v} - {\bf G}(\vect\theta)^{-1} \nabla_{\vect\theta}\phi(\vect\theta)
\end{align}
where the \emph{velocity} ${\bf v}|\vect\theta \sim\mathcal N({\bf 0}, {\bf
G}(\vect\theta)^{-1})$. Here, $\vect\Gamma(\vect\theta)$ is the Christoffel Symbols
%of the second kind whose $(i,j,k)$-th element is
%$\Gamma_{ij}^k=\frac{1}{2}g^{km}(\pa_i g_{mj}+\pa_j g_{im}-\pa_m g_{ij})$ with
%$g^{km}$ being the $(k,m)$-th element of ${\bf G}(\vect\theta)^{-1}$.
derived from ${\bf G}(\vect\theta)$.

The proposed \emph{explicit} integrator is time
reversible but not volume preserving. 
Based on the change of variables theorem, one can adjust the acceptance
probability with Jacobian determinant to satisfy the detailed balance condition.
The resulting algorithm, \emph{Lagrangian Monte Carlo (LMC)}, is
shown to be more efficient than RHMC \citep[See][for more details]{lan14a}.

Throughout this paper, we express the kinetic energy $K$ in terms of velocity,
${\bf v}$, instead of momentum, ${\bf p}$ \citep{beskos11,lan14a}.

%%%%%%%%%%%%%%%%%%%%%%%%%%%%%%%%%%%%%%%%%%%%%%%%%%%%%%%%%%%%%%%%%%%%%%%%%%%%%%%%%%%%%

\section{Spherical Augmentation}\label{SA}
In this section, we introduce the \emph{spherical augmentation} technique for handling norm constraints implicitly. We start with two simple constraints: ball type (2-norm) and box type ($\infty$-norm). Then, we generalize the
methodology to arbitrary $q$-norm type constraints for $q>0$. Finally, we discuss some functional constraints
that can be reduced to norm constraints.

\begin{figure}[t]
\vspace{-20pt}
\begin{center}
\includegraphics[width=.8\textwidth, height=.55\textwidth]{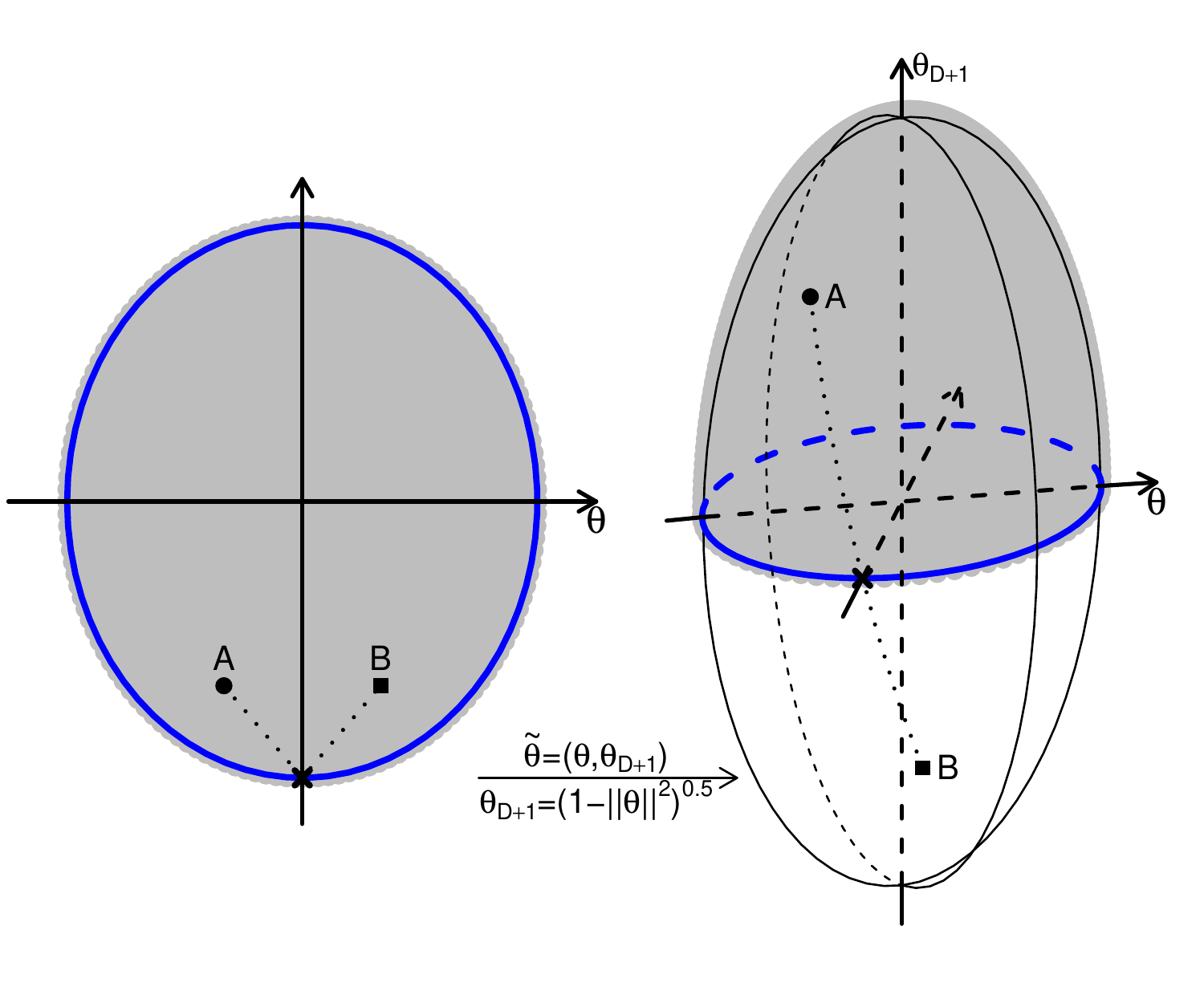}
\vspace{-10pt}
\caption{Transforming the unit ball $\mathcal B_{\bf 0}^D(1)$ to the sphere $\mathcal S^D$.}
\vspace{-20pt}
\label{fig:B2S}
\end{center}
\end{figure}
\subsection{Ball type constraints} \label{ball}
Consider probability distributions confined to
the $D$-dimensional unit ball $\mathcal B_{\bf 0}^D(1):=\{\vect\theta\in\mathbb R^D:
\Vert \vect\theta\Vert_2 =\sqrt{\sum_{i=1}^D \theta_i^2}\leq 1\}$. The
constraint is given by restricting the 2-norm of parameters: $\Vert
\vect\theta\Vert_2\leq 1$.

The idea of spherical augmentation is to augment the original $D$-dimensional
manifold of unit ball $\mathcal B_{\bf 0}^D(1)$ to a hyper-sphere $\mathcal S^D
:=\{\tilde{\vect\theta}\in \mathbb R^{D+1}: \Vert
\tilde{\vect\theta}\Vert_2=1\}$ in $(D+1)$-dimensional space.
This can be done by adding an auxiliary variable $\theta_{D+1}$ to the
original parameter $\vect\theta\in\mathcal B_{\bf 0}^D(1)$ to form an extended
parameter $\tilde{\vect\theta}=(\vect\theta,\theta_{D+1})$ such that $\theta_{D+1} = \sqrt{1-\Vert
\vect\theta\Vert_2^2}$.
Next, we identify the lower hemisphere $\mathcal S_-^D$ with the upper
hemisphere $\mathcal S_+^D$ by ignoring the sign of $\theta_{D+1}$.
This way, the domain of the target distribution is changed from the unit ball
$\mathcal B_{\bf 0}^D(1)$ to the $D$-dimensional sphere, $\mathcal S^D
:=\{\tilde{\vect\theta}\in \mathbb R^{D+1}: \Vert \tilde{\vect\theta}\Vert_2=1\}$, through the following transformation:
\begin{equation}\label{b2s}
T_{\mathcal B\to \mathcal S}:\, \mathcal B_{\bf 0}^D(1)\longrightarrow \mathcal S^D, \quad \vect\theta \mapsto \tilde{\vect\theta} = (\vect\theta, \pm\sqrt{1-\Vert \vect\theta\Vert_2^2})
\end{equation}
which can also be recognized as the coordinate map from the Euclidean coordinate
chart $\{\vect\theta,\mathcal B_{\bf 0}^D(1)\}$ to the manifold $\mathcal S^D$.

After collecting samples $\{\tilde{\vect\theta}\}$ using a sampling algorithm (e.g., HMC)
defined on the sphere, $\mathcal S^D$, we discard the last component
$\theta_{D+1}$ and obtain the samples $\{\vect\theta\}$ that automatically
satisfy the constraint $\Vert \vect\theta\Vert_2\leq 1$.
Note that the sign of $\theta_{D+1}$ does not affect our Monte Carlo estimates.
However, after applying the above transformation, we need to adjust our
estimates according to the change of variables theorem as
follows:
\begin{equation}\label{domainB2S}
\int_{\mathcal B_{\bf 0}^D(1)} f(\vect\theta) d\vect\theta_{\mathcal B} = \int_{\mathcal S_+^D} f(\tilde{\vect\theta}) \left|\frac{d\vect\theta_{\mathcal B}}{d\vect\theta_{\mathcal S_c}}\right| d\vect\theta_{\mathcal S_c}
\end{equation}
where $\left|\frac{d\vect\theta_{\mathcal B}}{d\vect\theta_{\mathcal
S_c}}\right|=|\theta_{D+1}|$ as shown in Corollary \ref{voladjB2S} in Appendix \ref{Metc}. Here,
$d\vect\theta_{\mathcal B}$ and $d\vect\theta_{\mathcal S_c}$ are
volume elements under the Euclidean metric and the \emph{canonical spherical
metric} respectively.

With the above transformation \eqref{b2s}, the resulting sampler is defined and moves freely on $\mathcal S^D$ while
implicitly handling the constraints imposed on the original parameters. As
illustrated in Figure \ref{fig:B2S}, the boundary of the constraint, i.e.,
$\Vert\vect\theta\Vert_2=1$, corresponds to the equator on the sphere $\mathcal
S^D$. Therefore, as the sampler moves on the sphere, e.g. from $A$ to $B$, passing
across the equator from one hemisphere to the other translates to ``bouncing back'' off the boundary in the original parameter space.

\begin{figure}[t]
\vspace{-20pt}
\begin{center}
\includegraphics[width=.8\textwidth, height=.5\textwidth]{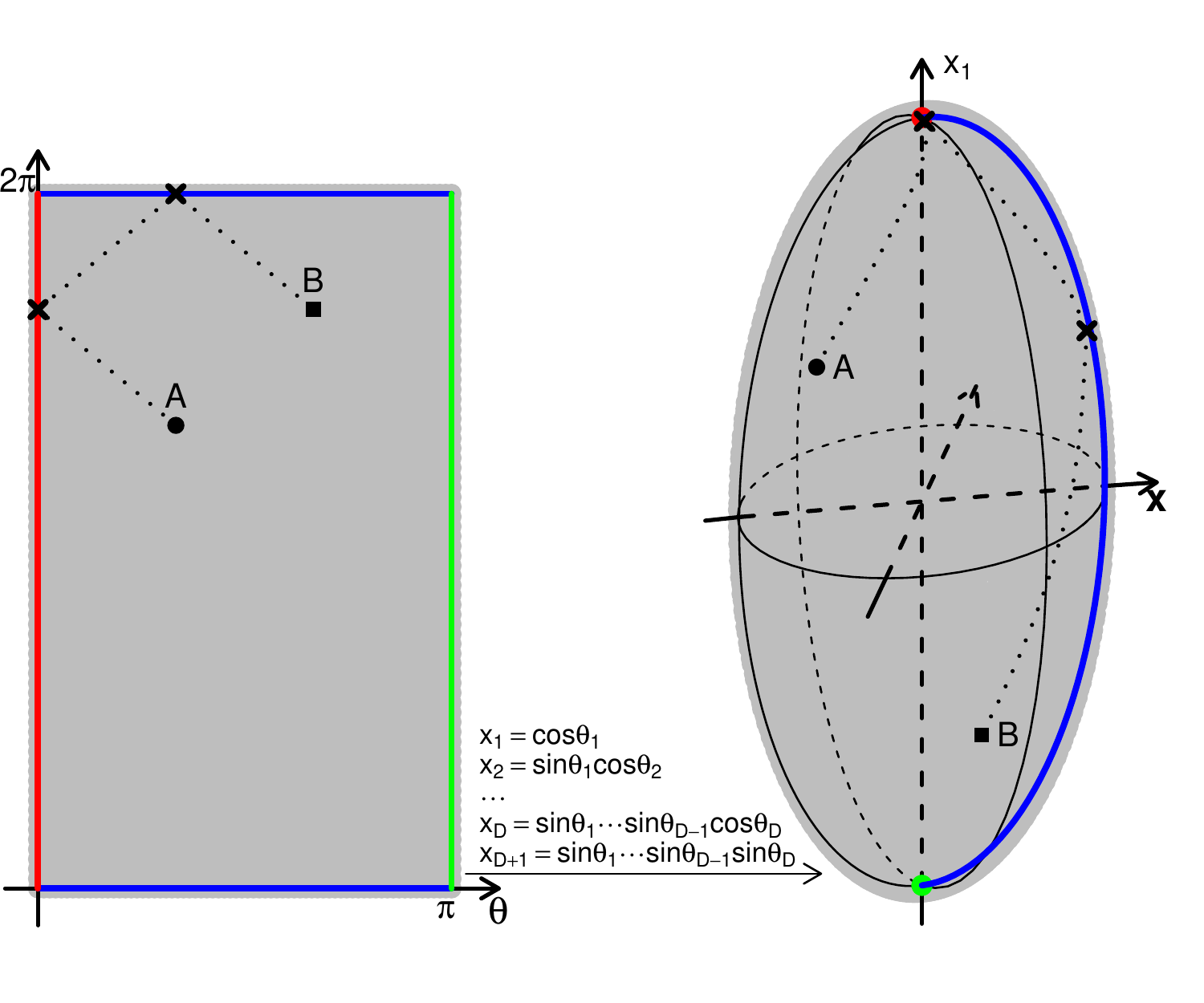}
\vspace{-10pt}
\caption{Transforming the hyper-rectangle $\mathcal R_{\bf 0}^D$ to the sphere $\mathcal S^D$.}
\vspace{-20pt}
\label{fig:R2S}
\end{center}
\end{figure}
\subsection{Box type constraints} \label{box}
Many constraints are given by both lower and upper bounds. Here we
focus on a special case that defines a hyper-rectangle 
$\mathcal R_{\bf 0}^D:=[0,\pi]^{D-1}\times [0,2\pi)$; other \emph{box} type constraints can be transformed to this hyper-rectangle. This constrained domain can be
mapped to the unit ball $\mathcal B_{\bf 0}^D(1)$ and thus reduces to the ball type
constraint discussed in Section \ref{ball}. However, a more natural approach is to use \emph{spherical} coordinates, which directly map the hyper-rectangle $\mathcal R_{\bf 0}^D$ to the sphere $\mathcal S^D$,
\begin{equation}\label{r2s}
T_{\mathcal R_{\bf 0}\to \mathcal S}:\, \mathcal R_{\bf 0}^D \longrightarrow \mathcal S^D, \quad \vect\theta\mapsto{\bf x},\;
x_d = \begin{cases}
\cos(\theta_d)\prod_{i=1}^{d-1}\sin(\theta_i), & d<D+1\\
\prod_{i=1}^D\sin(\theta_i), & d=D+1
\end{cases}
\end{equation}
Therefore, we use $\{\vect\theta, \mathcal R_{\bf 0}^D\}$ as the
spherical coordinate chart for the manifold $\mathcal S^D$.
Instead of being appended with an extra dimension as in Section \ref{ball},
here $\vect\theta\in\mathbb R^D$ is treated as the spherical coordinates of
the point ${\bf x}\in\mathbb R^{D+1}$ with $\Vert {\bf x}\Vert_2=1$.

After obtaining samples $\{\bf x\}$ on the sphere $\mathcal S^D$,
we transform them back to $\{\vect\theta\}$ in the original constrained domain $\mathcal  R_{\bf
0}^D$ using the following inverse mapping of \eqref{r2s}:
\begin{equation}\label{s2r}
T_{\mathcal S\to \mathcal R_0}: \mathcal S^D \longrightarrow \mathcal R_{\bf 0}^D,\; {\bf x}\mapsto\vect\theta,\;
\theta_d = \begin{dcases}\arccot\frac{x_d}{\sqrt{1-\sum_{i=1}^d x_i^2}},&d<D
%\\2\arccot\frac{x_D+\sqrt{x_{D+1}^2+x_D^2}}{x_{D+1}},&d=D
\\ \arccot\frac{x_D}{x_{D+1}}+\frac{\pi}{2}\sign(x_{D+1})(\sign(x_{D+1})-1),&d=D
\end{dcases}
\end{equation}
Similarly, we need to adjust the estimates based on the following change of
variables formula:
\begin{equation}\label{domainR2S}
\int_{\mathcal R_{\bf 0}^D} f(\vect\theta) d\vect\theta_{\mathcal R_{\bf 0}} = \int_{\mathcal S^D} f(\vect\theta) \left|\frac{d\vect\theta_{\mathcal R_{\bf 0}}}{d\vect\theta_{\mathcal S_r}}\right| d\vect\theta_{\mathcal S_r}
\end{equation}
where $\left|\frac{d\vect\theta_{\mathcal R_{\bf 0}}}{d\vect\theta_{\mathcal
S_r}}\right|=\prod_{d=1}^{D-1}\sin^{-(D-d)}(\theta_{d})$ as shown Proposition \ref{voladjR2S} in Appendix
\ref{Mets}. Here, $d\vect\theta_{\mathcal R_{\bf 0}}$ and
$d\vect\theta_{\mathcal S_r}$ are volume elements under the Euclidean metric and the \emph{round spherical metric} respectively.

With the above transformation \eqref{r2s}, we can derive sampling methods on the sphere to implicitly handle box type constraints. As illustrated in Figure \ref{fig:R2S}, the red vertical boundary of $\mathcal R_{\bf 0}^D$ collapses to the north pole of $\mathcal S^D$, while the green vertical boundary collapses to the south pole. Two blue horizontal boundaries are mapped to the same prime meridian of $\mathcal S^D$ shown in blue color. As the sampler moves freely on the sphere $\mathcal S^D$, the resulting samples automatically satisfy the original constraint after being transformed back to the original domain.

\begin{figure}[t]
\vspace{-20pt}
\begin{center}
\includegraphics[width=5in, height=3.5in]{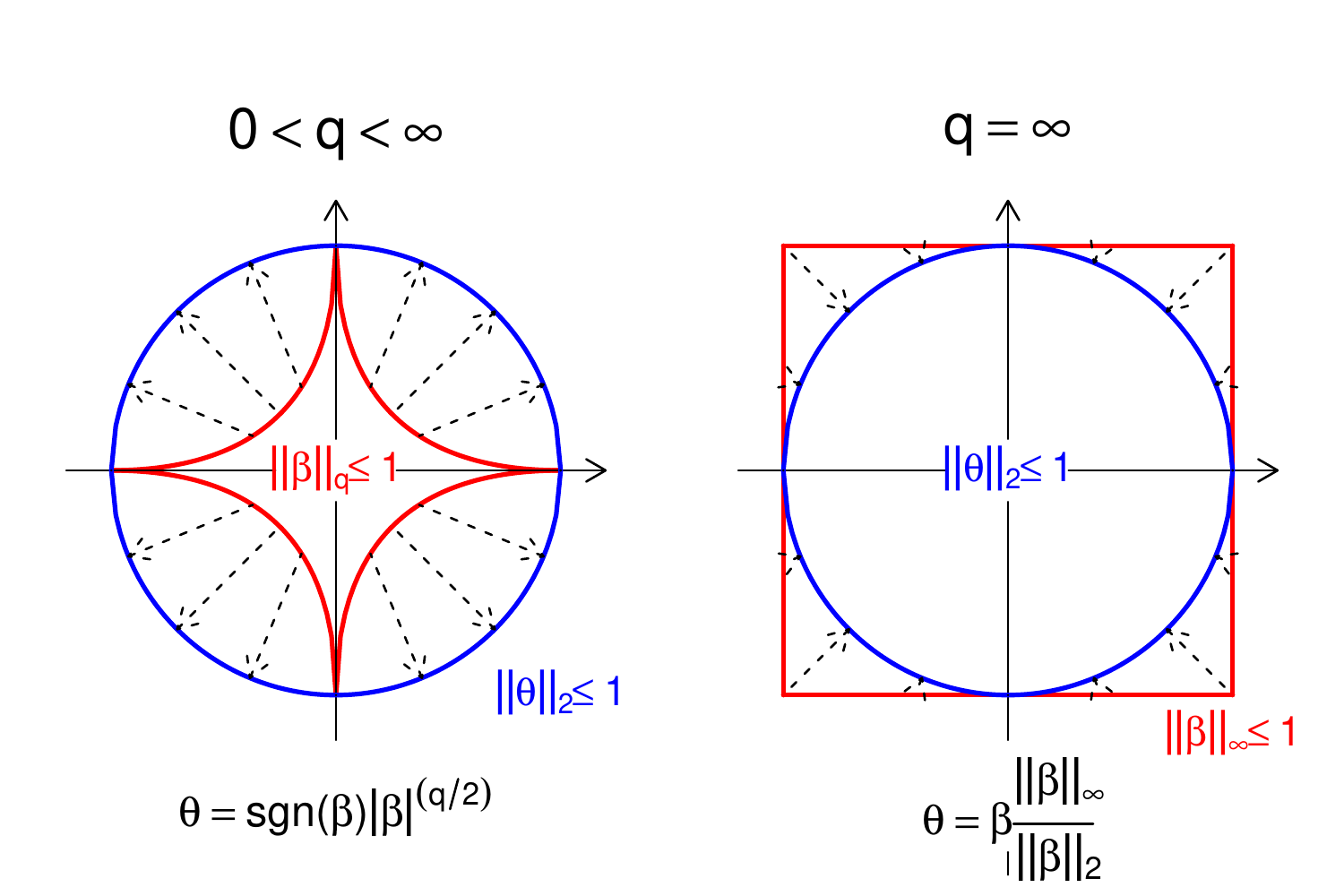}
\vspace{-10pt}
\caption{Transforming $q$-norm constrained domain to unit ball. Left: from unit
cube ${\mathcal C}^D$ to unit ball ${\mathcal B}_0^D(1)$; Right from general
$q$-norm domain $\mathcal Q^D$ to unit ball ${\mathcal B}_0^D(1)$.}
\vspace{-20pt}
\label{fig:changeofdomain}
\end{center}
\end{figure}

\subsection{General $q$-norm constraints} \label{q-norm}
The ball and box type constraints discussed in previous sections are in fact special cases
of more general $q$-norm constraints with $q$ set to 2 and $\infty$ respectively. In general, these constraints are expressed in terms of $q$-norm of the parameter vector $\vect\beta\in\mathbb R^D$,
\begin{equation}\label{qnorm}
\Vert \vect\beta\Vert_q =
\begin{cases}
(\sum_{i=1}^D |\beta_i|^q)^{1/q}, & q\in (0,+\infty)\\
\max_{1\leq i\leq D} |\beta_i|, & q=+\infty
\end{cases}
\end{equation}
This class of constraints is very common in statistics and machine learning.
For example, when $\vect\beta$ are regression parameters, $q=2$ corresponds to the ridge
regression and $q=1$ corresponds to Lasso \citep{tibshirani96}.

Denote the domain constrained by general $q$-norm as $\mathcal Q^D:=
\{\vect\beta\in \mathbb R^D: \Vert \vect\beta\Vert_q \leq 1\}$.
It could be quite challenging to sample probability distributions defined on $\mathcal Q^D$ (see Figure \ref{fig:constraints}).
To address this issue, we propose to transform $\mathcal Q^D$ to the unit ball
$\mathcal B_{\bf 0}^D(1)$ so that the method discussed in Section \ref{ball} can be applied.
As before, sampling methods defined on the sphere $\mathcal S^D$ generate samples
that automatically fall within $\mathcal B_{\bf 0}^D(1)$. Then we transform those samples
back to the $q$-norm domain, $\mathcal Q^D$, and adjust the estimates with the
following change of variables formula:
\begin{equation}\label{domainQ2S}
\int_{\mathcal Q^D} f(\vect\beta) d\vect\beta_{\mathcal Q} = \int_{\mathcal S_+^D} f(\tilde{\vect\theta}) \left|\frac{d\vect\beta_{\mathcal Q}}{d\vect\theta_{\mathcal S_c}}\right| d\vect\theta_{\mathcal S_c}
\end{equation}
where $\left|\frac{d\vect\beta_{\mathcal Q}}{d\vect\theta_{\mathcal
S_c}}\right|=\left|\frac{d\vect\beta_{\mathcal Q}}{d\tp{\vect\theta}_{\mathcal B}}\right| \left|\frac{d\vect\theta_{\mathcal B}}{d\vect\theta_{\mathcal
S_c}}\right|=\left|\frac{d\vect\beta_{\mathcal Q}}{d\tp{\vect\theta}_{\mathcal B}}\right| |\theta_{D+1}|$.
In the following, we introduce the bijective mappings between $\mathcal Q^D$ and
$\mathcal B_{\bf 0}^D(1)$ and specify the associated Jacobian determinants $\left|\frac{d\vect\beta_{\mathcal Q}}{d\tp{\vect\theta}_{\mathcal B}}\right|$.

\subsubsection{Norm constraints with $q\in (0,+\infty)$}\label{q-reg}
For $q\in (0,+\infty)$, $q$-norm domain $\mathcal Q^D$ can be transformed to
the unit ball $\mathcal B_0^D(1)$ bijectively via the following map (illustrated by
 the left panel of Figure \ref{fig:changeofdomain}):
\begin{equation}\label{q2b}
T_{\mathcal Q\to\mathcal B}:\, \mathcal Q^D \rightarrow {\mathcal B}_0^D(1), \quad \beta_i \mapsto \theta_i = \mathrm{sgn}(\beta_i)|\beta_i|^{q/2}
\end{equation}
The Jacobian determinant of $T_{\mathcal B\to\mathcal
Q}$ is $\left|\frac{d\vect\beta_{\mathcal Q}}{d\tp{\vect\theta}_{\mathcal
B}}\right| = \left(\frac{2}{q}\right)^D
\left(\prod_{i=1}^{D}|\theta_i|\right)^{2/q-1}$. See Appendix \ref{Jacobian} for more details.

%The following proposition gives the weights needed for the transformation from
%$\mathcal Q^D$ to $\mathcal B_{\bf 0}^D(1)$.
%\begin{prop}
%The Jacobian determinant (weight) of $T_{\mathcal B\to\mathcal Q}$ is as follows:
%\begin{equation}
%|dT_{\mathcal S\to\mathcal Q}| = \left(\frac{2}{q}\right)^D \left(\prod_{i=1}^{D}|\theta_i|\right)^{2/q-1}
%\end{equation}
%\end{prop}
%\begin{proof}
%Note
%\begin{equation*}
%T_{\mathcal B\to\mathcal Q}:\, \vect\theta\mapsto \vect\beta=\mathrm{sgn}(\vect\theta)|\vect\theta|^{2/q}
%\end{equation*}
%The Jacobian matrix for $T_{\mathcal B\to\mathcal Q}$ is
%\begin{equation*}
%\frac{d\vect\beta}{d\tp{\vect\theta}} = \frac{2}{q}\mathrm{diag}(|\vect\theta|^{2/q-1})
%\end{equation*}
%Therefore the Jacobian determinant of $T_{\mathcal B\to \mathcal Q}$ is
%\begin{equation*}
%|dT_{\mathcal B\to\mathcal Q}|
%= \left|\frac{d\vect\beta}{d\tp{\vect\theta}}\right| = \left(\frac{2}{q}\right)^D
%\left(\prod_{i=1}^{D}|\theta_i|\right)^{2/q-1}
%\end{equation*}
%\end{proof}

\subsubsection{Norm constraints with $q = +\infty$}\label{q-inf}
When $q = +\infty$, the norm inequality defines a unit \emph{hypercube},
${\mathcal C}^D:=[-1,1]^D=\{\vect\beta\in\mathbb R^D: \Vert \vect\beta
\Vert_{\infty} \leq 1\}$, from which the more general form, \emph{hyper-rectangle},
${\mathcal R}^D:=\{\vect\beta\in \mathbb R^D: {\bf l} \leq \vect\beta\leq {\bf
u} \}$, can be obtained by proper shifting and scaling.
The unit hypercube ${\mathcal C}^D$ can be transformed to its inscribed
unit ball $\mathcal B_{\bf 0}^D(1)$ through the following map (illustrated by
 the right panel of Figure \ref{fig:changeofdomain}):
\begin{equation}\label{c2b}
T_{\mathcal C\to \mathcal B}:\, [-1,1]^D \rightarrow {\mathcal B}_0^D(1), \quad \vect\beta \mapsto \vect\theta = \vect\beta \frac{\Vert \vect\beta\Vert_{\infty}}{\Vert \vect\beta\Vert_2} 
\end{equation}
The Jacobian determinant of $T_{\mathcal B\to\mathcal R}$ is
$\left|\frac{d\vect\beta_{\mathcal R}}{d\tp{\vect\theta}_{\mathcal
B}}\right|=\frac{\Vert\vect\theta\Vert_2^D}{\Vert\vect\theta\Vert_{\infty}^D} \prod_{i=1}^D \frac{u_i-l_i}{2}$. More details can be found in Appendix \ref{Jacobian}.

\subsection{Functional constraints}\label{funcon}
Many statistical problems involve functional constraints. For example, \cite{pakman13} discuss linear and quadratic constraints for multivariate Gaussian distributions. Since the target distribution is truncated Gaussian, 
Hamiltonian dynamics can be exactly simulated and the
boundary-hitting time can be analytically obtained. However, finding the hitting time and reflection trajectory is computationally expensive. Some constraints of
this type can be handled by the spherical augmentation method more efficiently. Further, our method can be use for sampling from a wide range of distributions beyond Gaussian.

\begin{figure}[t]
\begin{center}
\includegraphics[width=.9\textwidth,height=0.45\textwidth]{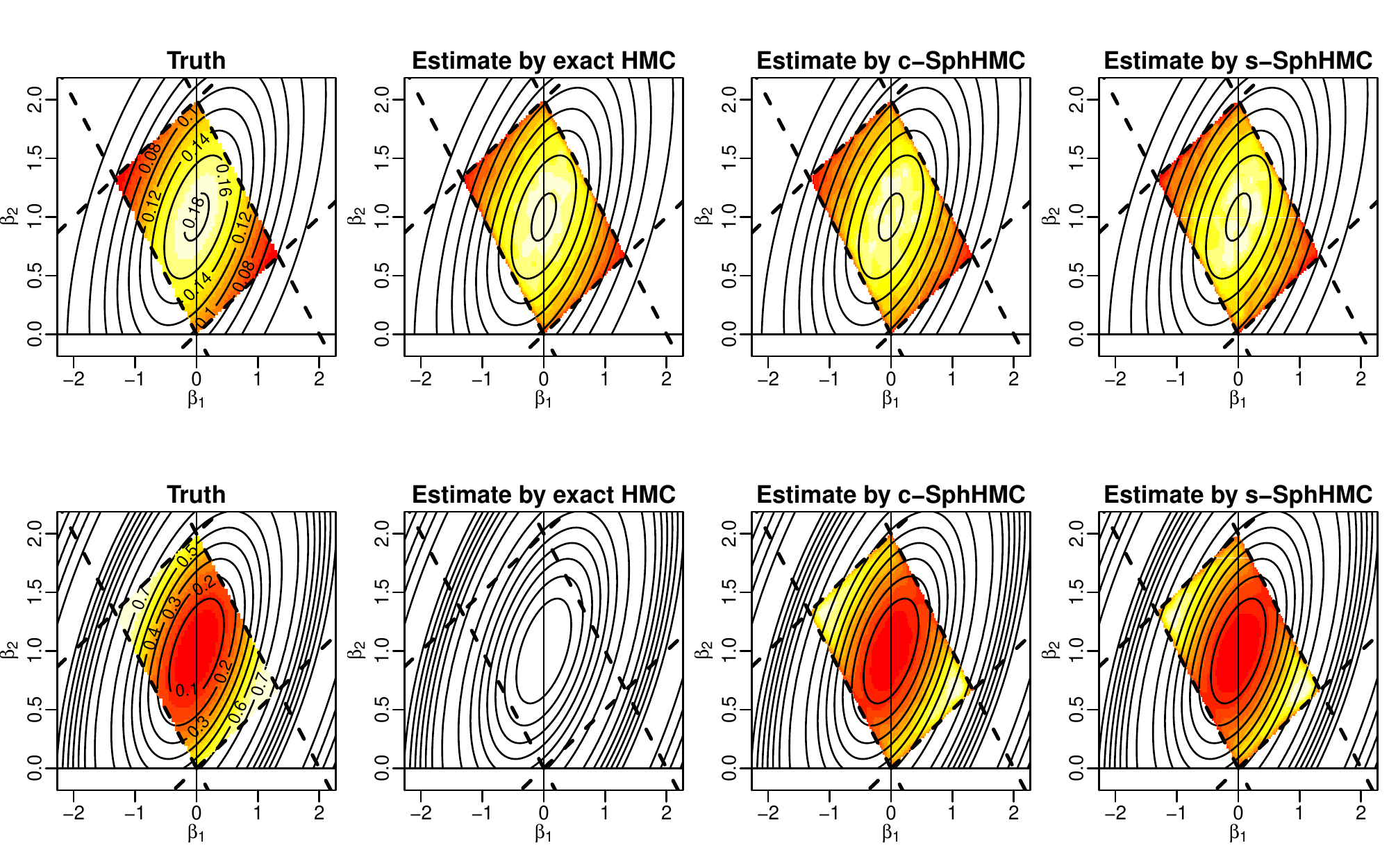}
\caption[Linear constraints]{Sampling from a Gaussian distribution (first row) and a damped sine wave distribution (second row) with linear constraints. First column shows the true distributions. The exact HMC method of \cite{pakman13} is shown in the second column. The last two columns show our proposed methods.
\label{fig:density_lin}}
\end{center}
\end{figure}

\subsubsection{Linear constraints}
In general, $M$ linear constraints can be written as ${\bf
 l}\leq {\bf A}\vect\beta\leq {\bf u}$, where ${\bf A}$ is $M\times D$ matrix,
 $\vect\beta$ is a $D$-vector, and the boundaries ${\bf l}$ and ${\bf u}$ are both $M$-vectors.
Here, we assume $M=D$ and ${\bf A}_{D\times D}$ is invertible. (Note that we generally
do not have ${\bf A}^{-1}{\bf l}\leq \vect\beta\leq {\bf A}^{-1}{\bf u}$.) Instead of sampling $\vect\beta$ directly, we sample $\vect\eta:={\bf
A}\vect\beta$ with the box type constraint: ${\bf l}\leq \vect\eta\leq {\bf u}$.
Now we can apply our proposed method to sample $\vect\eta$ and transform it back to $\vect\beta={\bf A}^{-1}\vect\eta$.
In this process, we use the following change of variables formula:
\begin{equation}\label{domainLinX}
\int_{{\bf l}\leq {\bf A}\vect\beta\leq {\bf u}} f(\vect\beta) d\vect\beta = \int_{{\bf l}\leq \vect\eta\leq {\bf u}} f(\vect\eta) \left|\frac{d\vect\beta}{d\vect\eta}\right| d\vect\eta
\end{equation}
where $\left|\frac{d\vect\beta}{d\vect\eta}\right|=|{\bf A}|^{-1}$.

Figure \ref{fig:density_lin} illustrates that both exact HMC \citep{pakman13}
and HMC with spherical augmentation can handle linear constraints, here ${\bf l}={\bf 0}$, ${\bf A}=
\begin{bmatrix}-0.5&1\\1&1\end{bmatrix}$ and ${\bf u}={\bf 2}$, imposed on a 2d Gaussian
distribution $\mathcal N(\vect\mu,\vect\Sigma)$ with
$\vect\mu=\begin{bmatrix}0\\1\end{bmatrix}$ and $\vect\Sigma=\begin{bmatrix}1&0.5\\0.5&1\end{bmatrix}$ (first row). However, the exact HMC is not applicable to more complicated distributions such as the damped sine wave distribution (second row in Figure \ref{fig:density_lin}) with the following density:
\begin{equation}
f(\vect\beta) \propto \frac{\sin^2 Q(\vect\beta)}{Q(\vect\beta)}, \quad Q(\vect\beta)=\frac12\tp{(\vect\beta-\vect\mu)}\vect\Sigma^{-1}(\vect\beta-\vect\mu)
\end{equation}
However, it is worth noting that for truncated Gaussian distributions, the exact HMC method of \cite{pakman13} can handle a wider range of linear constraints compared to our method. 

\subsubsection{Quadratic constraints}
General quadratic constraints can be written as $l\leq \tp{\vect\beta}{\bf A}\vect\beta+\tp{\bf b}\vect\beta \leq u$, where $l, u>0$
are scalars. We assume ${\bf A}_{D\times D}$ symmetric and positive definite.
By spectrum theorem, we have the decomposition ${\bf A}={\bf Q}\vect\Sigma\tp{\bf
Q}$, where ${\bf Q}$ is an orthogonal matrix and $\vect\Sigma$ is a diagonal
matrix of eigenvalues of ${\bf A}$. By shifting and scaling, $\vect\beta\mapsto
\vect\beta^*=\sqrt{\vect\Sigma}\tp{\bf Q}(\vect\beta+\frac{1}{2}{\bf
A}^{-1}{\bf b})$, we only need to consider the \emph{ring} type constraints for
$\vect\beta^*$,
\begin{equation}\label{ring}
\circledcirc:\, l^*\leq \Vert \vect\beta^*\Vert_2^2=\tp{(\vect\beta^*)}\vect\beta^* \leq u^*,\quad l^*=l+\frac{1}{4}\tp{\bf b}{\bf A}^{-1}{\bf b},\; u^*=u+\frac{1}{4}\tp{\bf b}{\bf
A}^{-1}{\bf b}
\end{equation}
which can be mapped to the unit ball as follows:
\begin{equation}\label{r2b}
T_{\circledcirc\to\mathcal B}:\, \mathcal B^D_{\bf 0}(\sqrt{u^*})\backslash \mathcal B^D_{\bf 0}(\sqrt{l^*}) \longrightarrow \mathcal B^D_{\bf 0}(1), \quad \vect\beta^* \mapsto \vect\theta=\frac{\vect\beta^*}{\Vert
\vect\beta^*\Vert_2}
\frac{\Vert \vect\beta^*\Vert_2-\sqrt{l^*}}{\sqrt{u^*}-\sqrt{l^*}}
\end{equation}
We can then apply our proposed method in Section \ref{ball} to obtain samples
$\{\vect\theta\}$ in $\mathcal B^D_{\bf 0}(1)$ and transform them back to the
original domain with the following inverse operation of \eqref{r2b}:
\begin{equation}\label{b2r}
T_{\mathcal B\to \circledcirc}:\, \mathcal B^D_{\bf 0}(1) \longrightarrow \mathcal B^D_{\bf 0}(\sqrt{u^*})\backslash \mathcal B^D_{\bf 0}(\sqrt{l^*}), \quad \vect\theta \mapsto
\vect\beta^*=\dfrac{\vect\theta}{\Vert\vect\theta\Vert_2} ((\sqrt{u^*}-\sqrt{l^*})\Vert\vect\theta\Vert_2+\sqrt{l^*})
\end{equation}
In this process, we need the change of variables formula
\begin{equation}\label{domainQuadX}
\int_{l\leq \tp{\vect\beta}{\bf A}\vect\beta+\tp{\bf b}\vect\beta \leq u} f(\vect\beta) d\vect\beta = \int_{\mathcal S_+^D} f(\vect\theta) \left|\frac{d\vect\beta}{d\vect\theta_{\mathcal S_c}}\right| d\vect\theta_{\mathcal S_c}
\end{equation}
where $\left|\frac{d\vect\beta}{d\vect\theta_{\mathcal
S_c}}\right|=\left|\frac{d\vect\beta}{d\tp{(\vect\beta^*)}}\right|
\left|\frac{d\vect\beta^*}{d\tp{\vect\theta}_{\mathcal B}}\right| \left|\frac{d\vect\theta_{\mathcal B}}{d\vect\theta_{\mathcal S_c}}\right| = |{\bf A}|^{-\frac12}\alpha^{D-1}(\alpha-\sqrt{l^*}) |\theta_{D+1}|$, $\alpha=\sqrt{u^*}+(1/\Vert\vect\theta\Vert_2-1)\sqrt{l^*}$.

\subsubsection{More general constraints}
We close this section with some comments on more general types of constraints. In some problems, several parameters might be unconstrained, and the type of constraints might be vary across the constrained parameters. In such cases, we could group the parameters into blocks and update each block separately using the methods discussed in this section. When dealing with one-sided constraints, e.g. $\theta_i\geq l_i$, one can map the constrained domain to the whole space and sample the unconstrained parameter $\theta_i^*$, where $\theta_i=|\theta_i^*|+l_i$. Alternatively, the one-sided constraint $\theta_i\geq l_i$ can be changed to a two-sided constraint for $\theta_i^*\in (0,1)$ by setting $\theta_i=-\log\theta_i^*+l_i$.

%%%%%%%%%%%%%%%%%%%%%%%%%%%%%%%%%%%%%%%%%%%%%%%%%%%%%%%%%%%%%%%%%%%%%%%%%%%%%%%%%%%%%
\section{Monte Carlos with Spherical Augmentation}\label{SAMC}
In this section, we show how the idea of spherical augmentation can be used to improve Markov Chain Monte Carlo methods applied to constrained probability distributions. In particular, we focus on two state-of-the-art sampling algorithms, namely Hamiltonian Monte Carlo\citep{duane87,neal11}, and Lagrangian Monte Carlo\citep{lan14a}. Note however that our proposed method is generic so its application goes beyond these two algorithms.

\subsection{Common settings}
Throughout this section, we denote the original parameter vector as $\vect\beta$, the constrained domain as
$\mathcal D$, the coordinate vector of sphere $\mathcal S^D$ as $\vect\theta$. All the change of variables formulae presented in the previous section can be summarized as
\begin{equation}\label{domainX}
\int_{\mathcal D} f(\vect\beta) d\vect\beta_{\mathcal D} = \int_{\mathcal S} f(\vect\theta) \left|\frac{d\vect\beta_{\mathcal D}}{d\vect\theta_{\mathcal S}}\right| d\vect\theta_{\mathcal S}
\end{equation}
where $\left|\frac{d\vect\beta_{\mathcal D}}{d\vect\theta_{\mathcal S}}\right|$
is the Jacobian determinant of the mapping $T:\mathcal S\longrightarrow\mathcal
D$ and $d\vect\theta_{\mathcal S}$ is some spherical measure.

For energy based MCMC algorithms like HMC, RHMC and LMC, we need to investigate
the change of energy under the above transformation.
The original potential energy function
$U(\vect\beta)=-\log f(\vect\beta)$ should be transformed to the following $\phi(\vect\theta)$
\begin{equation}\label{potentialX}
\phi(\vect\theta) = -\log f(\vect\theta) - \log\left|\frac{d\vect\beta_{\mathcal D}}{d\vect\theta_{\mathcal S}}\right| = U(\vect\beta(\vect\theta)) - \log\left|\frac{d\vect\beta_{\mathcal D}}{d\vect\theta_{\mathcal S}}\right| 
\end{equation}
Consequently the total energy $H(\vect\beta,{\bf v})$ in \eqref{Hamiltonian}
becomes
\begin{equation}\label{HamiltonianX}
H(\vect\theta,{\bf v}) = \phi(\vect\theta) + \frac12 \langle {\bf v}, {\bf v}\rangle_{{\bf G}_{\mathcal S}(\vect\theta)}
\end{equation}
The gradient of potential energy $U$, metric and natural gradient (preconditioned gradient)
under the new coordinate system $\{\vect\theta,\mathcal S^D\}$ can be calculated as follows
\begin{align}
\nabla_{\vect\theta} U(\vect\theta) &= \frac{d\tp{\vect\beta}}{d\vect\theta} \nabla_{\vect\beta} U(\vect\beta)\\
{\bf G}_{\mathcal S}(\vect\theta) &= \frac{d\tp{\vect\beta}}{d\vect\theta} {\bf G}_{\mathcal D}(\vect\beta) \frac{d\vect\beta}{d\tp{\vect\theta}}\\
{\bf G}_{\mathcal S}(\vect\theta)^{-1} \nabla_{\vect\theta} U(\vect\theta) &= \left[\frac{d\tp{\vect\beta}}{d\vect\theta}\right]^{-1} {\bf G}_{\mathcal D}(\vect\beta)^{-1} \nabla_{\vect\beta} U(\vect\beta)
\end{align}

\subsection{Spherical Hamiltonian Monte Carlo}\label{SphHMC}
We define HMC on the sphere $\mathcal S^D$ in two different
coordinate systems: the \emph{Cartesian coordinate} and the \emph{spherical coordinate}.
The former is applied to ball type constraints or those that could be converted
to ball type constraints; the later is more suited for box type constraints.
Besides the merit of implicitly handling constraints, HMC on sphere can take
advantage of the splitting technique \citep{beskos11, shahbaba14, byrne13} to
further improve its computational efficiency.

\subsubsection{Spherical HMC in the Cartesian coordinate}\label{SphHMCc}
We first consider HMC for the target distribution with density $f(\vect\theta)$ defined
on the unit ball $\mathcal B_{\bf 0}^D(1)$ endowed with the Euclidean metric ${\bf I}$. The potential energy is defined as $U(\vect\theta) := -\log f(\vect\theta)$.
Associated with the auxiliary variable ${\bf v}$ (i.e., velocity), we define the kinetic energy
$K({\bf v})=\frac{1}{2}\tp{\bf v}{\bf I}{\bf v}$ for ${\bf v}\in T_{\vect\theta} \mathcal B_{\bf 0}^D(1)$, which is a $D$-dimensional vector sampled from the tangent space of $\mathcal B_{\bf 0}^D(1)$. Therefore, the Hamiltonian is defined on $\mathcal B_{\bf 0}^D(1)$ as 
\begin{equation}\label{HamiltonianI}
H(\vect\theta,{\bf v}) = U(\vect\theta) + K({\bf v}) = U(\vect\theta) + \frac{1}{2}\tp{\bf v}{\bf I}{\bf v} 
\end{equation}

Under the transformation $T_{\mathcal B\to
\mathcal S}$ in \eqref{b2s}, the above Hamiltonian \eqref{HamiltonianI} on
$\mathcal B_{\bf 0}^D(1)$ will be changed to the follwing Hamiltonian
$H(\tilde{\vect\theta}, \tilde{\bf v})$ on $\mathcal S^D$ as in \eqref{HamiltonianX}:
\begin{equation}\label{HamiltonianSc}
H(\tilde{\vect\theta}, \tilde{\bf v}) = \phi(\tilde{\vect\theta}) +
\frac{1}{2}\tp{\bf v} {\bf G}_{\mathcal S_c}(\vect\theta){\bf v} = U(\tilde{\vect\theta}) - \log\left|\frac{d\vect\theta_{\mathcal B}}{d\vect\theta_{\mathcal S_c}}\right| + \frac{1}{2}\tp{\bf v} {\bf G}_{\mathcal S_c}(\vect\theta){\bf v}
\end{equation}
where the potential energy $U(\tilde{\vect\theta})=U(\vect\theta)$ (i.e., the
distribution is fully defined in terms of the original parameter $\vect\theta$,
which are the first $D$ elements of $\tilde{\vect\theta}$), and ${\bf G}_{\mathcal S_c}(\vect\theta)={\bf I}_D+\vect\theta\tp{\vect\theta}/(1-\Vert\vect\theta\Vert_{2}^2)$ is the \emph{canonical spherical metric}.

Viewing $\{\vect\theta, \mathcal B_{\bf 0}^D(1)\}$ as the
Euclidean coordinate chart of manifold $(\mathcal S^D,{\bf G}_{\mathcal
S_c}(\vect\theta))$, we have the logorithm of volume adjustment,
$\log\left|\frac{d\vect\theta_{\mathcal B}}{d\vect\theta_{\mathcal
S_c}}\right|=-\frac12\log|{\bf G}_{\mathcal S_c}|=\log|\theta_{D+1}|$ (See
Appendix \ref{Metc}). The last two terms in Equation \eqref{HamiltonianSc} is the minus log
density of ${\bf v}|\vect\theta\sim\mathcal N({\bf 0},{\bf G}_{\mathcal
S_c}(\vect\theta)^{-1})$ \citep[See][for more details]{girolami11,lan14a}.
However, the derivative of log volume adjustment, $\theta_{D+1}^{-1}$,
contributes an extremely large component to the gradient of energy around the
equator ($\theta_{D+1}=0$), which in turn increases the numerical error in the
discretized Hamiltonian dynamics. For the purpose of numerical stability, we
instead consider the following \emph{partial} Hamiltonian $H^*(\tilde{\vect\theta}, \tilde{\bf
v})$ and leave the volume adjustment as weights to adjust the estimation of
integration \eqref{domainX}:
\begin{equation}\label{hamiltonianSc}
H^*(\tilde{\vect\theta}, \tilde{\bf v}) = U(\vect\theta) + \frac{1}{2}\tp{\bf v} {\bf G}_{\mathcal S_c}(\vect\theta){\bf v}
\end{equation}

If we extend the velocity as $\tilde{\bf v} = ({\bf v}, v_{D+1})$ with
$v_{D+1}=-\tp{\vect\theta}{\bf v}/\theta_{D+1}$, then $\tilde{\bf v}$ falls in
the tangent space of the sphere, $T_{\tilde{\vect\theta}}\mathcal S^D:=\{\tilde{\bf v}\in \mathbb R^{D+1}|\tp{\tilde{\vect\theta}} \tilde{\bf v}=0\}$.
Therefore, $\tp{\bf v} {\bf G}_{\mathcal S_c}(\vect\theta){\bf v}=\tp{\tilde{\bf
v}}\tilde{\bf v}$.
As a result, the partial Hamiltonian \eqref{hamiltonianSc} can be recognized as the
standard Hamiltonian \eqref{HamiltonianI} in the augmented $(D+1)$ dimensional space
\begin{equation}\label{augHamiltonianIc}
H^*(\tilde{\vect\theta}, \tilde{\bf v}) = U(\tilde{\vect\theta}) + K(\tilde{\bf v})
= U(\tilde{\vect\theta}) +\frac12 \tp{\tilde{\bf v}}\tilde{\bf v}
\end{equation}
This is due to the energy invariance presented as Proposition \ref{enginv} in Appendix \ref{geomS}. Now we can sample the velocity ${\bf v}\sim \mathcal
N({\bf 0},{\bf G}_{\mathcal S_c}(\vect\theta)^{-1})$ and set $\tilde{\bf v}=\begin{bmatrix}{\bf I}
\\-\tp{\vect\theta} /\theta_{D+1}\end{bmatrix}{\bf v}$. Alternatively, since $\Co[\tilde{\bf v}] = \begin{bmatrix}{\bf I} \\-\tp{\vect\theta}
/\theta_{D+1}\end{bmatrix} {\bf G}_{\mathcal S_c}(\vect\theta)^{-1}
\begin{bmatrix}{\bf I} -\vect\theta/\theta_{D+1}\end{bmatrix} = {\bf I}_{D+1} -
\tilde{\vect\theta} \tp{\tilde{\vect\theta}}$ is idempotent, we can sample
$\tilde{\bf v}$ by $({\bf I}_{D+1} - \tilde{\vect\theta}
\tp{\tilde{\vect\theta}}){\bf z}$ with ${\bf z}\sim\mathcal N({\bf 0},{\bf I}_{D+1})$.

The Hamiltonian function \eqref{hamiltonianSc} can be used to define the Hamiltonian dynamics on the Riemannian manifold $(\mathcal S^D, {\bf G}_{\mathcal S_c}(\vect\theta))$ in terms of $(\vect\theta, {\bf p})$, or equivalently as the following Lagrangian dynamics in terms of $(\vect\theta, {\bf v})$ \citep{lan14a}:
\begin{align}\label{sphLD}
\begin{aligned}
&\dot{\vect\theta} && = && {\bf v}\\
&\dot{\bf v} && = && -\tp{\bf v} \vect\Gamma_{\mathcal S_c}(\vect\theta){\bf v} - {\bf G}_{\mathcal S_c}(\vect\theta)^{-1} \nabla_{\vect\theta} U(\vect\theta)
\end{aligned}
\end{align}
where $\vect\Gamma_{\mathcal S_c}(\vect\theta)$ are the Christoffel symbols of second kind derived from ${\bf G}_{\mathcal S_c}(\vect\theta)$.
The Hamiltonian \eqref{hamiltonianSc} is preserved under Lagrangian dynamics \eqref{sphLD}. \citep[See][for more discussion]{lan14a}.

\cite{byrne13} split the Hamiltonian \eqref{hamiltonianSc} as follows:
\begin{equation}
H^*(\tilde{\vect\theta}, \tilde{\bf v}) = U(\vect\theta)/2 + \frac{1}{2}\tp{\bf v} {\bf G}_{\mathcal S_c}(\vect\theta){\bf v} + U(\vect\theta)/2
\end{equation}
However, their approach requires the manifold to be embedded in the Euclidean
space. To avoid this assumption, instead of splitting the Hamiltonian dynamics of $(\vect\theta, {\bf p})$, we
split the corresponding Lagrangian dynamics \eqref{sphLD} in terms of $(\vect\theta, {\bf v})$ as follows (See Appendix \ref{splitHS} for more details):

\noindent
\begin{subequations}
\begin{minipage}{.5\textwidth}
\begin{align}\label{sphLD:U}
\begin{cases}
\begin{aligned}
&\dot{\vect\theta} && = && {\bf 0}\\
&\dot{\bf v} && = && -\frac{1}{2} {\bf G}_{\mathcal S_c}(\vect\theta)^{-1} \nabla_{\vect\theta} U(\vect\theta)
\end{aligned}
\end{cases}
\end{align}
\end{minipage}
\begin{minipage}{.5\textwidth}
\begin{align}\label{sphLD:K}
\begin{cases}
\begin{aligned}
&\dot{\vect\theta} && = && {\bf v}\\
&\dot{\bf v} && = && -\tp{\bf v}\vect\Gamma_{\mathcal S_c}(\vect\theta){\bf v}
\end{aligned}
\end{cases}
\end{align}
\end{minipage}
\end{subequations}

Note that the first dynamics \eqref{sphLD:U} only involves updating velocity
$\tilde{\bf v}$ in the tangent space $T_{\tilde{\vect\theta}}\mathcal S^D$ and
has the following solution (see Appendix \ref{splitHS} for more details):
\begin{align}\label{evolv}
\begin{aligned}
&\tilde{\vect\theta}(t) && = && \tilde{\vect\theta}(0)\\
&\tilde{\bf v}(t) && = && \tilde{\bf v}(0) -\frac{t}{2}\left(\begin{bmatrix}{\bf I}_D\\ \tp{\bf 0}\end{bmatrix} -\tilde{\vect\theta}(0) \tp{\vect\theta(0)}\right) \nabla_{\vect\theta} U(\vect\theta(0))
\end{aligned}
\end{align}
The second dynamics \eqref{sphLD:K} only involves the kinetic energy and has the
geodesic flow that is a \emph{great circle} (orthodrome or Riemannian circle) on
the sphere $\mathcal S^D$ as its analytical solution (See Appendix \ref{GEODS}
for more details):
\begin{align}\label{gcir}
\begin{aligned}
&\tilde{\vect\theta}(t) && = && \tilde{\vect\theta}(0) \cos(\Vert \tilde{\bf v}(0)\Vert_{2} t) + \frac{\tilde{\bf v}(0)}{\Vert \tilde{\bf v}(0)\Vert_{2}} \sin(\Vert \tilde{\bf v}(0)\Vert_{2} t)\\
&\tilde{\bf v}(t) && = && -\tilde{\vect\theta}(0) \Vert \tilde{\bf v}(0)\Vert_{2} \sin(\Vert \tilde{\bf v}(0)\Vert_{2} t) + \tilde{\bf v}(0) \cos(\Vert \tilde{\bf v}(0)\Vert_{2} t)
\end{aligned}
\end{align}
This solution defines an evolution, denoted as $g_t:(\vect\theta(0),{\bf v}(0))\mapsto(\vect\theta(t),{\bf v}(t))$.
Both \eqref{evolv} and \eqref{gcir} are symplectic. Due to the explicit
formula for the geodesic flow on sphere, the second dynamics in \eqref{sphLD:K} is simulated exactly. Therefore, updating $\tilde{\vect\theta}$ does not involve discretization error so we can use large step sizes. This could lead to improved computational efficiency. 
%Since this step is in fact a rotation on sphere, we set the trajectory length to be $2\pi/D$ and randomize the number of leapfrog steps to avoid periodicity. 
Because this step is in fact a rotation on sphere, it can generate proposals that are far away from the current state. 
Algorithm \ref{Alg:cSphHMC} shows the steps for implementing this approach, henceforth called \emph{Spherical HMC in the Cartesian coordinate (c-SphHMC)}.
It can be shown that the integrator in the algorithm has order 3 local error and order 2 global error (See the details in Appendix \ref{Err}).

%\begin{tiny}
\begin{algorithm}[t]
\caption{Spherical HMC in the Cartesian coordinate (c-SphHMC)}
\label{Alg:cSphHMC}
\begin{algorithmic}
\STATE Initialize $\tilde{\vect\theta}^{(1)}$ at current $\tilde{\vect\theta}$
after transformation $T_{\mathcal D\to\mathcal S}$
\STATE Sample a new velocity value $\tilde{\bf v}^{(1)}\sim \mathcal N({\bf 0},{\bf I}_{D+1})$
\STATE Set $\tilde{\bf v}^{(1)} \leftarrow \tilde{\bf v}^{(1)} - \tilde{\vect\theta}^{(1)} \tp{(\tilde{\vect\theta}^{(1)})} \tilde{\bf v}^{(1)}$
\STATE Calculate $H(\tilde{\vect\theta}^{(1)},\tilde{\bf v}^{(1)})=U(\vect\theta^{(1)}) + K(\tilde{\bf v}^{(1)})$ 
\FOR{$\ell=1$ to $L$}
\STATE $\tilde{\bf v}^{(\ell+\frac{1}{2})} = \tilde{\bf v}^{(\ell)}-\frac{\eps}{2} \left(\begin{bmatrix}{\bf I}_D\\ \tp{\bf 0}\end{bmatrix} -\tilde{\vect\theta}^{(\ell)} \tp{(\vect\theta^{(\ell)})}\right) \nabla_{\vect\theta} U(\vect\theta^{(\ell)})$
\STATE $\tilde{\vect\theta}^{(\ell+1)} = \tilde{\vect\theta}^{(\ell)} \cos(\Vert \tilde{\bf v}^{(\ell+\frac{1}{2})}\Vert \eps) + \frac{\tilde{\bf v}^{(\ell+\frac{1}{2})}}{\Vert \tilde{\bf v}^{(\ell+\frac{1}{2})}\Vert} \sin(\Vert \tilde{\bf v}^{(\ell+\frac{1}{2})}\Vert \eps)$
\STATE $\tilde{\bf v}^{(\ell+\frac{1}{2})} \leftarrow -\tilde{\vect\theta}^{(\ell)}\Vert \tilde{\bf v}^{(\ell+\frac{1}{2})}\Vert \sin(\Vert \tilde{\bf v}^{(\ell+\frac{1}{2})}\Vert \eps) + \tilde{\bf v}^{(\ell+\frac{1}{2})} \cos(\Vert \tilde{\bf v}^{(\ell+\frac{1}{2})}\Vert \eps)$
\STATE $\tilde{\bf v}^{(\ell+1)} = \tilde{\bf v}^{(\ell+\frac{1}{2})}-\frac{\eps}{2} \left(\begin{bmatrix}{\bf I}_D\\ \tp{\bf 0}\end{bmatrix} -\tilde{\vect\theta}^{(\ell+1)} \tp{(\vect\theta^{(\ell+1)})}\right) \nabla_{\vect\theta} U(\vect\theta^{(\ell+1)})$
\ENDFOR
\STATE Calculate $H(\tilde{\vect\theta}^{(L+1)},\tilde{\bf v}^{(L+1)})=U(\vect\theta^{(L+1)}) + K(\tilde{\bf v}^{(L+1)})$
\STATE Calculate the acceptance probability $\alpha =\min\{1, \exp[-H(\tilde{\vect\theta}^{(L+1)},\tilde{\bf v}^{(L+1)})+H(\tilde{\vect\theta}^{(1)},\tilde{\bf v}^{(1)})] \}$
\STATE Accept or reject the proposal according to $\alpha$ for the next state $\tilde{\vect\theta}'$
\STATE Calculate $T_{\mathcal S\to\mathcal D}(\tilde{\vect\theta}')$ and the
corresponding weight $|dT_{\mathcal S\to\mathcal D}|$
\end{algorithmic}
\end{algorithm}
%\end{tiny}

\subsubsection{Spherical HMC in the spherical coordinate}\label{SphHMCs}
Now we define HMC on the sphere $\mathcal S^D$ in the spherical coordinate $\{\vect\theta, \mathcal R_{\bf 0}^D\}$.
The natural metric on the sphere $\mathcal S^D$ induced by the coordinate
mapping \eqref{r2s} is the \emph{round spherical metric}\footnote{Note, $\tp{\bf
v}{\bf G}_{\mathcal S_r}(\vect\theta){\bf v}\leq \Vert{\bf v}\Vert_2^2\leq \Vert\tilde{\bf v}\Vert_2^2 = \tp{\bf v}{\bf G}_{\mathcal S_c}(\vect\theta){\bf v}$.}, ${\bf G}_{\mathcal S_r}(\vect\theta)=\diag[1,\sin^2(\theta_1),\cdots,\prod_{d=1}^{D-1}\sin^2(\theta_d)]$.

As in Section \ref{SphHMCc}, we start with the usual Hamiltonian
$H(\vect\theta,{\bf v})$ defined on $(\mathcal R_{\bf 0}^D,{\bf I})$ as in
\eqref{HamiltonianI} with ${\bf v}\in T_{\vect\theta} \mathcal R_{\bf 0}^D$.
Under the transformation $T_{\mathcal R_{\bf 0}\to\mathcal S}: \vect\theta\mapsto{\bf
x}$ in \eqref{r2s}, Hamiltonian \eqref{HamiltonianI} on $\mathcal R_{\bf 0}^D$
is changed to the following Hamiltonian $H({\bf x},\dot{\bf x})$ as in
\eqref{HamiltonianX}:
\begin{equation}\label{HamiltonianSs}
H({\bf x},\dot{\bf x}) = \phi({\bf x}) +
\frac{1}{2}\tp{\bf v} {\bf G}_{\mathcal S_c}(\vect\theta){\bf v} = U({\bf x}) - \log\left|\frac{d\vect\theta_{\mathcal R_{\bf 0}}}{d\vect\theta_{\mathcal S_r}}\right| + \frac{1}{2}\tp{\bf v} {\bf G}_{\mathcal S_r}(\vect\theta){\bf v}
\end{equation}
where the potential energy $U({\bf x})=U(\vect\theta({\bf x}))$ and ${\bf
G}_{\mathcal S_r}(\vect\theta)$ is the round spherical metric.

As before, the logorithm of volume adjustment is
$\log\left|\frac{d\vect\theta_{\mathcal R_{\bf 0}}}{d\vect\theta_{\mathcal
S_r}}\right|=-\frac12\log|{\bf G}_{\mathcal
S_r}|=-\sum_{d=1}^{D-1}(D-d)\log\sin(\theta_{d})$ (See Appendix \ref{Mets}).
The last two terms in Equation \eqref{HamiltonianSs} is the minus log
density of ${\bf v}|\vect\theta\sim\mathcal N({\bf 0},{\bf G}_{\mathcal
S_r}(\vect\theta)^{-1})$. Again, for numerical stability we
consider the following partial Hamiltonian $H^*({\bf x},\dot{\bf x})$ and leave the
volume adjustment as weights to adjust the estimation of integration \eqref{domainX}:
\begin{equation}\label{hamiltonianSs}
H^*({\bf x},\dot{\bf x}) = U(\vect\theta) + \frac{1}{2} \tp{\bf v} {\bf G}_{\mathcal S_r}(\vect\theta){\bf v}
\end{equation}

Taking derivative of $T_{\mathcal R_{\bf 0}\to \mathcal S}:
\vect\theta\mapsto{\bf x}$ in \eqref{r2s} with respect to time $t$ we have
\begin{equation}\label{dr2s}
\dot x_d =
\begin{dcases}
%\frac{d}{dt}\left[\cos\theta_d\prod_{i=1}^{d-1}\sin(\theta_i)\right]=
%-v_d\prod_{i=1}^d\sin(\theta_i) + \cos\theta_d\sum_{i<d}v_i\cot(\theta_i)\prod_{i=1}^{d-1}\sin(\theta_i)=
[-v_d\tan(\theta_d) + \sum_{i<d}v_i\cot(\theta_i)] x_d, & d< D+1\\
%\sum_{i<D+1}v_i\cot(\theta_i)\prod_{i=1}^D\sin(\theta_i)=
\sum_{i<D+1}v_i\cot(\theta_i) x_{D+1}, & d=D+1
\end{dcases}
\end{equation}
We can show that $\tp{{\bf x}(\vect\theta)}\dot{\bf x}(\vect\theta,{\bf v})=0$;
that is, $\dot{\bf x}\in T_{\bf x} \mathcal S^D$.
Taking derivative of $T_{\mathcal S \to \mathcal R_{\bf 0}}:
{\bf x}\mapsto\vect\theta$ in \eqref{s2r} with respect to time $t$ yields
\begin{equation}\label{ds2r}
v_d := \dot\theta_d = 
\begin{dcases}
%-\cot(\theta_d)\left[\frac{\dot x_d}{x_d}+\cos^2(\theta_d)\sum_{i=1}^{d-1}\frac{x_i\dot x_i}{x_d^2}\right],&d<D
-\frac{x_d}{\sqrt{1-\sum_{i=1}^d x_i^2}}\left[\frac{\dot x_d}{x_d}+\frac{\sum_{i=1}^{d-1}x_i\dot x_i}{1-\sum_{i=1}^{d-1}x_d^2}\right],&d<D
\\ \frac{x_D\dot x_{D+1}-\dot x_Dx_{D+1}}{x_D^2+x_{D+1}^2},&d=D
\end{dcases}
\end{equation}
Further, we have $\tp{\bf v} {\bf G}_{\mathcal S_r}(\vect\theta){\bf
v}=\tp{\dot{\bf x}}\dot{\bf x}$. Therefore, the partial Hamiltonian
\eqref{hamiltonianSs} can be recognized as the standard Hamiltonian
\eqref{HamiltonianI} in the augmented $(D+1)$ dimensional space, which is
again explained by the energy invariance Proposition \ref{enginv} (See more details in Appendix
\ref{geomS})
\begin{equation}\label{augHamiltonianIs}
H^*({\bf x},\dot{\bf x}) = U({\bf x}) + K(\dot{\bf x})
= U({\bf x}) +\frac12 \tp{\dot{\bf x}}\dot{\bf x}
\end{equation}

Similar to the method discussed in Section \ref{SphHMCc}, we split the Hamiltonian
\eqref{hamiltonianSs}, $H^*(\tilde{\vect\theta}, \tilde{\bf v}) = U(\vect\theta)/2 + \frac{1}{2}\tp{\bf v} {\bf G}_{\mathcal S_r}(\vect\theta){\bf v} + U(\vect\theta)/2$, and its corresponding Lagrangian dynamics \eqref{sphLD} as follows:\\
\begin{subequations}
\begin{minipage}{.5\textwidth}
\begin{align}\label{sphLD:1}
\begin{cases}
\begin{aligned}
&\dot{\vect\theta} && = && {\bf 0}\\
&\dot{\bf v} && = && -\frac{1}{2} {\bf G}_{\mathcal S_r}(\vect\theta)^{-1} \nabla_{\vect\theta} U(\vect\theta)
\end{aligned}
\end{cases}
\end{align}
\end{minipage}
\begin{minipage}{.5\textwidth}
\begin{align}\label{sphLD:2}
\begin{cases}
\begin{aligned}
&\dot{\vect\theta} && = && {\bf v}\\
&\dot{\bf v} && = && -\tp{\bf v}\vect\Gamma_{\mathcal S_r}(\vect\theta){\bf v}
\end{aligned}
\end{cases}
\end{align}
\end{minipage}
\end{subequations}

The first dynamics \eqref{sphLD:1} involves updating the velocity ${\bf v}$ only.
However, the diagonal term of ${\bf G}_{\mathcal S_r}(\vect\theta)^{-1}$,
$\prod_{i=1}^{d-1}\sin^{-2}(\theta_i)$ increases exponentially fast as dimension
grows. This will cause the velocity updated by \eqref{sphLD:1} to have extremely large components. To
avoid such issue, we use small time vector
$\vect\eps=[\eps,\eps^2,\cdots,\eps^D]$, instead of scalar $\eps$, in updating
Equation \eqref{sphLD:1}.
The second dynamics \eqref{sphLD:2} describes the same geodesic flow on the
sphere $\mathcal S^D$ as \eqref{sphLD:K} but in the spherical coordinate
$\{\vect\theta,\mathcal R_0^D\}$. Therefore it should have the same solution as
\eqref{gcir} expressed in $\{\vect\theta,\mathcal R_0^D\}$.
To obtain this solution, we first apply
$\tilde T_{\mathcal R_{\bf 0}\to \mathcal S}: (\vect\theta(0),{\bf v}(0))\mapsto ({\bf
x}(0),\dot{\bf x}(0))$, which consists of \eqref{r2s}\eqref{dr2s}.
Then, we use $g_t$ in \eqref{gcir} to evolve $({\bf x}(0),\dot{\bf x}(0))$ for
some time $t$ to find $({\bf x}(t),\dot{\bf x}(t))$. Finally, we use $\tilde T_{\mathcal S\to
\mathcal R_{\bf 0}}: ({\bf x}(t),\dot{\bf x}(t))\mapsto(\vect\theta(t),{\bf
v}(t))$, composite of \eqref{s2r}\eqref{ds2r}, to go back to $\mathcal R_0^D$.

%\begin{tiny}
\begin{algorithm}[t]
\caption{Spherical HMC in the spherical coordinate (s-SphHMC)}
\label{Alg:sSphHMC}
\begin{algorithmic}
\STATE Initialize $\vect\theta^{(1)}$ at current $\vect\theta$ after
transformation $T_{\mathcal D\to\mathcal S}$
\STATE Sample a new velocity value ${\bf v}^{(1)}\sim \mathcal N({\bf 0},{\bf I}_D)$
\STATE Set $v_d^{(1)} \leftarrow v_d^{(1)}\prod_{i=1}^{d-1}\sin^{-1}(\theta_i^{(1)})$,\;$d=1,\cdots,D$\
\STATE Calculate $H(\vect\theta^{(1)},{\bf v}^{(1)})=U(\vect\theta^{(1)}) + K({\bf v}^{(1)})$ 
\FOR{$\ell=1$ to $L$}
\STATE $v_d^{(\ell+\frac{1}{2})} = v_d^{(\ell)}-\frac{\eps^d}{2}
\frac{\pa}{\pa\theta_d} U(\vect\theta^{(\ell)}) \prod_{i=1}^{d-1}\sin^{-2}(\theta_i^{(\ell)})$,\;$d=1,\cdots,D$
\STATE $({\vect\theta}^{(\ell+1)},{\bf v}^{(\ell+\frac{1}{2})}) \leftarrow \tilde T_{\mathcal S\to \mathcal R_{\bf 0}}\circ g_{\eps}\circ\tilde T_{\mathcal R_{\bf 0}\to \mathcal S} ({\vect\theta}^{(\ell)},{\bf v}^{(\ell+\frac{1}{2})})$
\STATE $v_d^{(\ell+1)} = v_d^{(\ell+\frac{1}{2})}-\frac{\eps^d}{2} \frac{\pa}{\pa\theta_d} U(\vect\theta^{(\ell+1)}) \prod_{i=1}^{d-1}\sin^{-2}(\theta_i^{(\ell+1)})$,\;$d=1,\cdots,D$
\ENDFOR
\STATE Calculate $H(\vect\theta^{(L+1)},{\bf v}^{(L+1)})=U(\vect\theta^{(L+1)}) + K({\bf v}^{(L+1)})$
\STATE Calculate the acceptance probability $\alpha =\min\{1, \exp[-H(\vect\theta^{(L+1)},{\bf v}^{(L+1)})+H(\vect\theta^{(1)},{\bf v}^{(1)})] \}$
\STATE Accept or reject the proposal according to $\alpha$ for the next state ${\vect\theta}'$
\STATE Calculate $T_{\mathcal S\to\mathcal D}(\vect\theta')$ and the
corresponding weight $|dT_{\mathcal S\to\mathcal D}|$
\end{algorithmic}
\end{algorithm}
%\end{tiny}

Algorithm \ref{Alg:sSphHMC} summarizes the steps for this method, called \emph{Spherical HMC in the spherical
coordinate (s-SphHMC)}. In theory, the hyper-rectangle $\mathcal R_{\bf 0}^D$ can be used as a base type 
(as the unit ball $\mathcal B_{\bf 0}^D(1)$ does) for general $q$-norm constraints
for which s-SphHMC can be applied. This is because $q$-norm domain $\mathcal Q^D$ can be
bijectively mapped to the hypercube $\mathcal C^D$, and thereafter to $\mathcal R_{\bf 0}^D$. 
However the involved Jacobian matrix is rather complicated and s-SphHMC used in this way is not as efficient as c-SphHMC. Therefore, we use s-SphHMC only for box type constraints.

\subsection{Spherical LMC on probability simplex}\label{SphLMC}
A large class of statistical models involve defining probability distributions on the \emph{simplex} $\Delta^K$,
\begin{equation}\label{probsplx}
\Delta^K:=\{\vect\pi\in\mathbb R^D|\;\pi_k\geq 0, \sum_{k=1}^K \pi_d=1\}
\end{equation}
As an example, we consider \emph{latent Dirichlet allocation (LDA)} \citep{blei03}, which is a
hierarchical Bayesian model commonly used to model document topics.
This type of constraints can be viewed as a special case of the 1-norm constraint, discussed in Section \ref{q-reg}, by identifying the first orthant (all positive components) with the others. Then, the c-SphHMC algorithm \ref{Alg:cSphHMC} can be applied to generate samples
$\{\vect\theta\}$ on the sphere $\mathcal S^{K-1}$. These samples can be transformed as
$\{\vect\theta^2\}$ and mapped back to the simplex $\Delta^K$. %Since the main application considered here, i.e., LDA, is high dimensional and requires some online learning process, we need to modify c-SphHMC for LDA.

In what follows, we show that Fisher metric on the root space of simplex,
$\sqrt{\Delta}^K:=\{\vect\theta\in\mathcal S^{K-1}|\theta_k\geq 0,\,\forall\, k=1,\cdots,K\}$ 
(i.e. the first orthant of the sphere $\mathcal S^{K-1}$),
is the same as the canonical spherical metric ${\bf G}_{\mathcal S_c}(\vect\theta)$ up to a constant.
In this sense, it is more natural to define the sampling algorithms on the sphere
$\mathcal S^{K-1}$. We start with the toy example discussed in \cite{patterson13}. Denote the observed data as ${\bf x}=\{x_i\}_{i=1}^N$, where each data point belongs to one of the $K$ categories with probability $p(x_i=k|\vect\pi)=\pi_k$. We assume a Dirichlet prior on $\vect\pi$:
$p(\vect\pi)\propto \prod_{k=1}^K \pi_k^{\alpha_k-1}$. The posterior distribution is $p(\vect\pi|{\bf x})\propto\prod_{k=1}^K \pi_k^{n_k+\alpha_k-1}$, where
$n_k=\sum_{i=1}^NI(x_i=k)$ counts the points $x_i$ in category $k$.
Denote ${\bf n}=\tp{[n_1,\cdots,n_K]}$ and $n:=|{\bf n}|=\sum_{k=1}^Kn_k$.
For inference, we need to sample from the posterior distribution $p(\vect\pi|{\bf x})$ defined on the probability
simplex.

The Fisher information matrix is a function of $\vect\pi_{-K}$ (here, `$-K$' means all but the $K$-th components) and is calculated as follows:
\begin{align}
\begin{aligned}
{\bf G_F}(\vect\pi_{-K}) &= -\E[\nabla^2\log p({\bf x}|\vect\pi_{-K})]\\
&= -\E[\nabla^2 (\tp{\bf n}_{-K}\log(\vect\pi_{-K}) + (n-\tp{\bf n}_{-K}{\bf 1})\log(1-\tp{\vect\pi}_{-K}{\bf 1}))]\\
&= -\E[\nabla ({\bf n}_{-K}/\vect\pi_{-K} - {\bf 1}(n-\tp{\bf n}_{-K}{\bf 1})/(1-\tp{\vect\pi}_{-K}{\bf 1}))]\\
&= -\E[ -\diag({\bf n}_{-K}/\vect\pi_{-K}^2) - {\bf 1}\tp{\bf 1}(n-\tp{\bf n}_{-K}{\bf 1})/(1-\tp{\vect\pi}_{-K}{\bf 1})^2]\\
&= n[\diag(1/\vect\pi_{-K}) +{\bf 1}\tp{\bf 1}/\pi_K]
\end{aligned}
\end{align}
Now we use $T_{\Delta\to\sqrt{\Delta}}:\vect\pi\mapsto \vect\theta=\sqrt{\vect\pi}$ to
map the simplex to the sphere (the first orthant). Note that $\frac{d\vect\pi_{-K}}{d\tp{\vect\theta}_{-K}}=2\diag(\vect\theta_{-K})$. Therefore, we have a proper metric on $\sqrt{\Delta}^K$ as follows:
\begin{equation}
{\bf G}_{\sqrt\Delta}(\vect\theta) 
%&=\E[\nabla_{\vect\theta_{-K}}\log p({\bf x}|\vect\theta_{-K}) \tp{\nabla_{\vect\theta_{-K}}\log p({\bf x}|\vect\theta_{-K})}]&\\
= \frac{d\tp{\vect\pi}_{-K}}{d\vect\theta_{-K}} {\bf G_F}(\vect\pi_{-K}) \frac{d\vect\pi_{-K}}{d\tp{\vect\theta}_{-K}}
%\\&= 4n\diag(\vect\theta_{-K}) [\diag(1/\vect\pi_{-K}) +{\bf 1}\tp{\bf 1}/\pi_K] \diag(\vect\theta_{-K})&\\
= 4n[{\bf I}_{K-1} +\vect\theta_{-K}\tp{\vect\theta_{-K}}/\theta_K^2]
= 4n{\bf G}_{\mathcal S_c}(\vect\theta)
\end{equation}
where the scalar $4n$ properly scales the metric in high dimensional data
intensive models. In LDA particularly, $n$ could be the number of words counted
in the selected documents. Hence, we use ${\bf G}_{\sqrt\Delta}(\vect\theta)$
instead of ${\bf G}_{\mathcal S_c}(\vect\theta)$. We refer to the resulting method as \emph{Spherical Lagrangian Monte Carlo (SphLMC)}.

\begin{figure}[t]
\begin{center}
\includegraphics[width=.45\textwidth,height=0.4\textwidth]{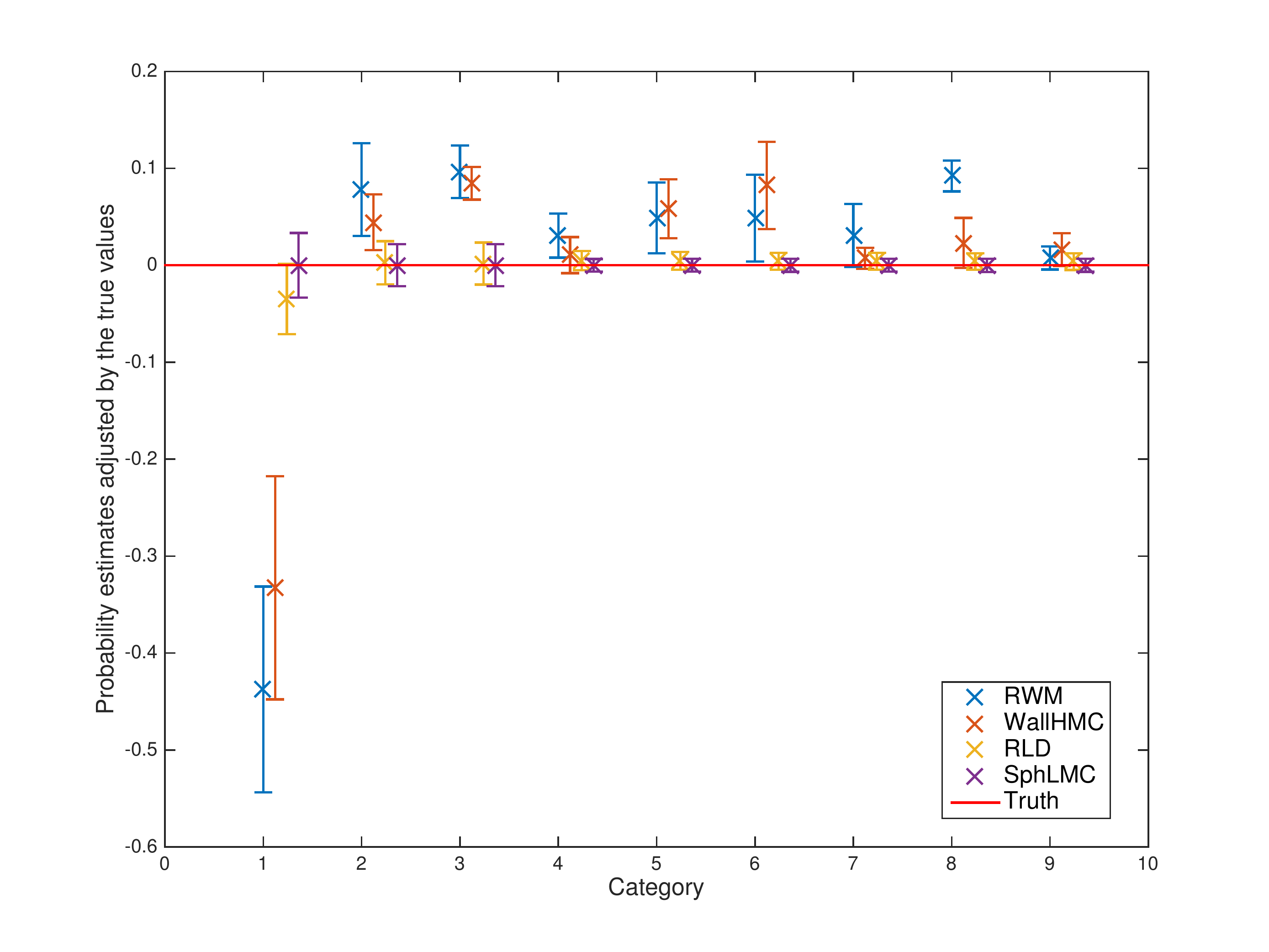}
\includegraphics[width=.54\textwidth,height=0.4\textwidth]{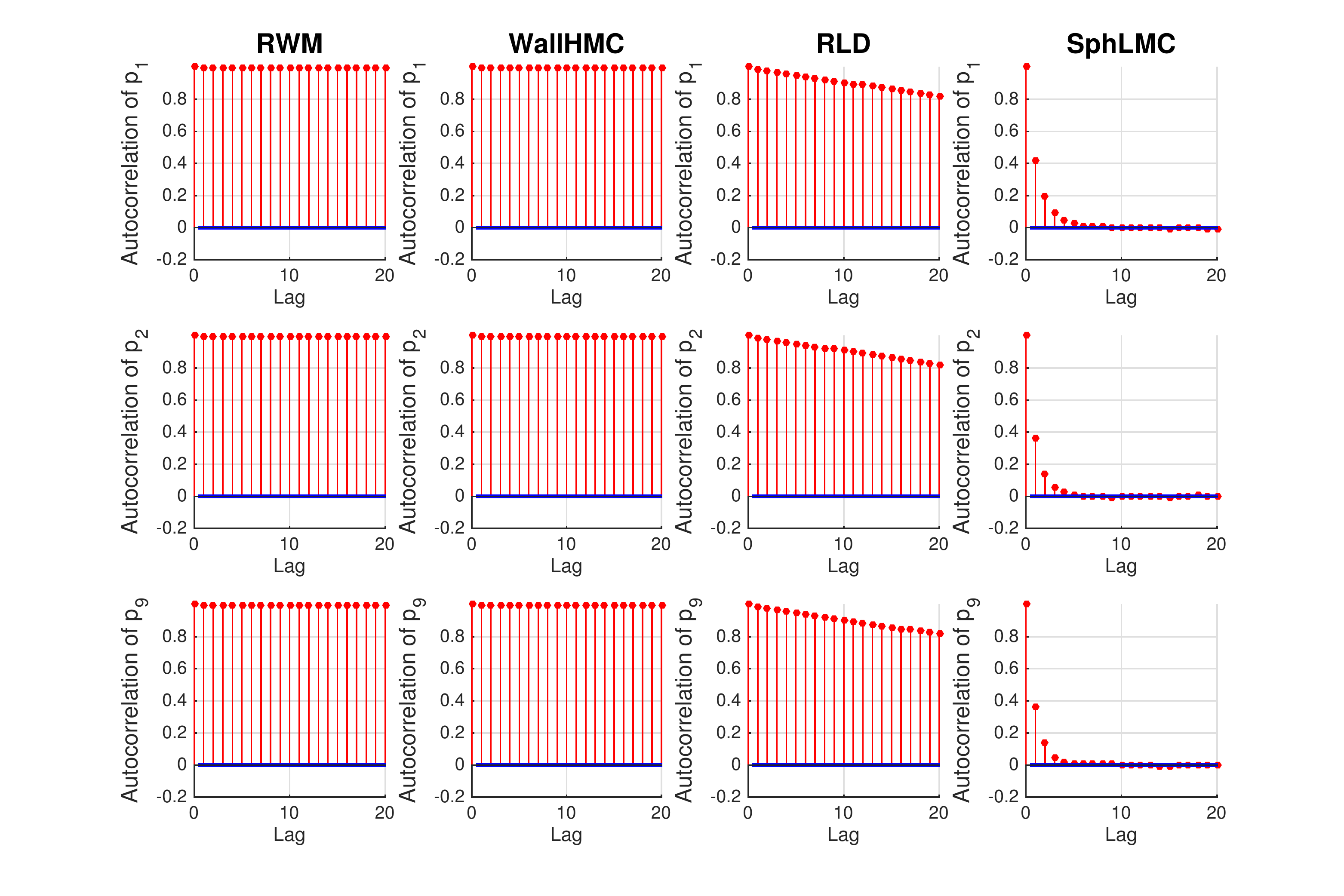}
\caption[Dirichlet-Multinomial distribution]{Dirichlet-Multinomial model: probability estimates (left) and autocorrelation function for MCMC samples (right).}
\label{fig:D-M}
\end{center}
\end{figure}

Recall that in the development of Spherical HMC algorithms, we decided to omit the log volume adjustment term, 
$\log\left|\frac{d\vect\beta_{\mathcal D}}{d\vect\theta_{\mathcal S}}\right|$, in the partial Hamiltonian
\eqref{hamiltonianSc} and \eqref{hamiltonianSs}, and regard it as the weight to
adjust the estimate of \eqref{domainX} or resample. This is not feasible if the LDA model is going to be used in an online setting. Therefore, we use $\phi(\vect\theta)$ in \eqref{potentialX}, as opposed to $U(\vect\theta)$ to avoid the re-weighting step. 

To illustrate our proposed method, we consider the toy example discussed above. For this problem, \cite{patterson13} propose a Riemannian Langevin Dynamics (RLD) method, but use an expanded-mean parametrization to map the simplex to the whole space. As mentioned above, this approach (i.e., expanding the parameter space) might not be efficient in general. This is illustrated in Figure \ref{fig:D-M}. Here, we set
$\alpha=0.5$ and run RMW, WallHMC, RLD, and SphLMC\footnote{Note, the natural gradient in \eqref{evolv} to
update $\tilde{\bf v}$ is $\begin{bmatrix}{\bf I}_{K-1}\\-\tp{\vect\theta}_{-K}/\theta_K\end{bmatrix}{\bf G}_{\sqrt\Delta}(\vect\theta)^{-1}\nabla_{\vect\theta_{-K}}\phi(\vect\theta_{-K})
=[({\bf n}+\alpha-0.5)/\vect\theta-\vect\theta*|{\bf n}+\alpha-0.5|]/(2n)$.} 
for $1.1\times 10^5$ iterations; we discard the first $10^4$ samples. As we can see in Figure \ref{fig:D-M},
compared to alternative algorithms, our SphLMC method provides better probability estimates (left panel). Further, SphLMC generates samples with a substantially lower autocorrelation (right panel).

%In the above example\footnote{Note, the natural gradient in \eqref{evolv} to
%update $\tilde{\bf v}$ is
%$\begin{bmatrix}{\bf I}_{K-1}\\-\tp{\vect\theta}_{-K}/\theta_K\end{bmatrix}{\bf G}_{\sqrt\Delta}(\vect\theta)^{-1}\nabla_{\vect\theta_{-K}}\phi(\vect\theta_{-K})
%=[({\bf n}+\alpha-0.5)/\vect\theta-\vect\theta*|{\bf n}+\alpha-0.5|]/(2n)$.}, we set
%$\alpha=0.5$ and run RMWt, WallHMC, RLD, and SphLMC for $1.1\times 10^5$
%iterations and burn in the first $10^4$. Unlike \cite{patterson13}, we do not thin samples. 
%From the results summarized in Figure \ref{fig:D-M},
%one can tell that RWM and WallHMC may not converge as they don't give correct
%estimates. All the algorithms but SphLMC give highly autocorrelated samples
%which yield estimates with large variance.

% We might want to move this to Discussion.
%We close the section with some comment on the limitation of SphLMC. More specifically, some
%choice of the prior ($\mathrm{Dir}(\alpha)$ with $\alpha<1$), e.g. $\alpha=0.1$,
%may result in a very high peak near the boundary of the simplex. This
%sometimes creates difficulty for SphLMC to explore the peak (mode) region
%because the gradient of energy may contain some extremely large component near the
%boundary. Such `boundary peak' issue could be alleviated by large scaling factor
%$n$ in the metric ${\bf G}_{\sqrt\Delta}(\vect\theta)$.

%%%%%%%%%%%%%%%%%%%%%%%%%%%%%%%%%%%%%%%%%%%%%%%%%%%%%%%%%%%%%%%%%%%%%%%%%%%%%%%%%%%%%

\section{Experimental results} \label{results}
In this section, we evaluate our proposed methods using simulated and real data. To this end, we compare their efficiency to that of RWM, Wall HMC, exact HMC \citep{pakman13}, and the Riemannian Langevin dynamics (RLD) algorithm proposed by \cite{patterson13} for LDA. We define efficiency in terms of time-normalized effective sample size (ESS). Given $N$ MCMC samples, for each parameter, we define $\mathrm{ESS}=N[1 + 2\Sigma_{k=1}^{K}\rho(k)]^{-1}$, where $\rho(k)$ is sample autocorrelation with lag $k$ \citep{geyer92}. We use the minimum ESS normalized by the CPU time, s (in seconds), as the overall measure of efficiency: $\min(\textrm{ESS})/\textrm{s}$. All computer codes are available online at \url{http://www.ics.uci.edu/~slan/SphHMC}.

\subsection{Truncated Multivariate Gaussian}
For illustration purpose, we start with a truncated bivariate Gaussian distribution,
\begin{equation*}
\binom{\beta_1}{\beta_2} \sim \mathcal N\left(\bf{0}, \begin{bmatrix} 1& .5\\ .5 & 1\end{bmatrix} \right), \qquad
0\leq \beta_1\leq 5, \quad 0\leq \beta_2\leq 1
\end{equation*}
This is box type constraint with the lower and upper limits as ${\bf l}=(0,0)$ and ${\bf u}=(5,1)$ respectively.
%The original rectangle domain can be mapped to the 2-dimensional unit sphere through the following transformation:
%\begin{equation*}
%T:\, [0,5]\times [0,1] \rightarrow \mathcal S^2, \quad \vect\beta  \mapsto \vect\beta' = (2\vect\beta-({\bf u}+{\bf l}))/({\bf u}-{\bf l}),\;
%\mapsto \vect\theta= \vect\beta' \frac{\Vert \vect\beta'\Vert_{\infty}}{\Vert \vect\beta'\Vert_2} \mapsto \tilde{\vect\theta} = \left(\vect\theta,\sqrt{1-\Vert
%\vect\theta\Vert_2^2}\right)
%\end{equation*}
The original rectangle domain can be mapped to 2d unit disc $\mathcal B_{\bf 0}^2(1)$ to use c-SphHMC, 
or mapped to 2d rectangle $\mathcal R_{\bf 0}^2$ where s-SphHMC can be directly applied.

\begin{figure}[t]
\begin{center}
\includegraphics[width=.9\textwidth,height=0.45\textwidth]{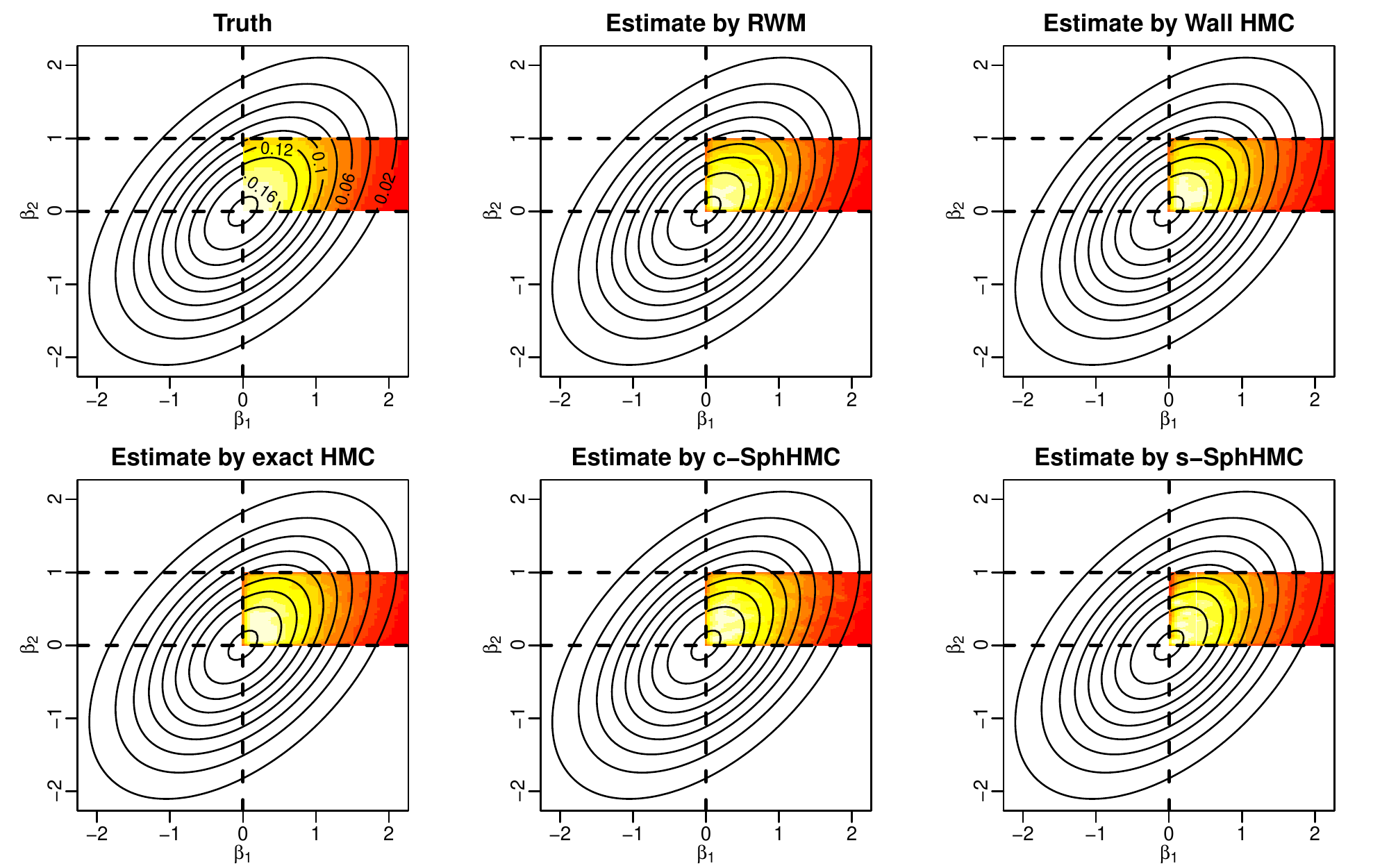}
\caption[Truncated Multivariate Gaussian]{Density plots of a truncated bivariate Gaussian using exact density function (upper leftmost) and MCMC samples from RWM, Wall HMC, exact HMC, c-SphHMC and s-SphHMC respectively.}
\label{fig:TMG-sph}
\end{center}
\end{figure}

The upper leftmost panel of Figure \ref{fig:TMG-sph} shows the heatmap based on the exact density function, and the other panels show the corresponding heatmaps based on MCMC samples from RWM, Wall HMC, exact HMC, c-SphHMC and s-SphHMC respectively. Table \ref{TMG-moments} compares the true mean and covariance of the above
truncated bivariate Gaussian distribution with the point estimates using $2\times 10^5$
($2\times 10^4$ for each of 10 repeated experiments with different random seeds) % comment it if you don't like it, and you can choose the alternative table as follows.
MCMC samples in each method. Overall, all methods estimate the mean and covariance reasonably well.

%%%% comment this table if you want to a simpler presentation below %%%%

% latex table generated in R 3.2.0 by xtable 1.7-4 package
% Wed May 20 18:05:22 2015
\begin{table}[ht]
\centering
\begin{tabular}{l|c|c}
  \hline
Method & Mean & Covariance \\ 
  \hline
Truth & $\begin{bmatrix}0.7906\\0.4889\end{bmatrix}$ & $\begin{bmatrix}0.3269&0.0172\\0.0172&0.08\end{bmatrix}$ \\ 
   \hline
RWM & $\begin{bmatrix}0.7796\pm0.0088\\0.4889\pm0.0034\end{bmatrix}$ & $\begin{bmatrix}0.3214\pm0.009&0.0158\pm0.001\\0.0158\pm0.001&0.0798\pm5e-04\end{bmatrix}$ \\ 
   \hline
Wall HMC & $\begin{bmatrix}0.7875\pm0.0049\\0.4884\pm8e-04\end{bmatrix}$ & $\begin{bmatrix}0.3242\pm0.0043&0.017\pm0.001\\0.017\pm0.001&0.08\pm3e-04\end{bmatrix}$ \\ 
   \hline
exact HMC & $\begin{bmatrix}0.7909\pm0.0025\\0.4885\pm0.001\end{bmatrix}$ & $\begin{bmatrix}0.3272\pm0.0026&0.0174\pm7e-04\\0.0174\pm7e-04&0.08\pm3e-04\end{bmatrix}$ \\ 
   \hline
c-SphHMC & $\begin{bmatrix}0.79\pm0.005\\0.4864\pm0.0016\end{bmatrix}$ & $\begin{bmatrix}0.3249\pm0.0045&0.0172\pm0.0012\\0.0172\pm0.0012&0.0801\pm0.001\end{bmatrix}$ \\ 
   \hline
s-SphHMC & $\begin{bmatrix}0.7935\pm0.0093\\0.4852\pm0.003\end{bmatrix}$ & $\begin{bmatrix}0.3233\pm0.0062&0.0202\pm0.0018\\0.0202\pm0.0018&0.0791\pm9e-04\end{bmatrix}$ \\ 
   \hline
\end{tabular}
\caption{Comparing the point estimates for the mean and covariance of a bivariate truncated Gaussian distribution using RWM, Wall HMC, exact HMC, c-SphHMC and s-SphHMC.} 
\label{TMG-moments}
\end{table}

%%%% alternative presentation of the table, please uncomment %%%%

%\begin{table}[ht]
%\centering
%\begin{tabular}{l|c|c}
%  \hline
%Method & Mean & Covariance \\ 
%  \hline
%Truth & $\begin{pmatrix}0.7906\\0.4889\end{pmatrix}$ & $\begin{pmatrix}0.3269&0.0172\\0.0172&0.08\end{pmatrix}$ \\ [9pt]
%   \hline
%RWM & $\begin{pmatrix}0.7796\\0.4889\end{pmatrix}$ & $\begin{pmatrix}0.3214&0.0158\\0.0158&0.0798\end{pmatrix}$ \\  [9pt]
%   \hline
%Wall HMC & $\begin{pmatrix}0.7875\\0.4884\end{pmatrix}$ & $\begin{pmatrix}0.3242&0.017\\0.017&0.08\end{pmatrix}$ \\  [9pt]
%   \hline
%exact HMC & $\begin{pmatrix}0.7909\\0.4885\end{pmatrix}$ & $\begin{pmatrix}0.3272&0.0174\\0.0174&0.08\end{pmatrix}$ \\  [9pt]
%   \hline
%c-SphHMC & $\begin{pmatrix}0.79\\0.4864\end{pmatrix}$ & $\begin{pmatrix}0.3249&0.0172\\0.0172&0.0801\end{pmatrix}$ \\  [9pt]
%   \hline
%s-SphHMC & $\begin{pmatrix}0.7935\\0.4852\end{pmatrix}$ & $\begin{pmatrix}0.3233&0.0202\\0.0202&0.07914\end{pmatrix}$ \\  [9pt]
%   \hline
%\end{tabular}
%\caption{Comparing the point estimates of mean and covariance matrix of a bivariate truncated Gaussian distribution using RWM, Wall HMC, exact HMC, c-SphHMC and s-SphHMC.} 
%\label{TMG-moments}
%\end{table}

To evaluate the efficiency of the above-mentioned methods, we repeat this experiment for higher dimensions, $D=10$, and $D=100$. As before, we set the mean to zero and set the $(i,j)$-th element of the covariance matrix to 
$\Sigma_{ij}=1/(1+|i-j|)$. Further, we impose the following constraints on the parameters,
\begin{equation*}
0\leq \beta_i\leq u_{i}
\end{equation*}
where $u_{i}$ (i.e., the upper bound) is set to 5 when $i=1$; otherwise, it is set to $0.5$.

For each method, we obtain $10^5$ MCMC samples after discarding the initial
$10^4$ samples. We set the tuning parameters of algorithms such that their
overall acceptance rates are within a reasonable range. As shown in Table
\ref{TMG-eff}, Spherical HMC algorithms are substantially more efficient than RWM and Wall
HMC. For RWM, the proposed states are rejected about $95\%$ of times due to
violation of the constraints. On average, Wall HMC bounces off the wall around 3.81
($L=2$) and 6.19 ($L=5$) times per iteration for $D=10$ and $D=100$ respectively.
Exact HMC is quite efficient for relatively low dimensional truncated Gaussian ($D=10$);
however it becomes very slow for higher dimensions ($D=100$). In contrast, by augmenting the parameter space, Spherical HMC algorithms handle the constraints in a more efficient way. Since s-SphHMC is more suited for box type constraints, it is substantially more efficient than c-SphHMC in this example.

% latex table generated in R 3.2.0 by xtable 1.7-4 package
% Wed May 20 18:05:22 2015
\begin{table}[ht]
\centering
\begin{tabular}{l|l|ccccc}
  \hline
Dim & Method & AP & s/iter & ESS(min,med,max) & Min(ESS)/s & spdup \\ 
  \hline
 & RWM & 0.62 & 5.72E-05 & (48,691,736) & 7.58 & 1.00 \\ 
   & Wall HMC & 0.83 & 1.19E-04 & (31904,86275,87311) & 2441.72 & 322.33 \\ 
  D= 10 & exact HMC & 1.00 & 7.60E-05 & (1e+05,1e+05,1e+05) & 11960.29 & 1578.87 \\ 
   & c-SphHMC & 0.82 & 2.53E-04 & (62658,85570,86295) & 2253.32 & 297.46 \\ 
   & s-SphHMC & 0.79 & 2.02E-04 & (76088,1e+05,1e+05) & 3429.56 & 452.73 \\ 
   \hline
 & RWM & 0.81 & 5.45E-04 & (1,4,54) & 0.01 & 1.00 \\ 
   & Wall HMC & 0.74 & 2.23E-03 & (17777,52909,55713) & 72.45 & 5130.21 \\ 
  D= 100 & exact HMC & 1.00 & 4.65E-02 & (97963,1e+05,1e+05) & 19.16 & 1356.64 \\ 
   & c-SphHMC & 0.73 & 3.45E-03 & (55667,68585,72850) & 146.75 & 10390.94 \\ 
   & s-SphHMC & 0.87 & 2.30E-03 & (74476,99670,1e+05) & 294.31 & 20839.43 \\ 
   \hline
\end{tabular}
\caption{Comparing the efficiency of RWM, Wall HMC, exact HMC, c-SphHMC and s-SphHMC in terms of sampling from truncated Gaussian distributions. AP is acceptance probability, s/iter is seconds per iteration, ESS(min,med,max) is the (minimum, median, maximum) effective sample size, and Min(ESS)/s is the time-normalized minimum ESS.} 
\label{TMG-eff}
\end{table}

\begin{figure}[t]
\begin{center}
\includegraphics[width=.8\textwidth,height=0.4\textwidth]{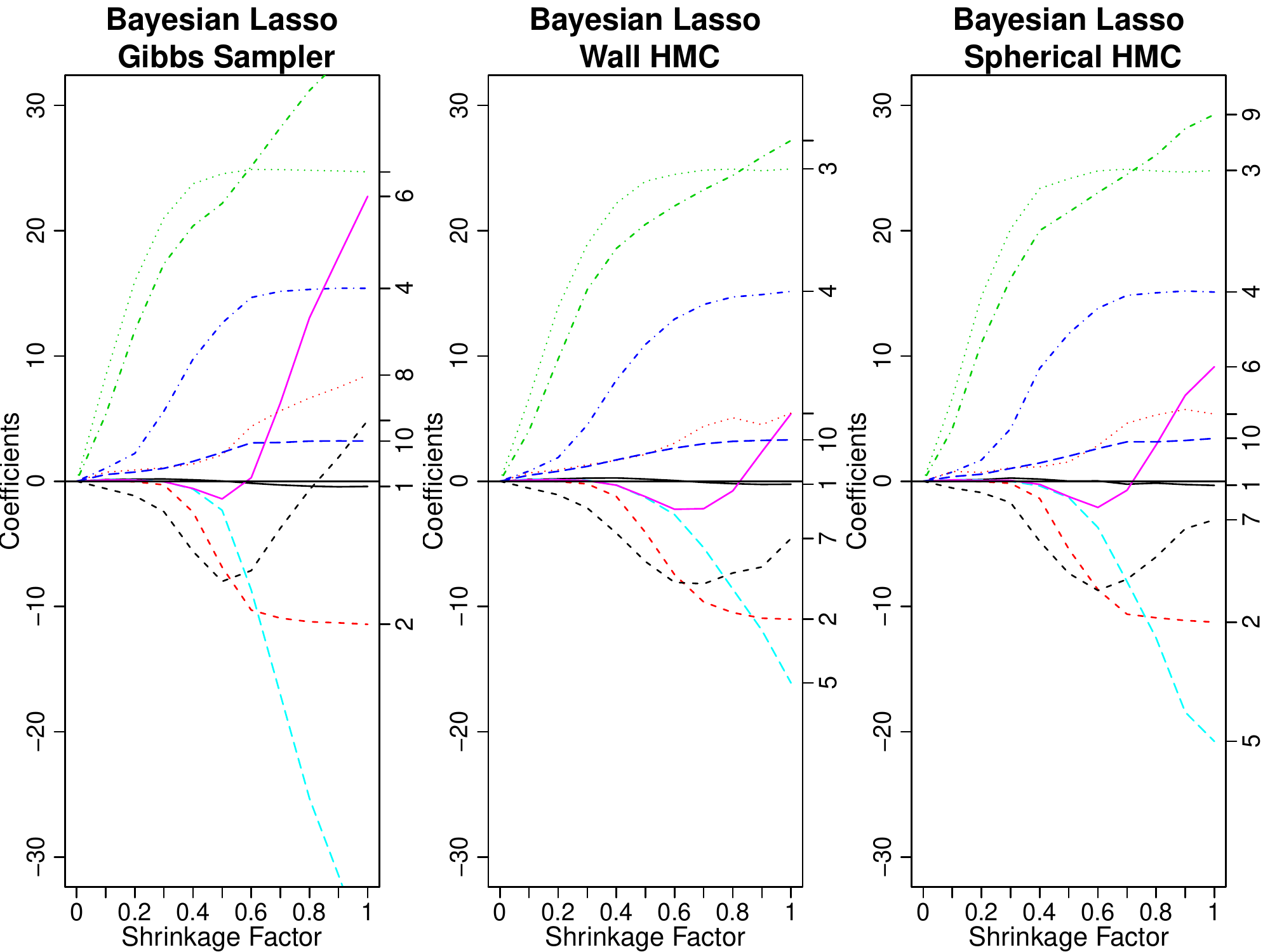}
\end{center}
\caption[Bayesian Lasso using different sampling algorithms]{Bayesian Lasso using three different sampling algorithms: Gibbs sampler (left), Wall HMC (middle) and Spherical HMC (right).}
\label{fig:lasso}
\end{figure}

\subsection{Bayesian Lasso}
In regression analysis, overly complex models tend to overfit the data. Regularized regression models control complexity by imposing a penalty on model parameters. By far, the most popular model in this group is \emph{Lasso} (least absolute shrinkage and selection operator) proposed by \cite{tibshirani96}. In this approach, the coefficients are obtained by minimizing the residual sum of squares (RSS) subject to a constraint on the magnitude of regression coefficients,
\begin{equation}
\min_{\Vert\beta\Vert_1\leq t} \mathrm{RSS}(\beta),  \qquad \mathrm{RSS}(\beta) := \sum_{i} (y_{i} - \beta_{0} - \tp{x}_{i} \beta)^2
\end{equation}
One could estimate the parameters by solving the following optimization problem:
\begin{equation}
\min_{\beta, \lambda} \mathrm{RSS}(\beta)  + \lambda \Vert\beta\Vert_1
\end{equation}
where $\lambda \ge 0$ is the regularization parameter. \cite{park08} and \cite{hans09} have proposed a Bayesian alternative method, called Bayesian Lasso, where the penalty term is replaced by a prior distribution of the form $P(\beta) \propto \exp(-\lambda |\beta|) $, which can be represented as a scale mixture of normal distributions \citep{west87}. This leads to a hierarchical Bayesian model with full conditional conjugacy; therefore, the Gibbs sampler can be used for inference. 

Our proposed spherical augmentation in this paper can directly handle the constraints in Lasso
models. That is, we can \emph{conveniently} use Gaussian priors for model
parameters, $\beta|\sigma^2 \sim \mathcal N(0,\sigma^2 I)$,  and let the sampler \emph{automatically} handle the constraint.
In particular, c-SphHMC can be used to sample posterior distribution of $\beta$ with the 1-norm constraint.
For this problem, we modify the Wall HMC algorithm, which was originally proposed for box type constraints \citep{neal11}.
See Appendix \ref{walldiamond} for more details.

\begin{figure}[t]
\begin{center}
\includegraphics[width=.8\textwidth,height=0.4\textwidth]{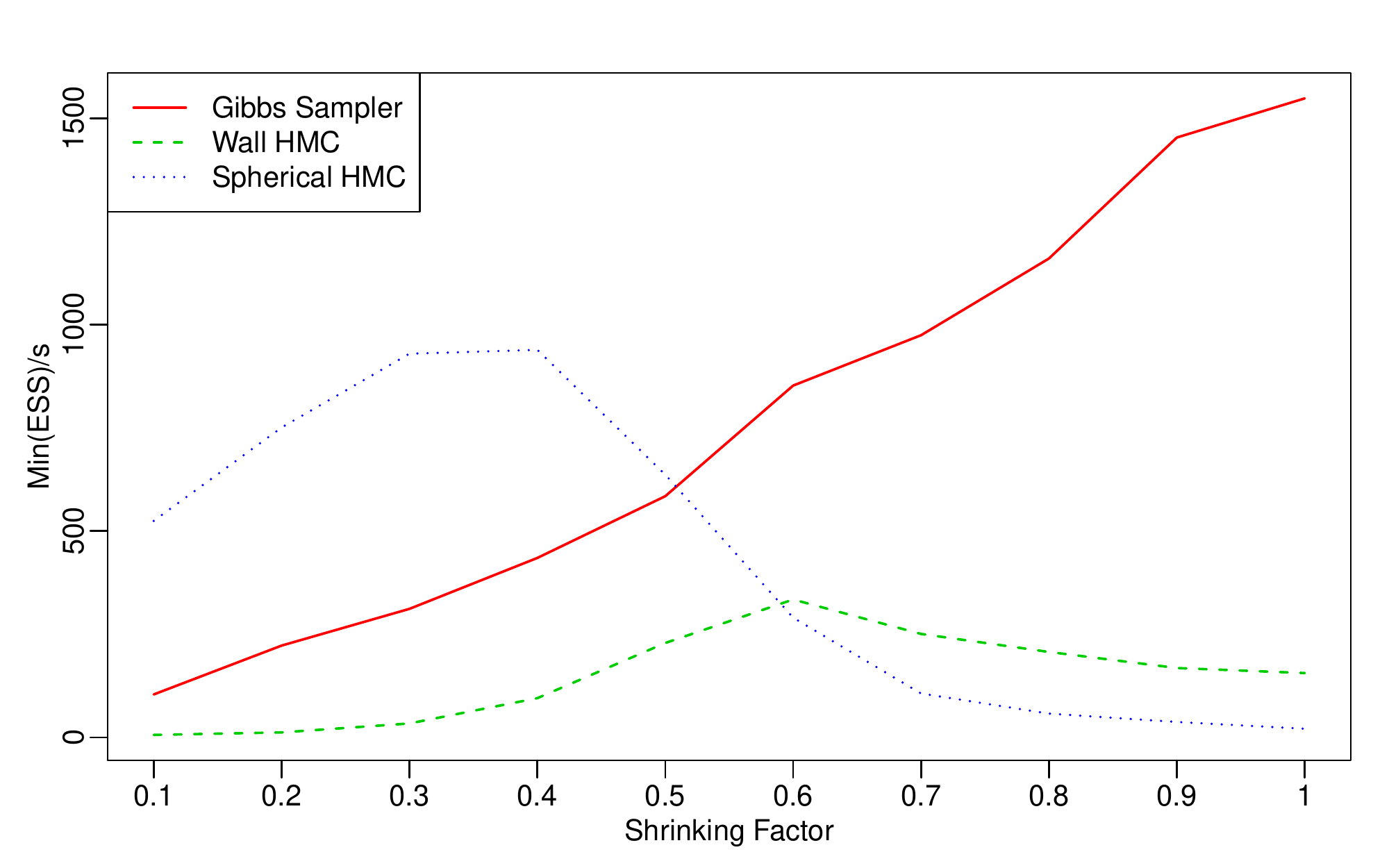}
\end{center}
\caption[Sampling Efficiency in Bayesian Lasso]{Sampling efficiency of different algorithms for Bayesian Lasso based on the diabetes dataset.}
\label{fig:bridgeff}
\end{figure}

We evaluate our method based on the diabetes data set (N=442, D=10) discussed in \cite{park08}. 
Figure \ref{fig:lasso} compares coefficient estimates given by the Gibbs sampler \citep{park08}, Wall HMC, and Spherical HMC respectively as the shrinkage factor $s:=\Vert \hat\beta^{\textrm{Lasso}}\Vert_1/\Vert \hat\beta^{\textrm{OLS}}\Vert_1$ changes from 0 to 1. Here, $\hat\beta^{\textrm{OLS}}$ denotes the estimates obtained by ordinary least squares (OLS) regression. For the Gibbs sampler, we choose different $\lambda$ so that the corresponding shrinkage factor $s$ varies from 0 to 1. For Wall HMC and Spherical HMC, we fix the number of leapfrog steps to 10 and set the trajectory length such that they both have comparable acceptance rates around 70\%.

Figure \ref{fig:bridgeff} compares the sampling efficiency of these three methods. As we impose tighter constraints (i.e., lower shrinkage factors $s$), Spherical HMC becomes substantially more efficient than the Gibbs sampler and Wall HMC.

\subsection{Bridge regression}
The Lasso model discussed in the previous section is in fact a member of a family of regression models called \emph{Bridge regression} \citep{frank93}, where the coefficients are obtained by minimizing the residual sum of squares subject to a constraint on the magnitude of regression coefficients as follows: 
\begin{equation}
\min_{\Vert\beta\Vert_q\leq t} \mathrm{RSS}(\beta),  \qquad \mathrm{RSS}(\beta) := \sum_{i} (y_{i} - \beta_{0} - \tp{x}_{i} \beta)^2
\end{equation}
For Lasso, $q=1$, which allows the model to force some of the coefficients to become exactly zero (i.e., become excluded from the model).
When $q=2$, this model is known as \emph{ridge regression}. Bridge regression is more flexible by allowing different $q$ norm constraints for different effects on shrinking the magnitude of parameters (See Figure \ref{fig:bridge}).

%As mentioned earlier, our Spherical HMC method can easily handle this type of constraints through the following transformation:
%\begin{equation*}
%T: \mathcal Q^D \rightarrow \mathcal S^D, \quad \beta_i\mapsto \beta_i' = \beta_i/ t \mapsto \theta_i= \mathrm{sgn}(\beta_i') |\beta_i'|^{q/2},\; \vect\theta \mapsto \tilde{\vect\theta} = \left(\vect\theta,\sqrt{1-\Vert \vect\theta\Vert_2^2}\right)
%\end{equation*}

While the Gibbs sampler method of \cite{park08} and \cite{hans09} is limited to Lasso, our approach can be applied to all bridge regression models with different $q$. To handle the general $q$-norm constraint, one can map the constrained domain to the unit ball by \eqref{q2b} and apply c-SphHMC. Figure \ref{fig:bridge} compares the parameter estimates of Bayesian Lasso to the estimates obtained from two Bridge regression models with $q=1.2$ and $q=0.8$ for the diabetes dataset \citep{park08} using our Spherical HMC algorithm. As expected, tighter constraints (e.g., $q=0.8$) would lead to faster shrinkage of regression parameters as we decrease $s$.

\begin{figure}[t]
\begin{center}
\includegraphics[width=.8\textwidth,height=0.4\textwidth]{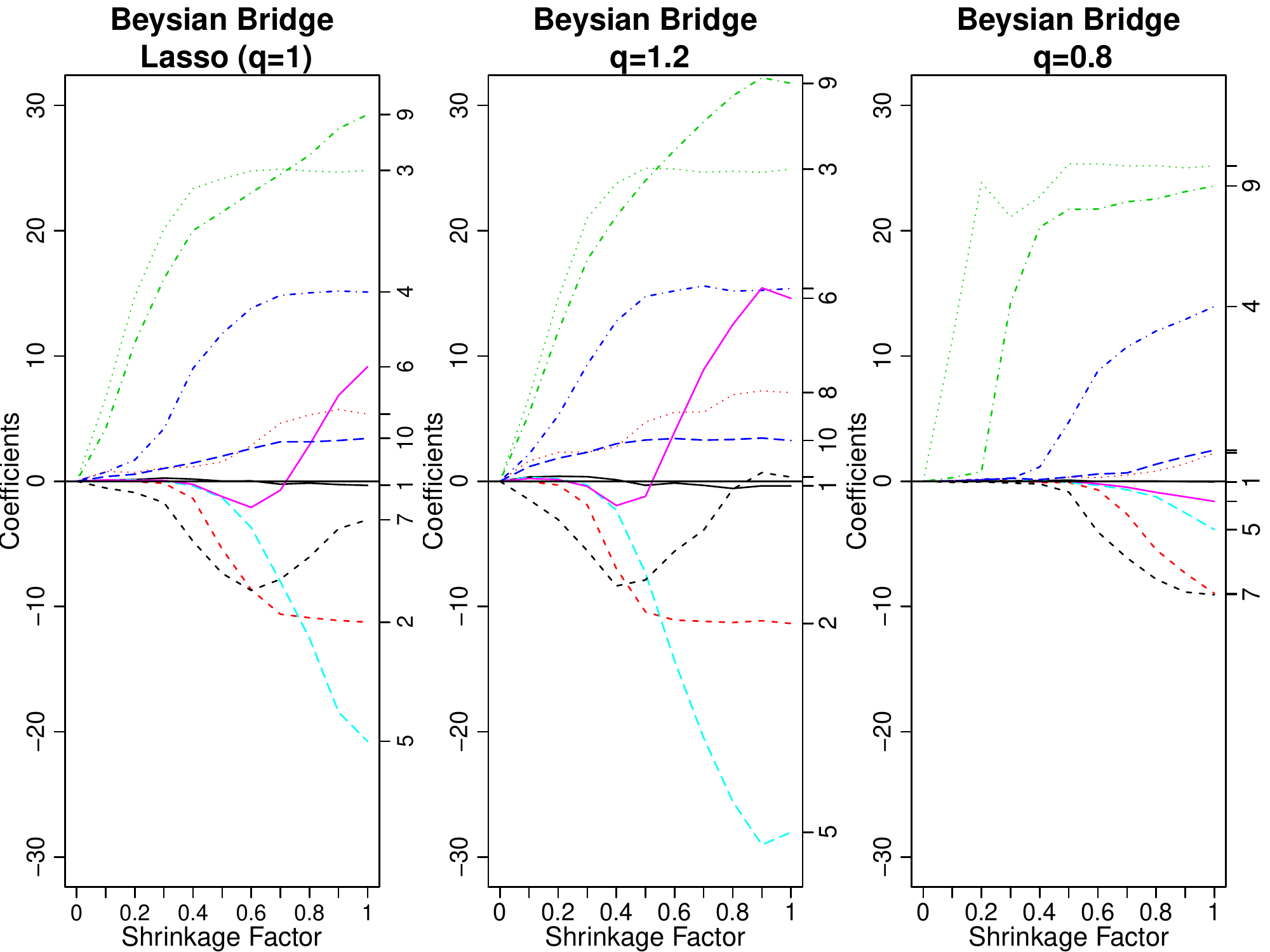}
\end{center}
\caption[Bayesian Bridge Regression by Spherical HMC]{Bayesian Bridge Regression by Spherical HMC: Lasso (q=1, left), q=1.2 (middle), and q=0.8 (right).}
\label{fig:bridge}
\end{figure}

\subsection{Reconstruction of quantized stationary Gaussian process}

We now investigate the example of reconstructing quantized stationary Gaussian
process discussed in \cite{pakman13}. Suppose we are given $N$ values of a
function $f(x_i), i=1,\cdots,N$, which takes discrete values from $\{q_k\}_{k=1}^K$.
We assume that this is a quantized projection of a sample $y(x_i)$ from a
stationary Gaussian process with a known translation-invariant covariance kernel
of the form $\Sigma_{ij}=K(|x_i-x_j|)$, and the quantization follows a known
rule of the form
\begin{equation}\label{trunct}
f(x_i) = q_k, \quad \textrm{if} \; z_k\leq y(x_i) < z_{k+1}
\end{equation}
The objective is to sample from the posterior distribution
\begin{equation}
p(y(x_1),\cdots,y(x_N)|f(x_1),\cdots,f(x_N)) \sim \mathcal N(0,\Sigma) \quad \textrm{trunctated\, by\, rule\, \eqref{trunct}}
\end{equation}
In this example, the function is sampled from a Gaussian process with the
following kernel
\begin{equation*}
K(|x_i-x_j|)=\sigma^2\exp\left\{-\frac{|x_i-x_j|^2}{2\eta^2}\right\},\quad \sigma^2=0.6,\; \eta^2=0.2
\end{equation*}
We sample $N=100$ points of $\{y(x_i)\}$ and quantize them with
\begin{equation*}
q_1=-0.75,\,q_2=-0.25,\,q_3=0.25,\,q_4=0.75,\;\; z_1=-\infty,\,z_2=-0.5,\,z_3=0,\,z_4=0.5,\,z_5=+\infty
\end{equation*}
This example involves two types of constraints: box type (two sided) constraints and one sided
constraints. In implementing our Spherical HMC algorithms, we transform the subspace formed by components
with both finite lower and upper limits into unit ball and map the subspace formed by components 
with one sided constraints to the whole space using absolute value (discussed at the end of Section \ref{SA}).

\begin{figure}[t]
\begin{center}
\includegraphics[width=.9\textwidth,height=0.45\textwidth]{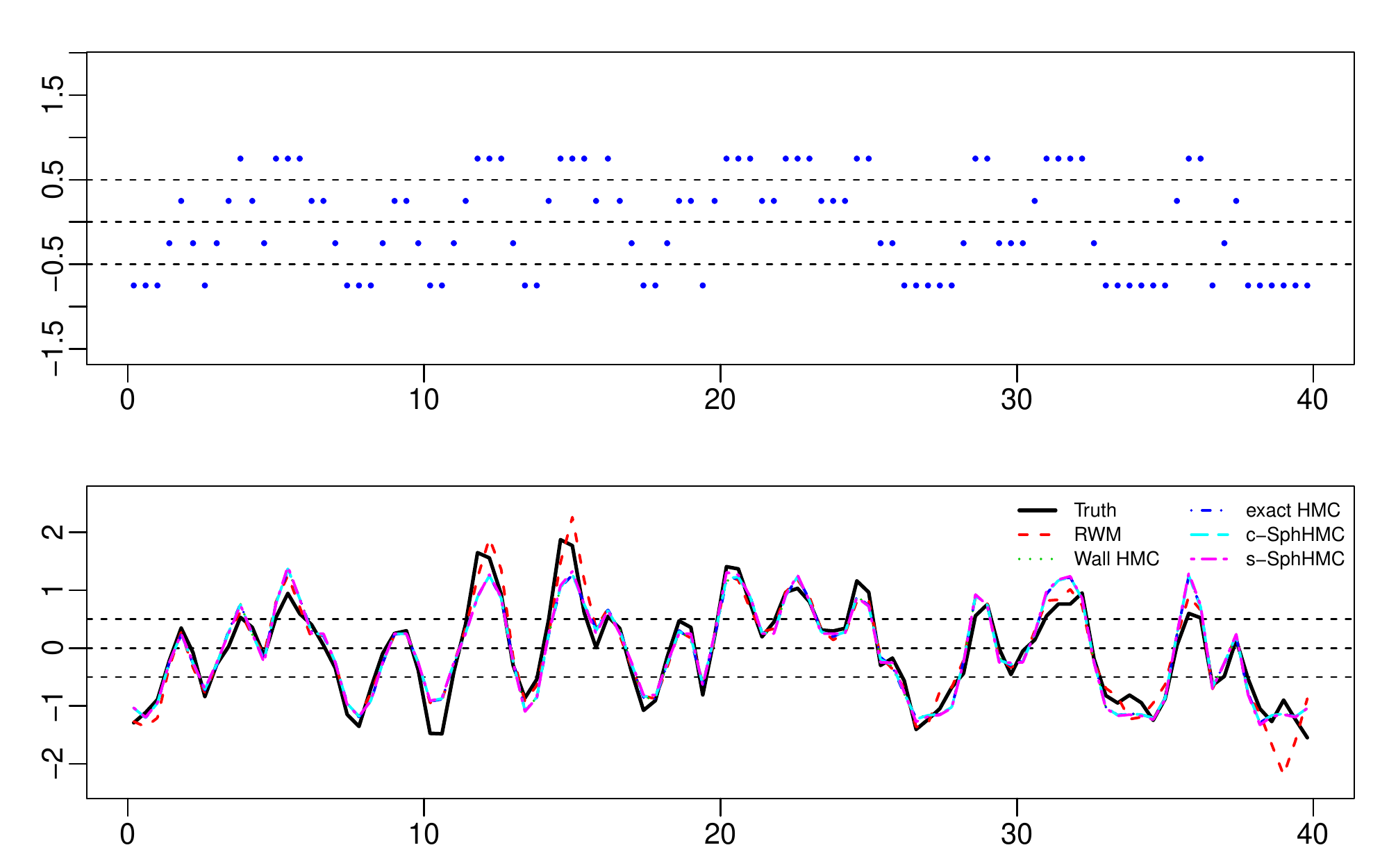}
\caption{Quantized stationary Gaussian process (upper) and the estimates of the
process (lower).
\label{fig:qsGP_est}}
\end{center}
\end{figure}

Figure \ref{fig:qsGP_est} shows the quantized Gaussian process (upper) and the
estimates (lower) with $10^5$ samples given by different MCMC algorithms. Overall, all the methods recover the truth well. Table \ref{qsGP-eff} summarizes the efficiency of sampling $1.1\times 10^5$ and
burning the first $10^4$ with RWM, Wall HMC, exact HMC, c-SphHMC and s-SphHMC. Exact
HMC generates more effective samples but takes much longer time even though
implemented in C. Spherical HMC algorithms outperform it in terms of time
normalized ESS. Interestingly, Wall HMC performs well in this example, even
better than exact HMC and c-SphHMC.

% latex table generated in R 3.2.0 by xtable 1.7-4 package
% Wed May 20 18:05:22 2015
\begin{table}[ht]
\centering
\begin{tabular}{l|ccccc}
  \hline
Method & AP & s/iter & ESS(min,med,max) & Min(ESS)/s & spdup \\ 
  \hline
RWM & 0.70 & 7.11E-05 & (2,9,35) & 0.22 & 1.00 \\ 
  Wall HMC & 0.69 & 9.94E-04 & (12564,24317,43876) & 114.92 & 534.48 \\ 
  exact HMC & 1.00 & 1.00E-02 & (72074,1e+05,1e+05) & 65.31 & 303.76 \\ 
  c-SphHMC & 0.72 & 1.73E-03 & (13029,26021,56445) & 68.44 & 318.32 \\ 
  s-SphHMC & 0.80 & 1.09E-03 & (14422,31182,81948) & 120.59 & 560.86 \\ 
   \hline
\end{tabular}
\caption{Comparing efficiency of RWM, Wall HMC, exact HMC, c-SphHMC and s-SphHMC in reconstructing a quantized stationary Gaussian process. AP is acceptance probability, s/iter is seconds per iteration, ESS(min,med,max) is the (minimal,median,maximal) effective sample size, and Min(ESS)/s is the minimal ESS per second.} 
\label{qsGP-eff}
\end{table}

% Maybe move to the discussion section.
%We want to comment that, as dimension increases, those MCMC algorithms relying
%on Metropolis correction will eventually suffer from the drop of acceptance due
%to the `curse of dimensionality'. To battle this, Spherical HMC algorithms are
%forced to take more leap frog steps with smaller step size, which may undermine
%their advantage over exact HMC. However, exact HMC generally has to spend 
%more time on solving wall hitting time and bouncing off from the boundary in higher dimensional space.

\begin{figure}[t]
\begin{center}
\includegraphics[width=1\textwidth,height=.6\textwidth]{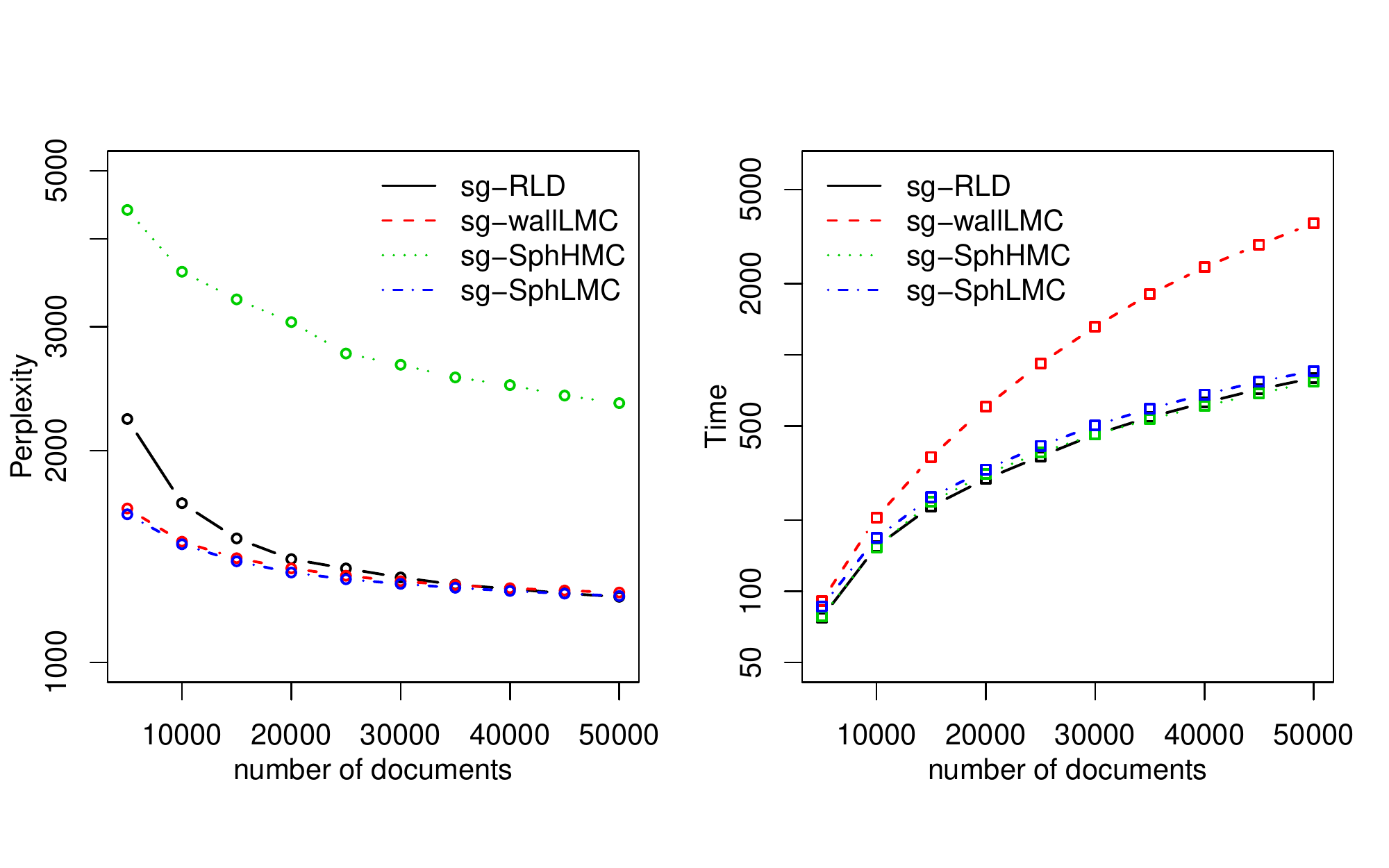}
\caption[Test-set perplexities and computation time]{Test-set perplexity and computation time (in log scale) based on the Wikipedia corpus.}
\label{fig:LDA-wiki}
\end{center}
\end{figure}

\subsection{LDA on Wikipedia corpus}

LDA \citep{blei03} is a popular hierarchical Bayesian model for topic modeling.
The model consists of $K$ topics with probabilities $\{\pi_k\}$ drawn from a symmetric Dirichlet prior
$\mathrm{Dir}(\beta)$. A document $d$ is modeled by a mixture of topics, with
mixing proportions $\eta_d\sim \mathrm{Dir}(\alpha)$. Document $d$ is assumed to be generated by
i.i.d. sampling of a topic assignment, $z_{di}$, from $\eta_d$ for each word $w_{di}$
in the document, and then drawing the word $w_{di}$ from the assigned topic with probability 
$\pi_{z_{di}}$ \citep{patterson13}. \cite{teh06} integrate out $\eta$ analytically to obtain the following semi-collapsed
distribution:
\begin{equation}
p(w,z,\pi|\alpha,\beta) = \prod_{d=1}^D\frac{\Gamma(K\alpha)}{\Gamma(K\alpha+n_{d\cdot\cdot})} \prod_{k=1}^K\frac{\Gamma(\alpha+n_{dk\cdot})}{\Gamma(\alpha)} \prod_{k=1}^K\frac{\Gamma(W\beta)}{\Gamma(\beta)^W} \prod_{w=1}^W \pi_{kw}^{\beta+n_{\cdot kw}-1}
\end{equation}
where $n_{dkw}=\sum_{i=1}^{N_d}\delta(w_{di}=w,z_{di}=k)$. Here, ``$\cdot$'' denotes
the summation over the corresponding index. Given $\pi$, the documents
are i.i.d so the above equation can be factorized as follows \citep{patterson13}:
\begin{equation}
p(w,z,\pi|\alpha,\beta) = p(\pi|\beta) \prod_{d=1}^D p(w_d,z_d|\alpha,\pi), \quad p(w_d,z_d|\alpha,\pi) = \prod_{k=1}^K\frac{\Gamma(\alpha+n_{dk\cdot})}{\Gamma(\alpha)} \prod_{w=1}^W \pi_{kw}^{n_{dkw}}
\end{equation}

To evaluate our proposed methods, we compare them with the state-of-the-art method of \cite{patterson13}. Their approach, called stochastic gradient Riemannian Langevin dynamics (sg-RLD) is an extension of the stochastic gradient Langevin dynamics (SGLD) proposed by \cite{welling11}. Because this approach uses mini-batches of data to approximate the gradient and omits the accept/reject step of Metropolis-Hastings while decreasing the step size, we follow the same procedure to make our methods comparable. Further, because Langevin dynamics can be regarded as a single step Hamiltonian dynamics \citep{neal11}, we set $L=1$. We refer the resulting algorithms as sg-SphHMC and sg-SphLMC, which are modified versions of our SphHMC and SphLMC algorithms. sg-SphLMC uses the following stochastic (natural) gradient (gradient preconditioned with metric)
\begin{equation}\label{sg4LMC}
g_{kw} = [(n^*_{kw} + \beta -1/2)/\theta_{kw} + \theta_{kw}(n^*_{k\cdot}+W(\beta-1/2))]/(2*n^*_{k\cdot}), \quad n^*_{kw} = \frac{|D|}{|D_t|}\sum_{d\in D_t}\E_{z_d|w_d,\theta,\alpha}[n_{dkw}]
\end{equation}
where $1/2$ comes from the logarithm of volume adjustment.
In contrast, the stochastic gradient for sg-SphHMC is $4g_{kw}n^*_{k\cdot}$ (See Section \ref{SphLMC}).
The expectation in Equation \eqref{sg4LMC} is
calculated using Gibbs sampling on the topic assignment in each document
separately, given the conditional distributions \citep{patterson13}
\begin{equation}
p(z_{di}=k|w_d,\theta,\alpha) = \frac{(\alpha+n^{\backslash i}_{dk\cdot})\pi_{kw_{di}}}{\sum_k (\alpha+n^{\backslash i}_{dk\cdot})\pi_{kw_{di}}}
\end{equation}
where $\backslash i$ means a count excluding the topic assignment variable
currently being updated. Step size is decreased according to $\eps_t=a(1+t/b)^{-c}$.

We use perplexity \citep{patterson13,wallach09} to compare the predictive performance of different methods in terms of the
probability they assign to unseen data, 
\begin{equation}
\mathrm{perp}(w_d|\mathcal W,\alpha,\beta) = \exp\left\{-\sum_{i=1}^{n_{d\cdot\cdot}}\log p(w_{di}|\mathcal W,\alpha,\beta)/n_{d\cdot\cdot}\right\}, \quad p(w_{di}|\mathcal W,\alpha,\beta) = \E_{\eta_d,\pi}[\sum_{k} \eta_{dk}\pi_{kw_{di}}]
\end{equation}
where $\mathcal W$ is the training set and $w_d$ is the hold-out sample. More specifically, we use the document completion approach \citep{wallach09}, which
partitions the test document $w_d$ into two sets, $w_d^{\textrm{train}}$ and $w_d^{\textrm{test}}$; we then use $w_d^{\textrm{train}}$ to estimate $n_d$ for the
test document and use $w_d^{\textrm{test}}$ to calculate perplexity.
%More specifically, it is estimated with samples $\{\pi\}$ by sg-RLD/sg-SphLMC,
%and samples for $z_d^{\textrm{train}}$ by Gibbs sampling the topic assignments
%on $w_d^{\textrm{train}}$ \citep{patterson13}
%\begin{equation}
%p(w_{di}|w_d^{\textrm{train}},\mathcal W,\alpha,\beta) = \E_{\pi|\mathcal W,\beta}[\E_{z_d^{\textrm{train}}|\pi,\alpha}[\sum_{k} \hat\eta_{dk}\pi_{kw_{di}}]],\quad \hat\eta_{dk} = p(z_{di}^{\textrm{test}}=k|z_d^{\textrm{train}},\alpha) = \frac{n_{dk\cdot}^{}\textrm{train}+\alpha}{n_{d\cdot\cdot}^{\textrm{train}}+K\alpha}
%\end{equation}

We train the model online using 50000 documents randomly downloaded from Wikipedia with the vocabulary of approximately 8000 words created from Project Gutenburg texts \citep{hoffman10}. The perplexity is evaluated on 1000 held-out documents. A mini-batch of 50 documents is used for updating the natural gradient for 4 algorithms:
sg-RLD, sg-wallLMC\footnote{The stochastic gradient for sg-wallLMC is $[(n^*_{kw} + \beta -1/2) + \pi_{kw}(n^*_{k\cdot}+W(\beta-1/2))]/n^*_{k\cdot}$},
sg-SphHMC%\footnote{The stochastic gradient for SphHMC is $4g_{kw}n^*_{k\cdot}$, where $g_{kw}$ is the stochastic gradient \eqref{sg4LMC}.}
 and sg-SphLMC.

Figure \ref{fig:LDA-wiki} compares the above methods in terms of their perplexities. For each method, we show the best performance over different settings (Settings for best performance are listed in Table \ref{LDA-parset}.). Both sg-wallLMC and sg-SphLMC have lower perplexity than sg-RLD at early stage, when relatively a small number of documents are used for training; as the number of training documents increases, the methods reach the same level of performance. As expected, sg-SphHMC does not perform well due to the absence of a proper scaling provided by the Fisher metric.
% However, sg-wallLMC takes excessively more time than others to complete each run due to the high frequent boundary hitting. 
%Parameter settings are listed in table \ref{LDA-parset}.
% latex table generated in R 3.2.0 by xtable 1.7-4 package
% Thu May 21 15:54:48 2015
%\begin{table}[ht]
%\centering
%\begin{tabular}{l|ccccccc}
%  \hline
%Algorithm & a & b & c & $\alpha$ & $\beta$ & K & Gibbs samples \\ 
%  \hline
% & 0.01 & 1000 & 0.6 & 0.01 & 0.0001 & 100 & 100 \\ 
%  sg-RLD & 0.01 & 1000 & 0.6 & 0.01 & 0.1000 & 100 & 100 \\ 
%   & 0.01 & 1000 & 0.6 & 0.01 & 0.5000 & 100 & 100 \\ 
%   \hline
% & 0.20 & 1000 & 2.0 & 0.01 & 0.5000 & 100 & 100 \\ 
%  sg-wallLMC & 0.20 & 1000 & 2.0 & 0.01 & 0.1000 & 100 & 100 \\ 
%   & 0.20 & 1000 & 2.2 & 0.01 & 0.0100 & 100 & 100 \\ 
%   \hline
% & 0.10 & 1000 & 0.6 & 0.01 & 0.0100 & 100 & 100 \\ 
%  sg-SphHMC & 0.01 & 1000 & 0.6 & 0.01 & 0.0100 & 100 & 100 \\ 
%   & 0.10 & 1000 & 1.0 & 0.01 & 0.5000 & 100 & 100 \\ 
%   \hline
% & 0.25 & 1000 & 1.5 & 0.01 & 0.5000 & 100 & 100 \\ 
%  sg-SphLMC & 0.25 & 1000 & 2.0 & 0.01 & 0.5000 & 100 & 100 \\ 
%   & 0.20 & 1000 & 1.5 & 0.01 & 0.5000 & 100 & 100 \\ 
%   \hline
%\end{tabular}
%\caption{Parameter settings for Wikipedia experiment.} 
%\label{LDA-parset}
%\end{table}

%%%% Parameter setting for the best performance %%%%
\begin{table}[ht]
\centering
\begin{tabular}{l|ccccccc}
  \hline
Algorithm & a & b & c & $\alpha$ & $\beta$ & K & Gibbs samples \\ 
  \hline
 sg-RLD & 0.01 & 1000 & 0.6 & 0.01 & 0.5000 & 100 & 100 \\ 
   \hline
sg-wallLMC & 0.20 & 1000 & 2.0 & 0.01 & 0.5000 & 100 & 100 \\ 
   \hline
  sg-SphHMC & 0.01 & 1000 & 0.6 & 0.01 & 0.0100 & 100 & 100 \\ 
   \hline
 sg-SphLMC & 0.25 & 1000 & 1.5 & 0.01 & 0.5000 & 100 & 100 \\ 
   \hline
\end{tabular}
\caption{Parameter settings for best performance in Wikipedia experiment.} 
\label{LDA-parset}
\end{table}

\section{Discussion} \label{discussion}

We have introduced a new approach, \emph{spherical augmentation}, for sampling
from constrained probability distributions. This method maps the constrained
domain to a sphere in an augmented space. Sampling algorithms can freely explore the surface of sphere to generate samples that remain within the constrained domain when mapped back to the original space. This way, our proposed method provides a mathematically natural and computationally efficient framework that
can be applied to a wide range of statistical inference problems with norm constraints. 

The augmentation approach proposed here is based on the change of variables theorem. We augment the
original $D$-dimensional space with one extra dimension by either inserting slack variables (c-SphHMC) or using embedding map (s-SphHMC),
%extending the parameter vector and auxiliary variables with one more component each. As a result, an extra kinetic energy $\frac12 v_{D+1}^2$
%is added to the total energy \eqref{Hamiltonian}. 
The augmented Hamiltonian is
the same under different representations \eqref{augHamiltonianIc}\eqref{augHamiltonianIs} due to the mathematical fact
that the energy is invariant to the choice of coordinates (Proposition \ref{enginv}).
To account for the change of geometry, a volume adjustment term needs to be used, either as a weight after obtaining all the samples (SphHMC) or as an added term to the total energy (SphLMC).

Our proposed method takes advantage of the splitting strategy to further improve computational efficiency. We split the Lagrangian dynamics and update velocity in the tangent space, rather
than momentum in the cotangent space. This implementation avoids the
requirement of embedding as in \cite{byrne13} and could be applied to
more general situations.

In developing Spherical HMC, we start with the standard HMC, using the Euclidean
metric ${\bf I}$ on unit ball ${\bf B}^D_0(1)$. Then, spherical geometry is introduced
to handle constraints. One possible future direction could be to directly start with RHMC/LMC, which use a more informative
metric (i.e., the Fisher metric ${\bf G_F}$), and then incorporate the spherical geometry
for the constraints. For example, a possible metric for the augmented space
could be ${\bf G_F} + \vect\theta\tp{\vect\theta}/\theta_{D+1}^2$. However,
under such a metric, we might not be able to find the geodesic flow
analytically, which could undermine the added benefit from using the Fisher
metric.

In future, we also intend to explore the possibility of applying the spherical augmentation to Elliptical Slice sampler \citep{murray10} in order to generalize it to Spherical Slice sampler (SSS). The resulting algorithm can be applied to truncated Gaussian process models. In general, we can extend our proposed methods to infinite
dimensional function spaces. This would involve the infinite dimensional manifold $\mathcal S^{\infty}:=\{f\in
L^2(\Omega)|\int f^2 d\mu=1\}$. In this setting it is crucial to ensure that the acceptance probability does not drop quickly
as dimension increases \citep{beskos11}. 

%It is also interesting to develop tune-free algorithms for spherical HMC \citep{hoffman11}.

% Acknowledgements should go at the end, before appendices and references

%\acks{
%We would like to thank Jeffrey Streets, Max Welling, and Alexander Ihler for helpful discussion. 
%SL is supported by XXX,
%BS is supported by NSF grant IIS-1216045 and NIH grant R01-AI107034.} 

%%%%%%%%%%%%%%%%%%%%%%%%%%%%%%%%%%%%%%%%%%%%%%%%%%%%%%%%%%%%%%%%%%%%%%%%%%%%%%%%%%%%%

\newpage

%%%% Appendix %%%%%
\appendix
%\section*{Appendix}

\section{Spherical Geometry}\label{geomS}
We first discuss the geometry of the $D$-dimensional sphere $\mathcal S^D:=\{\tilde{\vect\theta}
\in \mathbb R^{D+1}: \Vert \tilde{\vect\theta}\Vert_2= 1\}\hookrightarrow
\mathbb R^{D+1}$ under different coordinate systems, namely, the \emph{Cartesian coordinate} and the \emph{spherical coordinate}.
Since $\mathcal S^D$ can be embedded (injectively and differentiably mapped to)
in $\mathbb R^{D+1}$, we first introduce the concept of `induced metric'.
\begin{dfn}[induced metric]
If $\mathcal D^{d}$ can be embedded to $\mathcal M^{m}$ $(m>d)$ by $f: U\subset
\mathcal D\hookrightarrow\mathcal M$, then one can define the \emph{induced metric},
$g_{\mathcal D}$, on $T\mathcal D$ through the metric $g_{\mathcal M}$ defined on $T\mathcal M$:
\begin{equation}
g_{\mathcal D}(\vect\theta)({\bf u},{\bf v})=g_{\mathcal M}(f(\vect\theta))(df_{\vect\theta}({\bf u}),df_{\vect\theta}({\bf v})),\quad {\bf u},{\bf v}\in T_{\vect\theta}\mathcal D
\end{equation}
\end{dfn}
\begin{rk}
For any $f: U\subset \mathcal S^D\hookrightarrow\mathbb R^{D+1}$, we can define
the induced metric through dot product on $\mathbb R^{D+1}$. More specifically,
\begin{equation}
g_{\mathcal S}({\bf u},{\bf v}) = \tp{[(Df){\bf u}]} (Df){\bf v} = \tp{\bf u} [\tp{(Df)} (Df)] {\bf v}
\end{equation}
where $(Df)_{(D+1)\times D}$ is the Jacobian matrix of the mapping $f$.
A Metric induced from dot product on Euclidean space is called
a ``canonical metric''.
This observation leads to the following simple fact that lays down the
foundation of Spherical HMC.
\end{rk}

\begin{prop}[Energy invariance]\label{enginv}
Kinetic energy $\frac12\langle {\bf v},{\bf v}\rangle_{{\bf G}(\vect\theta)}$ is
invariant to the choice of coordinate systems.
\end{prop}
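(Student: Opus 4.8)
The plan is to unwind the definition of the kinetic energy and use the fact, recorded in the Remark above, that on $\mathcal S^D$ every metric we work with is the \emph{canonical} one induced by the Euclidean dot product of the ambient $\mathbb R^{D+1}$. Concretely, let $f:U\subset\mathbb R^D\to\mathcal S^D\hookrightarrow\mathbb R^{D+1}$ be any coordinate chart --- for instance the Cartesian chart $T_{\mathcal B\to\mathcal S}$ of \eqref{b2s} or the spherical chart $T_{\mathcal R_{\bf 0}\to\mathcal S}$ of \eqref{r2s} --- so that by the Remark ${\bf G}(\vect\theta)=\tp{(Df)}(Df)$, where $(Df)=(Df)_{\vect\theta}$ is the $(D+1)\times D$ Jacobian. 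Given a velocity ${\bf v}\in T_{\vect\theta}U$, realize it as $\dot{\vect\theta}(0)$ for a curve $\vect\theta(t)$ through $\vect\theta$; its image ${\bf x}(t):=f(\vect\theta(t))$ is a curve on the sphere and the chain rule gives $\dot{\bf x}(0)=(Df){\bf v}$.

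Then I would simply compute
\begin{equation}
\tfrac12\langle{\bf v},{\bf v}\rangle_{{\bf G}(\vect\theta)}=\tfrac12\tp{\bf v}\,\tp{(Df)}(Df)\,{\bf v}=\tfrac12\tp{\dot{\bf x}}\dot{\bf x}=\tfrac12\Vert\dot{\bf x}\Vert_2^2 .
\end{equation}
The right-hand side refers only to the point ${\bf x}\in\mathcal S^D$ and the ambient vector $\dot{\bf x}\in T_{\bf x}\mathcal S^D\subset\mathbb R^{D+1}$, neither of which mentions a chart; hence the kinetic energy is a well-defined function on the tangent bundle $T\mathcal S^D$, i.e.\ invariant under a change of coordinates. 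Spelled out: if $f_1,f_2$ are two charts and $\psi=f_2^{-1}\circ f_1$ is the transition map, then $\vect\theta_2=\psi(\vect\theta_1)$ forces ${\bf v}_2=(D\psi){\bf v}_1$, and since $(Df_2)(D\psi)=(Df_1)$ both charts produce the same $\dot{\bf x}$ and therefore the same value $\tfrac12\Vert\dot{\bf x}\Vert_2^2$. Taking $f=T_{\mathcal B\to\mathcal S}$ recovers $\tp{\bf v}{\bf G}_{\mathcal S_c}(\vect\theta){\bf v}=\tp{\tilde{\bf v}}\tilde{\bf v}$ with $\tilde{\bf v}=\dot{\tilde{\vect\theta}}$ and $v_{D+1}=-\tp{\vect\theta}{\bf v}/\theta_{D+1}$, while taking $f=T_{\mathcal R_{\bf 0}\to\mathcal S}$ recovers $\tp{\bf v}{\bf G}_{\mathcal S_r}(\vect\theta){\bf v}=\tp{\dot{\bf x}}\dot{\bf x}$ with $\dot{\bf x}$ given by \eqref{dr2s} --- exactly the two identities used in \eqref{augHamiltonianIc} and \eqref{augHamiltonianIs}.

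The argument is essentially definitional, so I do not anticipate a genuine obstacle. The one point needing care is conceptual rather than computational: one must make explicit that ${\bf v}$ is a \emph{tangent vector}, hence transforms contravariantly under the transition map, which is precisely what cancels the covariant transformation of ${\bf G}$ and leaves the quadratic form unchanged. A secondary bookkeeping check is that in each concrete chart the realized ambient velocity lies in $T_{\bf x}\mathcal S^D=\{{\bf w}\in\mathbb R^{D+1}:\tp{\bf x}{\bf w}=0\}$ --- i.e.\ the relations $\tp{\tilde{\vect\theta}}\tilde{\bf v}=0$ and $\tp{\bf x}\dot{\bf x}=0$ already noted in the text --- so that the displayed computation is consistent with the kinetic energies written there.
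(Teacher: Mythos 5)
Your argument is correct and follows essentially the same route as the paper's own proof: realize ${\bf v}$ as the velocity of a curve, push it forward by the embedding, and observe that the induced metric turns the kinetic energy into $\tfrac12\Vert\dot{\bf x}\Vert_2^2$, a chart-independent quantity. The extra detail you supply on the transition map and the tangency checks is consistent with, but not beyond, what the paper's one-line computation already establishes.
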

\begin{proof}
For any ${\bf v}\in T_{\vect\theta}\mathcal D$, suppose $\vect\theta(t)$ such
that $\dot{\vect\theta}(0)={\bf v}$. Denote the pushforward of ${\bf v}$ by
embedding map $f:\mathcal D\to \mathcal M$ as $\tilde{\bf v}:=f_*({\bf
v})=\frac{d}{dt}(f\circ\vect\theta)(0)$. Then we have
\begin{equation}
\frac12\langle {\bf v},{\bf v}\rangle_{{\bf G}(\vect\theta)} = \frac12 g_{\mathcal M}(f(\vect\theta))(\tilde{\bf v},\tilde{\bf v})
\end{equation}
That is, regardless of the form of the energy under a coordinate system, its value is the same as the one in the embedded manifold. In particular, when $\mathcal M=\mathbb R^{D+1}$, the right hand side simplifies to $\frac12 \Vert\tilde{\bf v}\Vert_2^2$.
\end{proof}

\subsection{Canonical metric in the Cartesian coordinate}\label{Metc}
Now consider the $D$-dimensional ball $\mathcal B_{\bf 0}^D(1):=\{\vect\theta\in \mathbb
R^D: \Vert \vect\theta\Vert_2\leq 1\}$. Here, $\{\vect\theta, \mathcal B_{\bf 0}^D(1)\}$ can be viewed as the Cartesian
coordinate system for ${\mathcal S}^D$.
The coordinate mapping $T_{\mathcal B\to\mathcal
S_+}:\vect\theta\mapsto\tilde{\vect\theta}=(\vect\theta,\theta_{D+1})$ in
\eqref{b2s} can be viewed as the embedding map into $\mathbb R^{D+1}$, and the
Jacobian matrix of $T_{\mathcal B\to\mathcal S_+}$ is $dT_{\mathcal B\to\mathcal S_+}=\frac{d\tilde{\vect\theta}}{d\tp{\vect\theta}}=\begin{bmatrix}{\bf I}_{D}\\-\tp{\vect\theta}/\theta_{D+1}\end{bmatrix}$.
Therefore the \emph{canonical metric} of ${\mathcal S}^D$ in the Cartesian
coordinate, ${\bf G}_{\mathcal S_c}(\vect\theta)$, is
\begin{equation}\label{GSc}
{\bf G}_{\mathcal S_c}(\vect\theta) = \tp{dT}_{\mathcal B\to\mathcal S_+} dT_{\mathcal B\to\mathcal S_+} = {\bf I}_D + \frac{\vect\theta \tp{\vect\theta}}{\theta_{D+1}^2} = {\bf I}_D + \frac{\vect\theta \tp{\vect\theta}}{1-\Vert \vect\theta\Vert_2^2}
\end{equation}

Another way to obtain the metric is through the first fundamental form $ds^{2}$
(i.e., squared infinitesimal length of a curve) for $\mathcal S^D$, which can be expressed in terms of the differential form $d\vect\theta$ and the canonical metric ${\bf G}_{\mathcal S_c}(\vect\theta)$,
\begin{equation*}
ds^{2} = \langle d\vect\theta, d\vect\theta \rangle_{{\bf G}_{\mathcal S_c}} = d\tp{\vect\theta}{\bf G}_{\mathcal S_c}(\vect\theta)d\vect\theta
\end{equation*}
On the other hand, $ds^{2}$ can also be obtained as follows \citep{spivak79-1}:
\begin{equation*}
ds^2 = \sum_{i=1}^{D+1} d\theta_i^2 = \sum_{i=1}^D d\theta_i^2 + (d(\theta_{D+1}(\vect\theta)))^2
= d\tp{\vect\theta} d\vect\theta + \frac{(\tp{\vect\theta} d\vect\theta)^2}{1-\Vert \vect\theta\Vert_2^2} = d\tp{\vect\theta} [{\bf I} + \vect\theta \tp{\vect\theta}/\theta_{D+1}^2] d\vect\theta
\end{equation*}
Equating the above two quantities yields the form of the canonical metric ${\bf
G}_{\mathcal S_c}(\vect\theta)$ as in Equation \eqref{GSc}.
This viewpoint provides a natural way to explain the length of tangent
vector. For any vector $\tilde{\bf v}=({\bf v},v_{D+1})\in
T_{\tilde{\vect\theta}}{\mathcal S}^D=\{\tilde {\bf v}\in \mathbb R^{D+1}:
\tp{\tilde{\vect\theta}}\tilde{\bf v}=0\}$, one could think of ${\bf
G}_{\mathcal S_c}(\vect\theta)$ as a mean to express the length of $\tilde{\bf
v}$ in terms of ${\bf v}$,
\begin{equation}
\tp{\bf v}{\bf G}_{\mathcal S_c}(\vect\theta){\bf v} = \Vert {\bf v}\Vert_2^2 + \frac{\tp{\bf v}\vect\theta \tp{\vect\theta} {\bf v}}{\theta_{D+1}^2} = \Vert {\bf v}\Vert_2^2 + \frac{
(-\theta_{D+1} v_{D+1})^2}{\theta_{D+1}^2} = \Vert {\bf v}\Vert_2^2 + v_{D+1}^2 = \Vert \tilde{\bf v}\Vert_2^2
\end{equation}
This indeed verifies the energy invariance Proposition \ref{enginv}.

The following proposition provides the analytic forms of the determinant and
the inverse of ${\bf G}_{\mathcal S_c}(\vect\theta)$.
\begin{prop}
The determinant and the inverse of the \emph{canonical metric} are as
follows
\begin{equation}
|{\bf G}_{\mathcal S_c}(\vect\theta)| = \theta_{D+1}^{-2}, \qquad {\bf G}_{\mathcal S_c}(\vect\theta)^{-1} = {\bf I}_D - \vect\theta \tp{\vect\theta}
\end{equation}
\end{prop}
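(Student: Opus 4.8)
The plan is to treat ${\bf G}_{\mathcal S_c}(\vect\theta) = {\bf I}_D + \vect\theta\tp{\vect\theta}/\theta_{D+1}^2$ as a rank-one perturbation of the identity and to use throughout the defining relation $\theta_{D+1}^2 = 1-\Vert\vect\theta\Vert_2^2$, i.e. $\Vert\vect\theta\Vert_2^2 + \theta_{D+1}^2 = 1$. Note first that both claims presuppose $\theta_{D+1}\neq 0$, equivalently $\Vert\vect\theta\Vert_2 < 1$, so the chart is the open ball; the equator $\theta_{D+1}=0$ is precisely where the canonical metric degenerates, and I would state this hypothesis at the outset.

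For the determinant, I would apply the matrix determinant lemma $\det({\bf I}_D + {\bf a}\tp{\bf b}) = 1 + \tp{\bf b}{\bf a}$ with ${\bf a} = \vect\theta/\theta_{D+1}^2$ and ${\bf b} = \vect\theta$, giving $|{\bf G}_{\mathcal S_c}(\vect\theta)| = 1 + \Vert\vect\theta\Vert_2^2/\theta_{D+1}^2 = (\theta_{D+1}^2 + \Vert\vect\theta\Vert_2^2)/\theta_{D+1}^2 = \theta_{D+1}^{-2}$. Equivalently, one can observe that ${\bf G}_{\mathcal S_c}(\vect\theta)$ acts as the identity on the hyperplane $\vect\theta^\perp$ and multiplies $\vect\theta$ by $1 + \Vert\vect\theta\Vert_2^2/\theta_{D+1}^2 = \theta_{D+1}^{-2}$, so its eigenvalues are $1$ with multiplicity $D-1$ and $\theta_{D+1}^{-2}$ with multiplicity one, whence the determinant.

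For the inverse, rather than quoting the Sherman--Morrison formula I would verify the claimed expression directly, since this exposes exactly where the constraint enters: a short computation gives $({\bf I}_D + \vect\theta\tp{\vect\theta}/\theta_{D+1}^2)({\bf I}_D - \vect\theta\tp{\vect\theta}) = {\bf I}_D + \vect\theta\tp{\vect\theta}\bigl[-1 + 1/\theta_{D+1}^2 - \Vert\vect\theta\Vert_2^2/\theta_{D+1}^2\bigr] = {\bf I}_D + \vect\theta\tp{\vect\theta}\bigl[-1 + (1-\Vert\vect\theta\Vert_2^2)/\theta_{D+1}^2\bigr] = {\bf I}_D$, because the bracketed scalar vanishes exactly by $1 - \Vert\vect\theta\Vert_2^2 = \theta_{D+1}^2$. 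Both factors are symmetric, so the product in the opposite order is identical, and ${\bf I}_D - \vect\theta\tp{\vect\theta}$ is therefore the two-sided inverse.

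There is no real obstacle here: the argument is a routine rank-one matrix identity, and the only thing requiring any care is recording the standing assumption $\theta_{D+1}\neq 0$ and keeping the bookkeeping of the constraint $\Vert\vect\theta\Vert_2^2 + \theta_{D+1}^2 = 1$ straight when simplifying the scalar coefficients.
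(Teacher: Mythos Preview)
Your proposal is correct and follows essentially the same approach as the paper: the determinant is computed via the matrix determinant lemma exactly as in the paper, and for the inverse the paper quotes the Sherman--Morrison--Woodbury formula while you verify the product directly, which amounts to the same one-line rank-one identity. Your added remark about the standing hypothesis $\theta_{D+1}\neq 0$ and the optional eigenvalue reading of the determinant are clean supplements but not substantive departures.
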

\begin{proof}
The determinant of the canonical metric ${\bf G}_{\mathcal
S_c}(\vect\theta)$ is given by the matrix determinant lemma,
\begin{equation*}
|{\bf G}_{\mathcal S_c}(\vect\theta)| = \det \left[ {\bf I}_D + \frac{\vect\theta \tp{\vect\theta}}{\theta_{D+1}^2}\right] = 1+ \frac{\tp{\vect\theta} \vect\theta}{\theta_{D+1}^2} =
\frac{1}{\theta_{D+1}^2}
\end{equation*}
The inverse of ${\bf G}_{\mathcal S_c}(\vect\theta)$ is obtained by the Sherman-Morrison-Woodbury formula \citep{golub96}
\begin{equation*}
{\bf G}_{\mathcal S_c}(\vect\theta)^{-1} = \left[ {\bf I}_D + \frac{\vect\theta \tp{\vect\theta}}{\theta_{D+1}^2} \right]^{-1} = {\bf I}_D - \frac{\vect\theta
\tp{\vect\theta}/\theta_{D+1}^2}{1+\tp{\vect\theta}\vect\theta/\theta_{D+1}^2} = {\bf I}_D - \vect\theta \tp{\vect\theta}
\end{equation*}
\end{proof}
\begin{cor}\label{voladjB2S}
The volume adjustment of changing measure in \eqref{domainB2S} is
\begin{equation}
\left|\frac{d\vect\theta_{\mathcal B}}{d\vect\theta_{\mathcal S_c}}\right|=|{\bf G}_{\mathcal S_c}(\vect\theta)|^{-\frac12}=|\theta_{D+1}|
\end{equation}
\end{cor}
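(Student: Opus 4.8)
\emph{Proof proposal.} The plan is to read the result off directly from the determinant formula in the preceding proposition, once the two measures are properly identified. First I would recall the general fact that, for any Riemannian metric ${\bf G}(\vect\theta)$ expressed in a coordinate chart, the associated Riemannian volume measure is $\sqrt{|{\bf G}(\vect\theta)|}\, d\vect\theta$, where $d\vect\theta = d\theta_1\cdots d\theta_D$ denotes Lebesgue measure on the chart. By construction, $d\vect\theta_{\mathcal S_c}$ is precisely this Riemannian volume for the canonical spherical metric ${\bf G}_{\mathcal S_c}(\vect\theta)$ in the Cartesian chart $\{\vect\theta,\mathcal B_{\bf 0}^D(1)\}$, whereas $d\vect\theta_{\mathcal B}$ is the Euclidean volume, i.e. the Riemannian volume of the flat metric ${\bf I}_D$ in the same chart, which is just $d\vect\theta$ itself. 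Equivalently, one can view $T_{\mathcal B\to\mathcal S_+}$ as an embedding into $\mathbb R^{D+1}$ and note that the area element of the embedded surface is $\sqrt{\det(\tp{dT}_{\mathcal B\to\mathcal S_+}\, dT_{\mathcal B\to\mathcal S_+})}\, d\vect\theta = \sqrt{|{\bf G}_{\mathcal S_c}(\vect\theta)|}\, d\vect\theta$ by \eqref{GSc}.

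Combining these, the Radon--Nikodym derivative relating the two measures on the chart is
\[
\left|\frac{d\vect\theta_{\mathcal B}}{d\vect\theta_{\mathcal S_c}}\right| = |{\bf G}_{\mathcal S_c}(\vect\theta)|^{-\frac12},
\]
and substituting the determinant $|{\bf G}_{\mathcal S_c}(\vect\theta)| = \theta_{D+1}^{-2}$ from the preceding proposition gives $|{\bf G}_{\mathcal S_c}(\vect\theta)|^{-1/2} = |\theta_{D+1}|$, which is the claimed identity; inserting it into \eqref{domainB2S} then justifies the stated change-of-variables weight.

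The only subtlety worth spelling out — and it is minor — is the absolute value: since $\theta_{D+1} = \pm\sqrt{1-\Vert\vect\theta\Vert_2^2}$ carries an undetermined sign (the lower and upper hemispheres are identified in \eqref{b2s}), one compares $d\vect\theta_{\mathcal B}$ and $d\vect\theta_{\mathcal S_c}$ as unsigned positive measures on $\mathcal B_{\bf 0}^D(1)$, so that the weight is $\sqrt{\det{\bf G}_{\mathcal S_c}}$ up to sign and the well-defined quantity $|\theta_{D+1}| = \sqrt{1-\Vert\vect\theta\Vert_2^2}$ is what appears. Beyond this bookkeeping I do not anticipate any real obstacle: the corollary is an immediate consequence of the determinant computation.
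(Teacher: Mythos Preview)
Your proposal is correct and mirrors the paper's own argument: both identify $d\vect\theta_{\mathcal S_c}$ as the Riemannian volume $\sqrt{|{\bf G}_{\mathcal S_c}(\vect\theta)|}\,d\vect\theta_{\mathcal B}$ in the chart, take the ratio, and substitute the determinant $|{\bf G}_{\mathcal S_c}(\vect\theta)|=\theta_{D+1}^{-2}$ from the preceding proposition. The only cosmetic difference is that the paper frames the existence of the canonical measure via the Riesz representation theorem before writing down the same integral identity.
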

\begin{proof}
Canonical measure can be defined through the Riesz representation theorem by using a
positive linear functional on the space $C_0(\mathcal S^D)$ of compactly
supported continuous functions on $\mathcal S^D$ \citep{spivak79-1,docarmo92}.
More precisely, there is a unique positive Borel measure $\mu_c$ such that for
(any) coordinate chart $(\mathcal B_{\bf 0}^D(1),T_{\mathcal B\to\mathcal S_+})$,
\begin{equation*}
\int_{\mathcal S_+^D} f(\tilde{\vect\theta}) d\vect\theta_{\mathcal S_c} = \int_{\mathcal B_{\bf 0}^D(1)} f(\vect\theta) \sqrt{|{\bf G}_{\mathcal S_c}(\vect\theta)|} d\vect\theta_{\mathcal B}
\end{equation*}
where $\mu_c=d\vect\theta_{\mathcal S_c}$, and $d\vect\theta_{\mathcal B}$ is the
Euclidean measure. Therefore we have
\begin{equation*}
\left|\frac{d\vect\theta_{\mathcal S_c}}{d\vect\theta_{\mathcal B}}\right| =
|{\bf G}_{\mathcal S_c}(\vect\theta)|^{\frac12} = |\theta_{D+1}|^{-1}
\end{equation*}
Alternatively, $\left|\frac{d\vect\theta_{\mathcal B}}{d\vect\theta_{\mathcal S_c}}\right|=|\theta_{D+1}|$.
\end{proof}

\subsection{Geodesic on a sphere in the Cartesian coordinate}\label{GEODS}
To find the geodesic on a sphere, we need to solve the following equations:
\begin{align}
\dot{\vect\theta} & = {\bf v} \label{geodS:1}\\
\dot{\bf v} & = -\tp{\bf v}\vect\Gamma_{\mathcal S_c}(\vect\theta){\bf v} \label{geodS:2}
\end{align}
for which we need to calculate the Christoffel symbols, $\vect\Gamma_{\mathcal S_c}(\vect\theta)$, first. Note that the $(i,j)$-th element of ${\bf G}_{\mathcal S_c}$ is $g_{ij} = \delta_{ij} +
\theta_i\theta_j/\theta_{D+1}^2$, and the $(i,j,k)$-th element of $d{\bf G}_{\mathcal S_c}$
is $g_{ij,k} = (\delta_{ik}\theta_j + \theta_i\delta_{jk})/\theta_{D+1}^2 + 2\theta_i\theta_j\theta_k/\theta_{D+1}^4$.
Therefore
\begin{equation*}
\begin{split}
\Gamma_{ij}^k &= \frac{1}{2}g^{kl}[g_{lj,i}+g_{il,j}-g_{ij,l}]\\
& = \frac{1}{2}(\delta^{kl}\!\!-\!\theta^k\theta^l)[(\delta_{li}\theta_j \!+\!
\theta_l\delta_{ji})/\theta_{D+1}^2 + (\delta_{ij}\theta_l +
\theta_i\delta_{lj})/\theta_{D+1}^2 - (\delta_{il}\theta_j +
\theta_i\delta_{jl})/\theta_{D+1}^2 + 2\theta_i\theta_j\theta_l/\theta_{D+1}^4]\\
& = (\delta^{kl}-\theta^k\theta^l)\theta_l/\theta_{D+1}^2[\delta_{ij}+\theta_i\theta_j/\theta_{D+1}^2]\\
& = \theta_k[\delta_{ij}+\theta_i\theta_j/\theta_{D+1}^2] = [{\bf G}_{\mathcal S_c}(\vect\theta)\otimes \vect\theta]_{ijk}
\end{split}
\end{equation*}
Using these results, we can write Equation \eqref{geodS:2} as $\dot{\bf v} = -\tp{\bf v} {\bf G}_{\mathcal S_c}(\vect\theta){\bf v}\vect\theta = -\Vert\tilde{\bf
v}\Vert_2^2\vect\theta$. Further, we have
\begin{alignat*}{4}
\dot \theta_{D+1} = & \frac{d}{dt} \sqrt{1-\Vert \vect\theta\Vert_{2}^{2}} && = -\frac{\tp{\vect\theta}}{\theta_{D+1}}\dot{\vect\theta} && = v_{D+1}\\
\dot v_{D+1} = &  -\frac{d}{dt} \frac{\tp{\vect\theta}{\bf v}}{\theta_{D+1}} && = -\frac{\tp{\dot{\vect\theta}} {\bf v}+ \tp{\vect\theta}\dot{\bf v}}{\theta_{D+1}} + \frac{\tp{\vect\theta}{\bf v}}{\theta_{D+1}^{2}}\dot\theta_{D+1} && = -\Vert\tilde{\bf v}\Vert_2^2\theta_{D+1}
\end{alignat*}
Therefore, we can rewrite the geodesic equations \eqref{geodS:1}\eqref{geodS:2} with augmented components as
\begin{align}
\dot{\tilde{\vect\theta}} & = \tilde{\bf v} \label{geodSx:1}\\
\dot{\tilde{\bf v}} & = -\Vert\tilde{\bf v}\Vert_2^2 \tilde{\vect\theta} \label{geodSx:2}
\end{align}
Multiplying both sides of Equation \eqref{geodSx:2} by $\tp{\tilde{\bf v}}$ to obtain $\frac{d}{dt}\Vert \tilde{\bf v}\Vert_2^2=0$, we can solve the above system of differential equations as follows: 
\begin{align*}
\tilde{\vect\theta}(t) & = \tilde{\vect\theta}(0) \cos(\Vert \tilde{\bf v}(0)\Vert_{2} t) + \frac{\tilde{\bf v}(0)}{\Vert \tilde{\bf v}(0)\Vert_{2}} \sin(\Vert \tilde{\bf v}(0)\Vert_{2} t)\\
\tilde{\bf v}(t) & = -\tilde{\vect\theta}(0) \Vert \tilde{\bf v}(0)\Vert_{2} \sin(\Vert \tilde{\bf v}(0)\Vert_{2} t) + \tilde{\bf v}(0) \cos(\Vert \tilde{\bf v}(0)\Vert_{2} t)
\end{align*}

\subsection{Round metric in the spherical coordinate}\label{Mets}
Consider the $D$-dimensional hyper-rectangle $\mathcal R_{\bf
0}^D:=[0,\pi]^{D-1}\times [0,2\pi)$ and the corresponding spherical coordinate
system, $\{\vect\theta, \mathcal R_{\bf 0}^D\}$, for $\mathcal S^D$.
The coordinate mapping $T_{\mathcal R_{\bf 0}\to \mathcal S}: \vect\theta\mapsto{\bf x},\;
x_d = \cos(\theta_d)\prod_{i=1}^{d-1}\sin(\theta_i),\, d=1,\cdots, D+1$,
($\theta_{D+1}=0$) can be viewed as the embedding map into $\mathbb R^{D+1}$, and the
Jacobian matrix of $T_{\mathcal R_{\bf 0}\to\mathcal S}$ is $\frac{d{\bf
x}}{d\tp{\vect\theta}}$ with the $(d,j)$-th element $[-\tan(\theta_d)\delta_{dj}+\cot(\theta_j)I(j<d)]x_d$.
The induced metric of ${\mathcal S}^D$ in the spherical coordinate is
called \emph{round metric}, denoted as ${\bf G}_{\mathcal S_r}(\vect\theta)$,
whose $(i,j)$-th element is as follows
\begin{equation}\label{GSs}
\begin{aligned}
&{\bf G}_{\mathcal S_r}(\vect\theta)_{ij}\\
&= \sum_{d=1}^{D+1} [-\tan(\theta_d)\delta_{di}+\cot(\theta_j)I(i<d)][-\tan(\theta_d)\delta_{dj}+\cot(\theta_j)I(j<d)]x_d^2\\
&= \tan^2(\theta_i)\delta_{ij}x_i^2 - \tan(\theta_i)\cot(\theta_j)I(i>j)x_i^2 - \tan(\theta_j)\cot(\theta_i)I(i<j)x_j^2 + \cot(\theta_i)\cot(\theta_j)\sum_{d>\max\{i,j\}} x_d^2\\
&=\begin{dcases}
-\tan(\theta_j)\cot(\theta_i)x_j^2 + \cot(\theta_i)\cot(\theta_j)\sum_{d>j} x_d^2
%=-\tan(\theta_j)\cot(\theta_i)x_j^2 + \cot(\theta_i)\cot(\theta_j) x_j^2/\cot^2(\theta_j)
=0,& i<j\\
\tan^2(\theta_i)x_i^2 + \cot^2(\theta_i) \sum_{d>i} x_d^2
= (\tan^2(\theta_i)+1)x_i^2
= \prod_{i=1}^{d-1}\sin^2(\theta_i),& i=j
\end{dcases}\\
&= \prod_{i=1}^{d-1}\sin^2(\theta_i) \delta_{ij}
\end{aligned}
\end{equation}
Therefore, ${\bf G}_{\mathcal S_r}(\vect\theta)=\diag[1,\sin^2(\theta_1),\cdots,\prod_{d=1}^{D-1}\sin^2(\theta_d)]$.
Another way to obtain ${\bf G}_{\mathcal S_r}(\vect\theta)$ is through the coordinate
change:
\begin{equation}
{\bf G}_{\mathcal S_r}(\vect\theta) =  \frac{d\tp{\vect\theta}_{\mathcal
S_c}}{d\vect\theta_{\mathcal S_r}}{\bf G}_{\mathcal S_c}(\vect\theta)
\frac{d\vect\theta_{\mathcal S_c}}{d\tp{\vect\theta}_{\mathcal S_r}}
\end{equation}
Similar to Corollary \eqref{voladjB2S}, we have
\begin{prop}\label{voladjR2S}
The volume adjustment of changing measure in \eqref{domainR2S} is
\begin{equation}
\left|\frac{d\vect\theta_{\mathcal R_{\bf 0}}}{d\vect\theta_{\mathcal S_r}}\right|=|{\bf G}_{\mathcal S_r}(\vect\theta)|^{-\frac12}=\prod_{d=1}^{D-1}\sin^{-(D-d)}(\theta_{d})
\end{equation}
\end{prop}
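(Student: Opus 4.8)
The plan is to mirror the argument already used for Corollary \ref{voladjB2S}, now with the round metric in place of the canonical one. The only genuinely new ingredient is computing $\det{\bf G}_{\mathcal S_r}(\vect\theta)$, which is easy because the round metric is diagonal by \eqref{GSs}. So the proof splits into two parts: (i) an algebraic computation of $|{\bf G}_{\mathcal S_r}(\vect\theta)|$, and (ii) an appeal to the Riesz-representation construction of the round spherical measure to identify the Jacobian factor with $|{\bf G}_{\mathcal S_r}(\vect\theta)|^{-1/2}$.

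For part (i), I would start from ${\bf G}_{\mathcal S_r}(\vect\theta)=\diag[1,\sin^2(\theta_1),\cdots,\prod_{d=1}^{D-1}\sin^2(\theta_d)]$, established in \eqref{GSs}, so that the $d$-th diagonal entry (for $d=1,\dots,D$) equals $\prod_{i=1}^{d-1}\sin^2(\theta_i)$. Then
\begin{equation*}
|{\bf G}_{\mathcal S_r}(\vect\theta)| = \prod_{d=1}^{D}\ \prod_{i=1}^{d-1}\sin^2(\theta_i),
\end{equation*}
and reorganizing the double product by counting, for each fixed $i\in\{1,\dots,D-1\}$, the number of indices $d>i$ with $d\le D$ — namely $D-i$ — gives $|{\bf G}_{\mathcal S_r}(\vect\theta)| = \prod_{i=1}^{D-1}\sin^{2(D-i)}(\theta_i)$. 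Taking the positive square root yields $|{\bf G}_{\mathcal S_r}(\vect\theta)|^{1/2} = \prod_{d=1}^{D-1}\sin^{(D-d)}(\theta_d)$, hence $|{\bf G}_{\mathcal S_r}(\vect\theta)|^{-1/2} = \prod_{d=1}^{D-1}\sin^{-(D-d)}(\theta_d)$; note that $\theta_D$ does not appear, consistent with the fact that the last coordinate is an azimuthal angle.

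For part (ii), I would reuse verbatim the measure-theoretic step from Corollary \ref{voladjB2S}: by the Riesz representation theorem there is a unique positive Borel measure $\mu_r = d\vect\theta_{\mathcal S_r}$ on $\mathcal S^D$ such that, in the spherical coordinate chart $(\mathcal R_{\bf 0}^D, T_{\mathcal R_{\bf 0}\to\mathcal S})$,
\begin{equation*}
\int_{\mathcal S^D} f(\vect\theta)\, d\vect\theta_{\mathcal S_r} = \int_{\mathcal R_{\bf 0}^D} f(\vect\theta)\, \sqrt{|{\bf G}_{\mathcal S_r}(\vect\theta)|}\, d\vect\theta_{\mathcal R_{\bf 0}},
\end{equation*}
where $d\vect\theta_{\mathcal R_{\bf 0}}$ is the Euclidean (Lebesgue) measure on the hyper-rectangle. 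Comparing with \eqref{domainR2S} gives $\left|\frac{d\vect\theta_{\mathcal S_r}}{d\vect\theta_{\mathcal R_{\bf 0}}}\right| = |{\bf G}_{\mathcal S_r}(\vect\theta)|^{1/2}$, equivalently $\left|\frac{d\vect\theta_{\mathcal R_{\bf 0}}}{d\vect\theta_{\mathcal S_r}}\right| = |{\bf G}_{\mathcal S_r}(\vect\theta)|^{-1/2}$, which combined with part (i) is exactly the claimed formula.

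The main obstacle is purely bookkeeping: getting the exponent $D-d$ right in the reindexing of the product, and being careful that the index $d$ in ``$\prod_{i=1}^{d-1}\sin^2(\theta_i)$'' ranges over $1,\dots,D$ (so the exponent multiplicities are $D-1, D-2, \dots, 1, 0$). Everything else is a direct transcription of the canonical-coordinate case, so I would simply cite Corollary \ref{voladjB2S} for the measure-theoretic mechanics rather than repeat it.
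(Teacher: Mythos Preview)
Your proposal is correct and follows precisely the approach the paper intends: it states the proposition with the preface ``Similar to Corollary \ref{voladjB2S}'' and omits the proof, so your plan to (i) read off the determinant of the diagonal round metric \eqref{GSs} and (ii) invoke the same Riesz-representation argument as in Corollary \ref{voladjB2S} is exactly what is required. The exponent bookkeeping you flag is handled correctly.
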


%\begin{comment}

\section{Jacobian of the transformation between $q$-norm domains}\label{Jacobian}
The following proposition gives the weights needed for the transformation from
$\mathcal Q^D$ to $\mathcal B_{\bf 0}^D(1)$.
\begin{prop}
The Jacobian determinant (weight) of $T_{\mathcal B\to\mathcal Q}$ is as follows:
\begin{equation}
|dT_{\mathcal S\to\mathcal Q}| = \left(\frac{2}{q}\right)^D \left(\prod_{i=1}^{D}|\theta_i|\right)^{2/q-1}
\end{equation}
\end{prop}
\begin{proof}
Note
\begin{equation*}
T_{\mathcal B\to\mathcal Q}:\, \vect\theta\mapsto \vect\beta=\mathrm{sgn}(\vect\theta)|\vect\theta|^{2/q}
\end{equation*}
The Jacobian matrix for $T_{\mathcal B\to\mathcal Q}$ is
\begin{equation*}
\frac{d\vect\beta}{d\tp{\vect\theta}} = \frac{2}{q}\mathrm{diag}(|\vect\theta|^{2/q-1})
\end{equation*}
Therefore the Jacobian determinant of $T_{\mathcal B\to \mathcal Q}$ is
\begin{equation*}
|dT_{\mathcal B\to\mathcal Q}|
= \left|\frac{d\vect\beta}{d\tp{\vect\theta}}\right| = \left(\frac{2}{q}\right)^D
\left(\prod_{i=1}^{D}|\theta_i|\right)^{2/q-1}
\end{equation*}
\end{proof}

The following proposition gives the weights needed for the change of domains from $\mathcal R^D$ to $\mathcal B_{\bf 0}^D(1)$.
\begin{prop}
The Jacobian determinant (weight) of $T_{\mathcal B\to\mathcal R}$ is as follows:
\begin{equation}
|dT_{\mathcal B\to\mathcal R}| = \frac{\Vert\vect\theta\Vert_2^D}{\Vert\vect\theta\Vert_{\infty}^D} \prod_{i=1}^D \frac{u_i-l_i}{2}
\end{equation}
\end{prop}
\begin{proof}
First, we note 
\begin{equation*}
T_{\mathcal B\to\mathcal R} = T_{\mathcal C\to\mathcal R}\circ T_{\mathcal
B\to\mathcal C}:\, \vect\theta\mapsto \vect\beta'=\vect\theta
\frac{\Vert\vect\theta\Vert_2}{\Vert\vect\theta\Vert_{\infty}}\mapsto \vect\beta=\frac{{\bf u}-{\bf l}}{2}\vect\beta'+\frac{{\bf u}+{\bf l}}{2}
\end{equation*}
The corresponding Jacobian matrices are
\begin{alignat*}{3}
T_{\mathcal B\to\mathcal C}:\,& \frac{d\vect\beta'}{d\tp{\vect\theta}} && =\, \frac{\Vert\vect\theta\Vert_2}{\Vert\vect\theta\Vert_{\infty}}\left[{\bf I} + \vect\theta
\left(\frac{\tp{\vect\theta}}{\Vert\vect\theta\Vert_2^2} - \frac{\tp{\bf e}_{\arg\max|\vect\theta|}}{\vect\theta_{\arg\max|\vect\theta|}}\right)\right]\\
T_{\mathcal C\to\mathcal R}:\,& \frac{d\vect\beta}{d\tp{(\vect\beta')}} && =\, \mathrm{diag}\left(\frac{{\bf u}-{\bf l}}{2}\right)
\end{alignat*}
where ${\bf e}_{\arg\max|\vect\theta|}$ is a vector with $(\arg\max|\vect\theta|)$-th element 1 and all others 0.
Therefore, 
\begin{equation*}
|dT_{\mathcal B\to\mathcal R}| = |dT_{\mathcal C\to\mathcal R}|\, |dT_{\mathcal
B\to\mathcal C}| =  \left|\frac{d\vect\beta}{d\tp{(\vect\beta')}}\right| \left|\frac{d\vect\beta'}{d\tp{\vect\theta}}\right| = 
\frac{\Vert\vect\theta\Vert_2^D}{\Vert\vect\theta\Vert_{\infty}^D} \prod_{i=1}^D \frac{u_i-l_i}{2}
\end{equation*}
\end{proof}

%\end{comment}

\section{Splitting Hamiltonian (Lagrangian) dynamics on $\mathcal S^D$}\label{splitHS}
Splitting the Hamiltonian dynamics and its usefulness in improving HMC is a well-studied topic of research \citep{leimkuhler04, shahbaba14, byrne13}. Splitting the Lagrangian dynamics (used in our approach), on the other hand, has not been discussed in the literature, to the best of our knowledge. Therefore, we prove the validity of our splitting method by starting with the well-understood method of splitting Hamiltonian \citep{byrne13},
\begin{equation*}
H^*(\vect\theta,{\bf p}) = \frac{1}{2} U(\vect\theta) + \frac{1}{2} \tp{\bf p} {\bf G}_{\mathcal S_c}(\vect\theta)^{-1}{\bf p} + \frac{1}{2}U(\vect\theta)
\end{equation*}
The corresponding systems of differential equations,\\
\begin{subequations}
\begin{minipage}{.5\textwidth}
\begin{align*}
\begin{cases}
\begin{aligned}
&\dot{\vect\theta} && = && {\bf 0}\\
&\dot{\bf p} && = && -\frac{1}{2} \nabla_{\vect\theta} U(\vect\theta)
\end{aligned}
\end{cases}
\end{align*}
\end{minipage}
\begin{minipage}{.5\textwidth}
\begin{align*}
\begin{cases}
\begin{aligned}
&\dot{\vect\theta} && = && {\bf G}_{\mathcal S_c}(\vect\theta)^{-1}{\bf p}\\
&\dot{\bf p} && = && -\frac{1}{2} \tp{\bf p}{\bf G}_{\mathcal S_c}(\vect\theta)^{-1} d{\bf G}_{\mathcal S_c}(\vect\theta) {\bf G}_{\mathcal S_c}(\vect\theta)^{-1}{\bf p}
\end{aligned}
\end{cases}
\end{align*}
\end{minipage}
\end{subequations}
can be written in terms of Lagrangian dynamics in $(\vect\theta, {\bf v})$ as follows:\\
\begin{subequations}
\begin{minipage}{.5\textwidth}
\begin{align*}
\begin{cases}
\begin{aligned}
&\dot{\vect\theta} && = && {\bf 0}\\
&\dot{\bf v} && = && -\frac{1}{2} {\bf G}_{\mathcal S_c}(\vect\theta)^{-1} \nabla_{\vect\theta} U(\vect\theta)
\end{aligned}
\end{cases}
\end{align*}
\end{minipage}
\begin{minipage}{.5\textwidth}
\begin{align*}
\begin{cases}
\begin{aligned}
&\dot{\vect\theta} && = && {\bf v}\\
&\dot{\bf v} && = && -\tp{\bf v}\vect\Gamma_{\mathcal S_c}(\vect\theta){\bf v}
\end{aligned}
\end{cases}
\end{align*}
\end{minipage}
\end{subequations}
We have solved the second dynamics (on the right) in Section \ref{GEODS}. 
To solve the first dynamics, we note that
\begin{alignat*}{4}
\dot \theta_{D+1} = & \frac{d}{dt} \sqrt{1-\Vert \vect\theta\Vert_{2}^{2}} && = -\frac{\tp{\vect\theta}}{\theta_{D+1}}\dot{\vect\theta} && = 0\\
\dot v_{D+1} = & -\frac{d}{dt} \frac{\tp{\vect\theta}{\bf v}}{\theta_{D+1}} && = -\frac{\tp{\dot{\vect\theta}} {\bf v}+ \tp{\vect\theta}\dot{\bf v}}{\theta_{D+1}} +
\frac{\tp{\vect\theta}{\bf v}}{\theta_{D+1}^{2}}\dot\theta_{D+1} && = \frac{1}{2} \frac{\tp{\vect\theta}}{\theta_{D+1}} {\bf G}_{\mathcal S_c}(\vect\theta)^{-1} \nabla_{\vect\theta} U(\vect\theta)
\end{alignat*}
Therefore, we have
\begin{align*}
\tilde{\vect\theta}(t) & = \tilde{\vect\theta}(0)\\
\tilde{\bf v}(t) & = \tilde{\bf v}(0) -\frac{t}{2} \begin{bmatrix} {\bf I}\\ -\frac{\tp{\vect\theta(0)}}{\theta_{D+1}(0)}\end{bmatrix} [{\bf I}-\vect\theta(0)\tp{\vect\theta(0)}]
\nabla_{\vect\theta} U(\vect\theta)
\end{align*}
where $\begin{bmatrix} {\bf I}\\ -\frac{\tp{\vect\theta(0)}}{\theta_{D+1}(0)}\end{bmatrix} [{\bf I}-\vect\theta(0)\tp{\vect\theta(0)}]=\begin{bmatrix} {\bf
I}-\vect\theta(0)\tp{\vect\theta(0)}\\ -\theta_{D+1}(0) \tp{\vect\theta(0)}\end{bmatrix} = \begin{bmatrix} {\bf I}\\ \tp{\bf 0}\end{bmatrix} - \tilde{\vect\theta}(0) \tp{\vect\theta(0)}$.

Finally, we note that $\Vert\tilde{\vect\theta}(t)\Vert_2=1$ if $\Vert\tilde{\vect\theta}(0)\Vert_2=1$ and $\tilde{\bf v}(t)\in T_{\tilde{\vect\theta}(t)} \mathcal S_c^D$ if $\tilde{\bf v}(0)\in T_{\tilde{\vect\theta}(0)} \mathcal S_c^D$.

\section{Error analysis of Spherical HMC}\label{Err}
Following \cite{leimkuhler04}, we now show that the discretization error $e_n=\Vert {\bf z}(t_n) - {\bf z}^{(n)}\Vert =\Vert ({\vect\theta}(t_n),{\bf v}(t_n)) - ({\vect\theta}^{(n)},{\bf v}^{(n)})\Vert$ (i.e. the difference between the true solution and the numerical solution) is $\mathcal O(\eps^3)$ locally and $\mathcal O(\eps^2)$ globally, where $\eps$ is the discretization step size. Here, we assume that ${\bf f}({\vect\theta},{\bf v}):= {\bf v}^{\textsf T}\vect\Gamma({\vect\theta}) {\bf v} + {\bf G}({\vect\theta})^{-1} \nabla_{\vect\theta}U({\vect\theta})$ is smooth; hence, ${\bf f}$ and its derivatives are uniformly bounded as ${\bf z}=({\vect\theta},{\bf v})$ evolves within finite time duration $T$.
We expand the true solution ${\bf z}(t_{n+1})$ at $t_n$:
\begin{equation}
\begin{split}
{\bf z}(t_{n+1}) & = {\bf z}(t_n)+\dot {\bf z}(t_n)\eps +\frac{1}{2}\ddot {\bf z}(t_n)\eps^2 +\mathcal O(\eps^3)\\
& = \begin{bmatrix}{\vect\theta}(t_n)\\{\bf v}(t_n)\end{bmatrix} + \begin{bmatrix}{\bf v}(t_n)\\-{\bf f}({\vect\theta}(t_n),{\bf v}(t_n))\end{bmatrix}\eps + \frac12 \begin{bmatrix}-{\bf f}({\vect\theta}(t_n),{\bf v}(t_n))\\-\dot{\bf f}({\vect\theta}(t_n),{\bf v}(t_n))\end{bmatrix}\eps^2 +\mathcal O(\eps^3)\\
%& = \begin{bmatrix}{\vect\theta}(t_n)\\{\bf v}(t_n)\end{bmatrix} + \begin{bmatrix}{\bf v}(t_n)\\-{\bf f}({\vect\theta}(t_n),{\bf v}(t_n))\end{bmatrix}\eps + \frac12 \begin{bmatrix}-{\bf f}({\vect\theta}(t_n),{\bf v}(t_n))\\-\frac{\pa {\bf f}}{\pa {\vect\theta}}{\bf v}(t_n)+\frac{\pa {\bf f}}{\pa {\bf v}}{\bf f}({\vect\theta}(t_n),{\bf v}(t_n))\end{bmatrix}\eps^2 +\mathcal O(\eps^3)
\end{split}
\end{equation}
We first consider Spherical HMC in the Cartesian coordinate, where ${\bf f}({\vect\theta},{\bf v})=\Vert\tilde{\bf v}\Vert^2\vect\theta + [{\bf I}-\vect\theta\tp{\vect\theta}] \nabla_{\vect\theta} U(\vect\theta)$.
From Equation \eqref{evolv} we have
\begin{equation}\label{vnormv}
\begin{split}
{\bf v}^{(n+1/2)} &= {\bf v}^{(n)} -\frac{\eps}{2} ({\bf I}-\vect\theta^{(n)}\tp{(\vect\theta^{(n)})}) \nabla_{\vect\theta} U(\vect\theta^{(n)})\\
\Vert \tilde{\bf v}^{(n+1/2)}\Vert^2 &= \Vert \tilde{\bf v}^{(n)}\Vert^2 -\eps \tp{({\bf v}^{(n)})} \nabla_{\vect\theta} U(\vect\theta^{(n)}) + \mathcal O(\eps^2)
\end{split}
\end{equation}
Now we expand Equation \eqref{gcir} using Taylor series as follows:
\begin{align*}
\begin{aligned}
\vect\theta^{(n+1)} &= \vect\theta^{(n)}[1-\Vert \tilde{\bf v}^{(n+1/2)}\Vert^2\eps^2/2+\mathcal O(\eps^4)] + {\bf v}^{(n+1/2)}\eps[1 - \Vert \tilde{\bf v}^{(n+1/2)}\Vert^2\eps^2/3! + \mathcal O(\eps^4)]\\
{\bf v}^{(n+3/4)} & = -\vect\theta^{(n)} \Vert \tilde{\bf v}^{(n+1/2)}\Vert^2 \eps[1 - \Vert \tilde{\bf v}^{(n+1/2)}\Vert^2\eps^2/3! + \mathcal O(\eps^4)] + {\bf v}^{(n+1/2)} [1-\Vert \tilde{\bf v}^{(n+1/2)}\Vert^2\eps^2/2+\mathcal O(\eps^4)]
\end{aligned}
\end{align*}
Substituting \eqref{vnormv} in the above equations yields 
\begin{align*}
\begin{aligned}
\vect\theta^{(n+1)} &= \vect\theta^{(n)} + {\bf v}^{(n+1/2)}\eps - \vect\theta^{(n)}\Vert \tilde{\bf v}^{(n+1/2)}\Vert^2\eps^2/2 +\mathcal O(\eps^3)\\
& = \vect\theta^{(n)} +  {\bf v}^{(n)}\eps - \frac12 {\bf f}({\vect\theta}^{(n)},{\bf v}^{(n)}) \eps^2 + \mathcal O(\eps^3)\\
{\bf v}^{(n+3/4)} & =  {\bf v}^{(n+1/2)} -\vect\theta^{(n)} \Vert \tilde{\bf v}^{(n+1/2)}\Vert^2 \eps - {\bf v}^{(n+1/2)} \Vert \tilde{\bf v}^{(n+1/2)}\Vert^2\eps^2/2 +\mathcal O(\eps^3)\\
 &= {\bf v}^{(n)} - [({\bf I}-\vect\theta^{(n)}\tp{(\vect\theta^{(n)})}) \nabla_{\vect\theta} U(\vect\theta^{(n)})/2 + \vect\theta^{(n)} \Vert \tilde{\bf v}^{(n)}\Vert^2]\eps\\
 &\phantom{= {\bf v}^{(n)}\;} + [\vect\theta^{(n)} \tp{({\bf v}^{(n)})} \nabla_{\vect\theta} U(\vect\theta^{(n)}) - {\bf v}^{(n)} \Vert \tilde{\bf v}^{(n)}\Vert^2/2]\eps^2 + \mathcal O(\eps^3)
\end{aligned}
\end{align*}
With the above results, we have
\begin{equation*}
\begin{split}
{\bf v}^{(n+1)} =& {\bf v}^{(n+3/4)} -\frac{\eps}{2} ({\bf I}-\vect\theta^{(n+1)}\tp{(\vect\theta^{(n+1)})}) \nabla_{\vect\theta} U(\vect\theta^{(n+1)})\\
%=& {\bf v}^{(n)} - [({\bf I}-\vect\theta^{(n)}\tp{(\vect\theta^{(n)})}) \nabla_{\vect\theta} U(\vect\theta^{(n)})/2 + \vect\theta^{(n)} \Vert \tilde{\bf v}^{(n)}\Vert^2]\eps + [\vect\theta^{(n)} \tp{({\bf v}^{(n)})} \nabla_{\vect\theta} U(\vect\theta^{(n)}) - {\bf v}^{(n)} \Vert \tilde{\bf v}^{(n)}\Vert^2/2]\eps^2\\
%& -\frac12\{[{\bf I}-\vect\theta^{(n)}\tp{(\vect\theta^{(n)})} - (\vect\theta^{(n)}\tp{({\bf v}^{(n)})}+{\bf v}^{(n)}\tp{(\vect\theta^{(n)})})\eps +\mathcal O(\eps^2) ] [\nabla_{\vect\theta} U(\vect\theta^{(n)}) +\nabla_{\vect\theta}^2 U(\vect\theta^{(n)}) (\vect\theta^{(n+1)}-\vect\theta^{(n)}) +\mathcal O(\eps^2) ] \} \eps\\
=& {\bf v}^{(n)} - {\bf f}({\vect\theta}^{(n)},{\bf v}^{(n)})\eps + [\vect\theta^{(n)} \tp{({\bf v}^{(n)})} \nabla_{\vect\theta} U(\vect\theta^{(n)}) - {\bf v}^{(n)} \Vert \tilde{\bf v}^{(n)}\Vert^2/2]\eps^2\\
& - \frac12[ ({\bf I}-\vect\theta^{(n)}\tp{(\vect\theta^{(n)})}) \nabla_{\vect\theta}^2 U(\vect\theta^{(n)}) {\bf v}^{(n)} - (\vect\theta^{(n)}\tp{({\bf v}^{(n)})}+{\bf v}^{(n)}\tp{(\vect\theta^{(n)})}) \nabla_{\vect\theta} U(\vect\theta^{(n)}) ]\eps^2 + \mathcal O(\eps^3)\\
=& {\bf v}^{(n)} - {\bf f}({\vect\theta}^{(n)},{\bf v}^{(n)})\eps - \frac12 \dot{\bf f}({\vect\theta}^{(n)},{\bf v}^{(n)})\eps^2 + \mathcal O(\eps^3)
\end{split}
\end{equation*}
where for the last equality we need to show $\tp{({\bf v}^{(n)})} \nabla_{\vect\theta} U(\vect\theta^{(n)})= -2\frac{d}{dt}\Vert\tilde{\bf v}^{(n)}\Vert^2$.
This can be proved as follows:
\begin{equation*}
\begin{split}
\frac{d}{dt}\Vert\tilde{\bf v}\Vert^2 &=  \frac{d}{dt} [ \Vert\tilde{\bf v}\Vert^2 + v_{D+1}^2] = 2[-\tp{\bf v}{\bf f} + v_{D+1}\dot v_{D+1}]\\
&= 2\left[-\tp{\bf v}{\bf f} +\left(-\frac{\tp{\dot{\vect\theta}} {\bf v}+ \tp{\vect\theta}\dot{\bf v}}{\theta_{D+1}} + \frac{\tp{\vect\theta}{\bf v}}{\theta_{D+1}^{2}}\dot\theta_{D+1}\right)v_{D+1}\right]\\%=2\left[-\tp{\bf v}{\bf f} +\left(-\frac{\Vert{\bf v}\Vert^2- \tp{\vect\theta}{\bf f}}{\theta_{D+1}} - \frac{v_{D+1}^2}{\theta_{D+1}}\right)v_{D+1}\right]\\
&= -2\left[\tp{\left({\bf v} -\frac{v_{D+1}}{\theta_{D+1}}\vect\theta\right)}{\bf f}  + \frac{v_{D+1}}{\theta_{D+1}} \Vert\tilde{\bf v}\Vert^2 \right]\\%= -2\left[\tp{\left({\bf v} -\frac{v_{D+1}}{\theta_{D+1}}\vect\theta\right)}[\vect\theta \Vert\tilde{\bf v}\Vert^2 + [{\bf I}-\vect\theta\tp{\vect\theta}] \nabla_{\vect\theta} U(\vect\theta)]  + \frac{v_{D+1}}{\theta_{D+1}} \Vert\tilde{\bf v}\Vert^2 \right]\\
&= -2\left[\left(\tp{\bf v}\vect\theta -\frac{v_{D+1}}{\theta_{D+1}}(\Vert\vect\theta\Vert^2-1)\right)\Vert\tilde{\bf v}\Vert^2 + \tp{\left({\bf v} -\frac{v_{D+1}}{\theta_{D+1}}\vect\theta\right)}[{\bf I}-\vect\theta\tp{\vect\theta}] \nabla_{\vect\theta} U(\vect\theta)] \right]\\
&= -2\left[\tp{\bf v}\nabla_{\vect\theta} U(\vect\theta) + \left(-\tp{\bf v}\vect\theta -\frac{v_{D+1}}{\theta_{D+1}}(1-\Vert\vect\theta\Vert^2)\right)\tp{\vect\theta}\nabla_{\vect\theta} U(\vect\theta)\right]
=-2 \tp{\bf v}\nabla_{\vect\theta} U(\vect\theta)
\end{split}
\end{equation*}
Therefore we have
\begin{equation}
{\bf z}^{(n+1)} := \begin{bmatrix}{\vect\theta}^{(n+1)}\\{\bf v}^{(n+1)}\end{bmatrix} = \begin{bmatrix}{\vect\theta}^{(n)}\\{\bf v}^{(n)}\end{bmatrix} +\begin{bmatrix}{\bf v}^{(n)}\\- {\bf f}({\vect\theta}^{(n)},{\bf v}^{(n)})\end{bmatrix}\eps + \frac12 \begin{bmatrix}-{\bf f}({\vect\theta}^{(n)},{\bf v}^{(n)})\\-\dot{\bf f}({\vect\theta}^{(n)},{\bf v}^{(n)})\end{bmatrix}\eps^2 + \mathcal O(\eps^3)
\end{equation}
The local error is
\begin{equation}
\begin{split}
e_{n+1} & =\Vert {\bf z}(t_{n+1}) - {\bf z}^{(n+1)}\Vert \\
%& = \left\Vert \begin{bmatrix}{\vect\theta}(t_n)-{\vect\theta}^{(n)}\\{\bf v}(t_n)-{\bf v}^{(n)}\end{bmatrix} + \begin{bmatrix}{\bf v}(t_n)-{\bf v}^{(n)}\\-[{\bf f}({\vect\theta}(t_n),{\bf v}(t_n))- {\bf f}({\vect\theta}^{(n)},{\bf v}^{(n)})]\end{bmatrix}\eps + \frac12 \begin{bmatrix}-[{\bf f}({\vect\theta}(t_n),{\bf v}(t_n))- {\bf f}({\vect\theta}^{(n)},{\bf v}^{(n)})]\\-[\dot{\bf f}({\vect\theta}(t_n),{\bf v}(t_n))- \dot{\bf f}({\vect\theta}^{(n)},{\bf v}^{(n)})]\end{bmatrix}\eps^2 + \mathcal O(\eps^3) \right\Vert\\
& = \left\Vert \begin{bmatrix}{\vect\theta}(t_n)-{\vect\theta}^{(n)}\\{\bf v}(t_n)-{\bf v}^{(n)}\end{bmatrix} + \begin{bmatrix}{\bf v}(t_n)-{\bf v}^{(n)}\\-[{\bf f}(t_n)- {\bf f}^{(n)}]\end{bmatrix}\eps + \frac12 \begin{bmatrix}-[{\bf f}(t_n)- {\bf f}^{(n)}]\\-[\dot{\bf f}(t_n)- \dot{\bf f}^{(n)}]\end{bmatrix}\eps^2 + \mathcal O(\eps^3) \right\Vert\\
& \leq (1+M_1\eps+M_2\eps^2) e_n + \mathcal O(\eps^3)
\end{split}
\end{equation}
where $M_k = c_k\sup_{t\in [0,T]}\Vert \nabla^k {\bf f}({\vect\theta}(t),{\bf v}(t))\Vert,\; k=1,2$ for some constants $c_k>0$. Accumulating the local errors by iterating the above inequality for $L=T/\eps$ steps provides the following global error:
\begin{equation}
\begin{split}
e_{L+1} & \leq (1+M_1\eps+M_2\eps^2) e_L + \mathcal O(\eps^3) \leq (1+M_1\eps+M_2\eps^2)^2 e_{L-1} + 3\mathcal O(\eps^3)\leq \cdots\\
& \leq (1+M_1\eps+M_2\eps^2)^L e_1 + L\mathcal O(\eps^3) \leq (e^{M_1T}+T)\eps^2 \to 0,\quad as\; \eps \to 0
\end{split}
\end{equation}

For Spherical HMC in the spherical coordinate, we conjecture that the integrator of Algorithm \ref{Alg:sSphHMC} still has order 3 local error and order 2 global error. One can follow the same argument as above to verify this.

\section{Bounce in diamond: Wall HMC for 1-norm constraint}\label{walldiamond}
\begin{figure}[t]
\vspace{-20pt}
\begin{center}
\includegraphics[width=5in, height=3.5in]{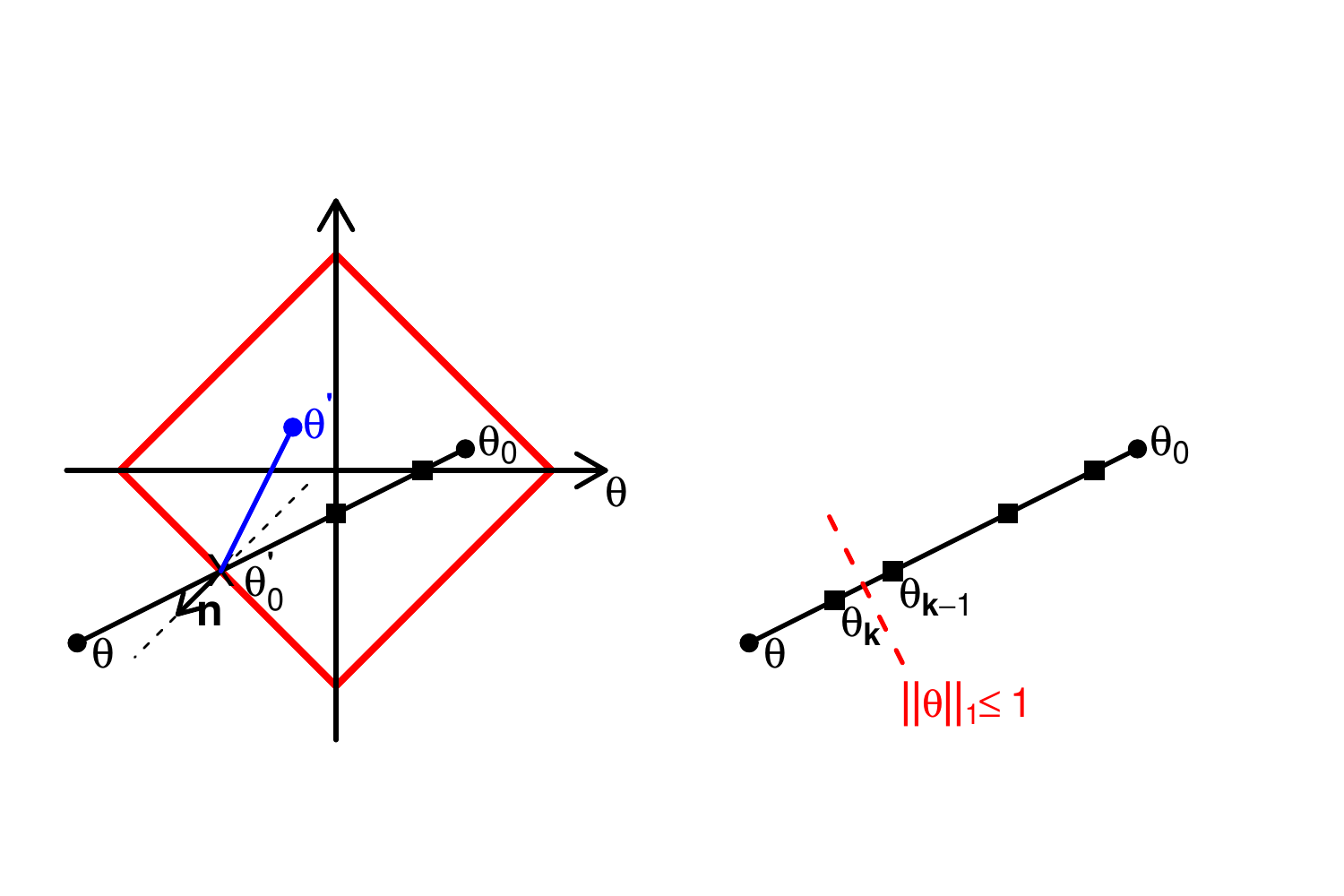}
\vspace{-10pt}
\caption{Wall HMC bounces in the 1-norm constraint domain. Left: given the current
state $\vect\theta_0$, Wall HMC proposes $\vect\theta$, but bounces of the boundary and
reaches $\vect\theta'$ instead. Right: determining the hitting time by
monitoring the first intersection point with coordinate that violates the
constraint.}
\vspace{-20pt}
\label{fig:bounceindiamond}
\end{center}
\end{figure}

\cite{neal11} discusses the \emph{Wall} HMC method for $\infty$-norm constraint only. We can however derive a similar approach for 1-norm constraint. As shown in the left panel of Figure \ref{fig:bounceindiamond}, given the current
state $\vect\theta_0$, HMC makes a proposal $\vect\theta$. It will hit the boundary
to move from $\vect\theta_0$ towards $\vect\theta$. To determine the hit point
`X', we are required to solve for $t\in(0,1)$ such that
\begin{equation}
\Vert\vect\theta_0+(\vect\theta-\vect\theta_0)t\Vert_1=\sum_{d=1}^D|\theta_0^d+(\theta^d-\theta_0^d)t|=1
\end{equation}
One can find the hitting time using the bisection method. However, a more efficient method is to find the orthant in which the sampler hits the boundary, i.e., find the normal direction ${\bf n}$ with elements being $\pm1$. Then, we can find $t$,
\begin{equation}
\Vert\vect\theta_0+(\vect\theta-\vect\theta_0)t\Vert_1 = \tp{\bf n}[\vect\theta_0+(\vect\theta-\vect\theta_0)t]=1 \implies t^* = \frac{1-\tp{\bf n}\vect\theta_0}{\tp{\bf n}(\vect\theta-\vect\theta_0)}
\end{equation}
Therefore the hit point is
$\vect\theta_0'=\vect\theta_0+(\vect\theta-\vect\theta_0)t^*$ and consequently
the reflection point is
\begin{equation}
\vect\theta'=\vect\theta-2{\bf n}^*\langle {\bf n}^*,\vect\theta-\vect\theta_0'\rangle = \vect\theta-2{\bf n}(\tp{\bf n}\vect\theta-1)/D
\end{equation}
where ${\bf n}^*:={\bf n}/\Vert{\bf n}\Vert_2$ and $\tp{\bf n}\vect\theta_0'=1$
because $\vect\theta_0'$ is on the boundary with the normal direction ${\bf n}^*$.

It is in general difficult to directly determine the intersection of $\vect\theta-\vect\theta_0$ with
boundary. Instead, we can find its intersections with coordinate planes
$\{\vect\pi_d\}_{d=1}^D$, where $\vect\pi_d:=\{\vect\theta\in\mathbb R^D|\theta^d=0\}$.
The intersection times are defined as ${\bf T}=\{\theta_0^d/(\theta_0^d-\theta^d)|\theta_0^d\neq \theta^d\}$.
We keep those between 0 and 1 and sort them in ascending order (Figure \ref{fig:bounceindiamond}, right panel). Then, we find the intersection points $\{\vect\theta_k:=\vect\theta_0+(\vect\theta-\vect\theta_0)T_k\}$ that violate the constraint $\Vert\vect\theta\Vert\leq 1$. Denote the first intersection point outside the constrained domain as $\vect\theta_k$. The signs of $\vect\theta_k$ and $\vect\theta_{k-1}$ determine the
orthant of the hitting point $\vect\theta_0'$.

Note, for each $d\in\{1,\cdots D\}$,
$(\sign(\vect\theta_k^d),\sign(\vect\theta_{k-1}^d))$ cannot be $(+,-)$ or
$(-,+)$, otherwise there exists an intersection point $\vect\theta^*:=\vect\theta_0+(\vect\theta-\vect\theta_0)T^*$ with some coordinate plane $\vect\pi_{d^*}$ between $\vect\theta_k$ and $\vect\theta_{k-1}$. Then $T_{k-1}<T^*<T_k$ contradicts the order of ${\bf T}$. \footnote{The same argument applies
when $T_k=1$, i.e. $\vect\theta$ is the first point outside the domain among
$\{\vect\theta_k\}$.}
Therefore any point (including $\vect\theta_0'$) between $\vect\theta_k$ and
$\vect\theta_{k-1}$ must have the same sign as
$\sign(\sign(\vect\theta_k)+\sign(\vect\theta_{k-1}))$; that is
\begin{equation}
{\bf n} = \sign(\sign(\vect\theta_k)+\sign(\vect\theta_{k-1}))
\end{equation}

After moving from $\vect\theta$ to $\vect\theta'$, we examine whether $\vect\theta'$ satisfies the constraint. If it does not satisfy the constraint, we repeat above procedure with
$\vect\theta_0\leftarrow\vect\theta_0'$ and $\vect\theta\leftarrow\vect\theta'$
until the final state is inside the constrained domain. Then we adjust the
velocity direction by
\begin{equation}
{\bf v} \leftarrow (\vect\theta'-\vect\theta_0')\frac{\Vert{\bf v}\Vert}{\Vert\vect\theta'-\vect\theta_0'\Vert}
\end{equation}
Algorithm \ref{Alg:diamond} summarizes the above steps.

%\begin{tiny}
\begin{algorithm}[t]
\caption{Wall HMC for 1-norm constraint (Wall HMC)}
\label{Alg:diamond}
\begin{algorithmic}
\STATE Initialize $\vect\theta^{(1)}$ at the current state $\vect\theta$ after transformation
\STATE Sample a new velocity value ${\bf v}^{(1)}\sim \mathcal N({\bf 0},{\bf I}_D)$
\STATE Calculate $H(\vect\theta^{(1)},{\bf v}^{(1)})=U(\vect\theta^{(1)}) + K({\bf v}^{(1)})$ 
\FOR{$\ell=1$ to $L$}
\STATE ${\bf v}^{(\ell+\frac{1}{2})} = {\bf v}^{(\ell)}-\frac{\eps}{2} \nabla_{\vect\theta} U(\vect\theta^{(\ell)})$
\STATE ${\vect\theta}^{(\ell+1)} = {\vect\theta}^{(\ell)} + \eps{\bf v}^{(\ell+\frac{1}{2})}$
\STATE set hit $\leftarrow$ false
\WHILE{$\Vert\vect\theta^{(\ell)}\Vert>1$}
\STATE find coordinate plane intersecting times:
${\bf T}=\{T_d:=\theta^{(\ell)}_d/(\theta^{(\ell)}_d-\theta^{(\ell+1)}_d)|\theta^{(\ell)}_d\neq\theta^{(\ell+1)}_d\}$
\STATE sort those between 0 and 1 in ascending order: ${\bf T}=\{0\leq T_k\uparrow\leq 1\}$
\STATE find the first point in $\{\vect\theta_k:=\vect\theta^{(\ell)}+(\vect\theta^{(\ell+1)}-\vect\theta^{(\ell)})T_k\}$ that violates $\Vert\vect\theta\Vert\leq 1$ and denote it as $\vect\theta_k$
\STATE set normal direction as ${\bf n} = \sign(\sign(\vect\theta_k)+\sign(\vect\theta_{k-1}))$
\STATE find the wall hitting time $t^* = (1-\tp{\bf n}\vect\theta^{(\ell)})/(\tp{\bf n}(\vect\theta^{(\ell+1)}-\vect\theta^{(\ell)}))$
\STATE $\vect\theta^{(\ell)}\leftarrow\vect\theta^{(\ell)}+(\vect\theta^{(\ell+1)}-\vect\theta^{(\ell)})t^*$
and $\vect\theta^{(\ell+1)}\leftarrow\vect\theta^{(\ell+1)}-2{\bf n}\langle {\bf n},\vect\theta^{(\ell+1)}-\vect\theta^{(\ell)}\rangle/\Vert{\bf n}\Vert_2^2$
\STATE set hit $\leftarrow$ true
\ENDWHILE
\IF{hit}
\STATE ${\bf v}^{(\ell+\frac{1}{2})} \leftarrow (\vect\theta^{(\ell+1)}-\vect\theta^{(\ell)})\Vert{\bf v}^{(\ell+\frac{1}{2})}\Vert/\Vert\vect\theta^{(\ell+1)}-\vect\theta^{(\ell)}\Vert$
\ENDIF
\STATE ${\bf v}^{(\ell+1)} = {\bf v}^{(\ell+\frac{1}{2})}-\frac{\eps}{2} \nabla_{\vect\theta} U(\vect\theta^{(\ell+1)})$
\ENDFOR
\STATE Calculate $H(\vect\theta^{(L+1)},{\bf v}^{(L+1)})=U(\vect\theta^{(L+1)}) + K({\bf v}^{(L+1)})$
\STATE Calculate the acceptance probability $\alpha =\min\{1, \exp[-H(\vect\theta^{(L+1)},{\bf v}^{(L+1)})+H(\vect\theta^{(1)},{\bf v}^{(1)})] \}$
\STATE Accept or reject the proposal according to $\alpha$ for the next state ${\vect\theta}'$
\end{algorithmic}
\end{algorithm}
%\end{tiny}

\clearpage

%%%% Reference %%%%%
%\vskip 0.2in
\bibliography{references}

\end{document}